\newtheorem{theo}{Theorem}
\newtheorem{lem}{Lemma}
\newtheorem{rem}{Remark}
\newtheorem{mdef}{Definition}
\newtheorem{alg}{Algorithm}
\def\donchitre#1#2{\vskip 6.5cm\noindent
\parbox[t]{1in}{\special{eps:#1.eps x=6.5cm y=5.5cm}}
\hbox to 7cm{}\parbox[t]{0.0cm}{\special{eps:#2.eps x=6.5cm y=5.5cm}}}
\def\ZZ{{\mathbb Z}}
\def\RR{{\mathbb R}}
\def\ZZ{{\mathbb Z}}
\def\RR{{\mathbb R}}
\def\fs{\mathfrak{S}}
\def\fr{\mathfrak{R}}
\def\ga{\alpha}
\def\go{\omega}
\def\gs{\sigma}
\def\gl{\lambda}
\def\wt{\widetilde}
\def\gt{\triangle}
\def\gp{{\prime}}
\def\b0{{\bf 0}}
\def\1{{\bf 1}}
\def\cG{\mathcal G}
\def\wh{\widehat}
\def\wt{\widetilde}
\def\widebreve{\mathpalette\wide@breve}
\def\wide@breve#1#2{\sbox\z@{$#1#2$}%
     \mathop{\vbox{\m@th\ialign{##\crcr
\kern0.08em\brevefill#1{0.8\wd\z@}\crcr\noalign{\nointerlineskip}%
                    $\hss#1#2\hss$\crcr}}}\limits}
\def\brevefill#1#2{$\m@th\sbox\tw@{$#1($}%
  \hss\resizebox{#2}{\wd\tw@}{\rotatebox[origin=c]{90}{\upshape(}}\hss$}
\def\wb{\widebreve}
\title{A Chirplet Transform-based Mode Retrieval Method for Multicomponent Signals with Crossover Instantaneous Frequencies}
\author{Lin Li\thanks{School of Electronic Engineering, Xidian University, Xi\rq{}an, China.
		e-mail: lilin@xidian.edu.cn.},
Ningning Han\thanks{Shenzhen Key Laboratory of Advanced Machine Learning and Applications, College of Mathematics and Statistics, Shenzhen University, Shenzhen, 518060, China.
e-mail: ningninghan@szu.edu.cn.},
Qingtang  Jiang\thanks{Department of Mathematics and Statistics, University of Missouri-St. Louis, St. Louis,  MO, USA.
	e-mail:jiangq@umsl.edu.},
Charles K.Chui\thanks{Department of Statistics, Stanford
University, Stanford, CA 94305, USA, and 
Department of Mathematics, Hong Kong Baptist
University, Hong Kong. e-mail: ckchui@stanford.edu.}
}
\date{}
\begin{document}
\maketitle
(Note: This is the revised paper of  ``A  separation method for multicomponent non-stationary signals with crossover instantaneous frequencies,\rq\rq{} which was submitted to IEEE Trans IT on Feb 14, 2020,  ms \# IT-20-0113. Compared to the original paper, some references were added. In addition,  mode retrieval comparison results with Fast-IF  and ACMD were provided.)

 \begin{abstract}
In nature and engineering world, the acquired signals are usually affected by multiple complicated factors and appear as multicomponent nonstationary modes. In such and many other situations, it is necessary to separate these signals into a finite number of monocomponents to represent the intrinsic modes and underlying dynamics implicated in the source signals. In this paper, we consider the mode retrieval of a multicomponent signal which has crossing instantaneous frequencies (IFs), meaning that some of the components of the signal overlap in the time-frequency domain. We use the 
chirplet transform (CT) to represent a multicomponent signal in the three-dimensional space of time, frequency and chirp rate and introduce a CT-based signal separation scheme (CT3S) to retrieve modes. In addition, we analyze the error bounds for IF estimation and component recovery with this scheme.  We also propose a matched-filter along certain specific time-frequency lines with respect to the chirp rate to make nonstationary signals be further separated and more concentrated in the three-dimensional space of  CT. Furthermore, based on the approximation of source signals with linear chirps at any local time, we propose an innovative signal reconstruction algorithm, called the group 
 filter-matched CT3S (GFCT3S), which also takes a group of components into consideration simultaneously. GFCT3S is suitable for signals with crossing IFs. It also decreases component recovery errors when the IFs curves of different components are not crossover, but fast-varying and close to one and other. Numerical experiments on synthetic and real signals show our method is more accurate and consistent in signal separation than the empirical mode decomposition, synchrosqueezing transform, and other approaches.
 \end{abstract}

\noindent {\bf Keywords}: chirplet transform, filter-matched chirplet transform,
signal separation,  mode retrieval, multicomponent signals with crossing instantaneous frequencies.

\section{Introduction}

This paper studies blind source nonstationary signal separation in which a nonstationary signal is represented as a superposition of Fourier-like oscillatory modes:
\begin{equation}
\label{AHM}
x(t)=A_0(t)+\sum_{k=1}^K x_k(t),  \quad x_k(t)=A_k(t) \cos \big(2\pi \phi_k(t)\big),
\end{equation}
with $A_k(t), \phi_k'(t)>0$ and $A_k(t)$ varying slowly.
Such a representation, called  an {adaptive harmonic model (AHM) representation},
 is important for extracting information, such as the underlying dynamics, hidden in the nonstationary signal, with  the trend $A_0(t)$, instantaneous amplitudes (IAs) $A_k(t)$  and the instantaneous frequencies (IFs) $\phi'_k(t)$ being used to describe the underlying dynamics.

In nature, many real-world phenomena that can be formulated as signals (or in terms of time series) are often affected by a number of factors and appear as  time-overlapping multicomponent signals in the form of  \eqref{AHM}.
A natural approach to understand and process such phenomena is to decompose, or even better, to separate the multicomponent signals into their basic building blocks $x_k(t)$ (called components, modes or sub-signals) for extracting the necessary features.
Also, for radar, communications, and other applications, signals often appear in multicomponent modes.
Since these signals are mainly nonstationary, meaning the amplitudes  and/or phases of some or all components change with the time, there have been few effective rigorous methods available for decomposition of them.

The empirical mode decomposition (EMD) algorithm along with the Hilbert spectrum analysis (HSA)
is a popular method to decompose and analyze nonstationary signals \cite{Huang98}.
EMD works like a filter bank \cite{Flandrin04,Rehman_Mandic11} to decompose
a nonstationary signal  into a superposition of intrinsic mode functions (IMFs) and a trend, and then the IF of each IMF is calculated by HSA. There are many articles studying the properties of EMD and variants of EMD have been proposed to improve the performance, see e.g. \cite{Flandrin04}-\cite{van20}. 
In particular, the separation ability of EMD was discussed in \cite{Rilling08}, which shows that EMD cannot decompose two components when their frequencies are close to each other. The ensemble EMD (EEMD) was proposed to suppress noise interferences \cite{Wu_Huang09}. The original EMD was extended to multivariate signals in \cite{Xu06, Rehman_Mandic10, Rehman_Mandic11}. Alternative sifting algorithms and formulations 
for EMD were introduced in \cite{HM_Zhou09, Oberlin12a, HM_Zhou16}. Similar to the EMD filter bank, the wavelet filter bank for signal decomposition was proposed in \cite{Gilles13}, called empirical wavelet transform. An EMD-like sifting process was recently proposed in \cite{Li_Jiang19} to extract signal components in the linear time-frequency (TF) plane one by one.
EMD is an efficient data-driven approach and no basis of functions is required.
A weakness of EMD or EEMD is that it can easily lead to mode mixtures or artifacts, namely undesirable or false components \cite{Li_Ji09}. In addition, there is no mathematical theorem to guarantee the recovery of the components.

Time-frequency analysis is another method to separate multicomponent signals, which is widely used in engineering fields such as communication, radar and sonar as a powerful tool for analyzing 
signals \cite{Leon_Cohen}. Time-frequency signal analysis and synthesis using the eigenvalue decomposition method have been studied \cite{HMB04,Stank06,Stank18}.
Recently the synchrosqueezing transform (SST)
was developed 
 in \cite{Daub_Lu_Wu11} to provide mathematical theorems to guarantee the component recovery of  nonstationary multicomponent signals. The SST, which was first  introduced
in 1996, intended for speech signal separation \cite{Daub_Maes96},
is based on the continuous wavelet transform (CWT).
The short-time Fourier transform (STFT)-based SST
was also proposed in \cite{Thakur_Wu11,Wu_thesis} and further studied in \cite{MOM14}.
SST provides an alternative to the EMD method and its variants, and it overcomes some limitations of the EMD and EEMD schemes \cite{Meig_Oberlin_M12,Flandrin_Wu_etal_review13}. SST has been used in machine fault diagnosis \cite{Li_Liang_fault12,WCSGTZ18},
crystal image analysis \cite{Yang_Crystal_15,Yang_Crystal_18},
welding crack acoustic emission signal analysis \cite{HLY18},
medical data analysis  
\cite{Wu_breathing14, Wu_sleep15, Wu_heartbeat17}, etc.

SST works well for sinusoidal signals, but not for broadband time-varying frequency signals.
To provide sharp representations for signals with significantly frequency changes, two methods were proposed.
One is 
the matching demodulation transform-based SST
(or called the instantaneous frequency-embedded SST) proposed in
 \cite{Wang_etal14, Jiang_Suter17} which changes broadband signals to narrow-band signals (see also \cite{Li_Liang12}). The instantaneous frequency-embedded SST proposed in \cite{Jiang_Suter17} preserves the IFs of the original signals.
The other method is the 2nd-order SST introduced in \cite{MOM15} and \cite{OM17}.
The higher-order FSST  was presented in \cite{Pham17} and \cite{Lin_Cubic19}, which aims to handle signals containing more general types. Very recently
an adaptive SST 
with a time-varying parameter were introduced in \cite{LCHJJ18, LCJ18}.
They obtained the well-separated condition for multicomponent signals  using the linear frequency modulation to approximate a nonstationary signal at any local time.
In addition, theoretical analysis of adaptive SST 
was obtained in \cite{LJL20,CJLS21}.
SST with a time-varying window width has also been studied in \cite{Wu17, Saito17}.


Recently, a direct time-frequency approach, called signal separation operator (SSO), was introduced in \cite{Chui_Mhaskar15} for multicomponent signal separation. In SSO approach, the components are reconstructed simply  by substituting the time-frequency ridge to SSO. SSO based on CWT was proposed in \cite{Chui_Han20,CJLL20_adpCWT}. 
More recently,  \cite{LCJ20} proposed SSO of ``linear chirp-based model\rq{}\rq{} with 
 the phase functions of the signal components in the source signal model \eqref{AHM} approximated by quadratic polynomials at any local time. This model provides a more accurate component recovery formulae, 
with analysis established in \cite{CJLL20_adpSTFT,CJLL20_adpCWT}. 

 While SST and SSO are mathematically rigorous on IF estimation and component recovery,
both of them require that the components $x_k(t)$ are well-separated in the time-frequency plane, namely IFs of $x_k(t)$ satisfy
\begin{equation}
 \label{def_sep_cond}
 \phi'_k(t)-\phi'_{k-1}(t)\ge 2\gt, \; 2\le k\le K,
\end{equation}
for some $\gt>0$.  In many applications, multicomponent signals are overlapping in the time-frequency plane, that is the IFs of its components are crossover. For example, in radar signal processing, the micro-Doppler
\begin{figure}[H]
\centering
\begin{tabular}{c@{\hskip -0.2cm}c}
\resizebox {2in}{1.5in} {\includegraphics{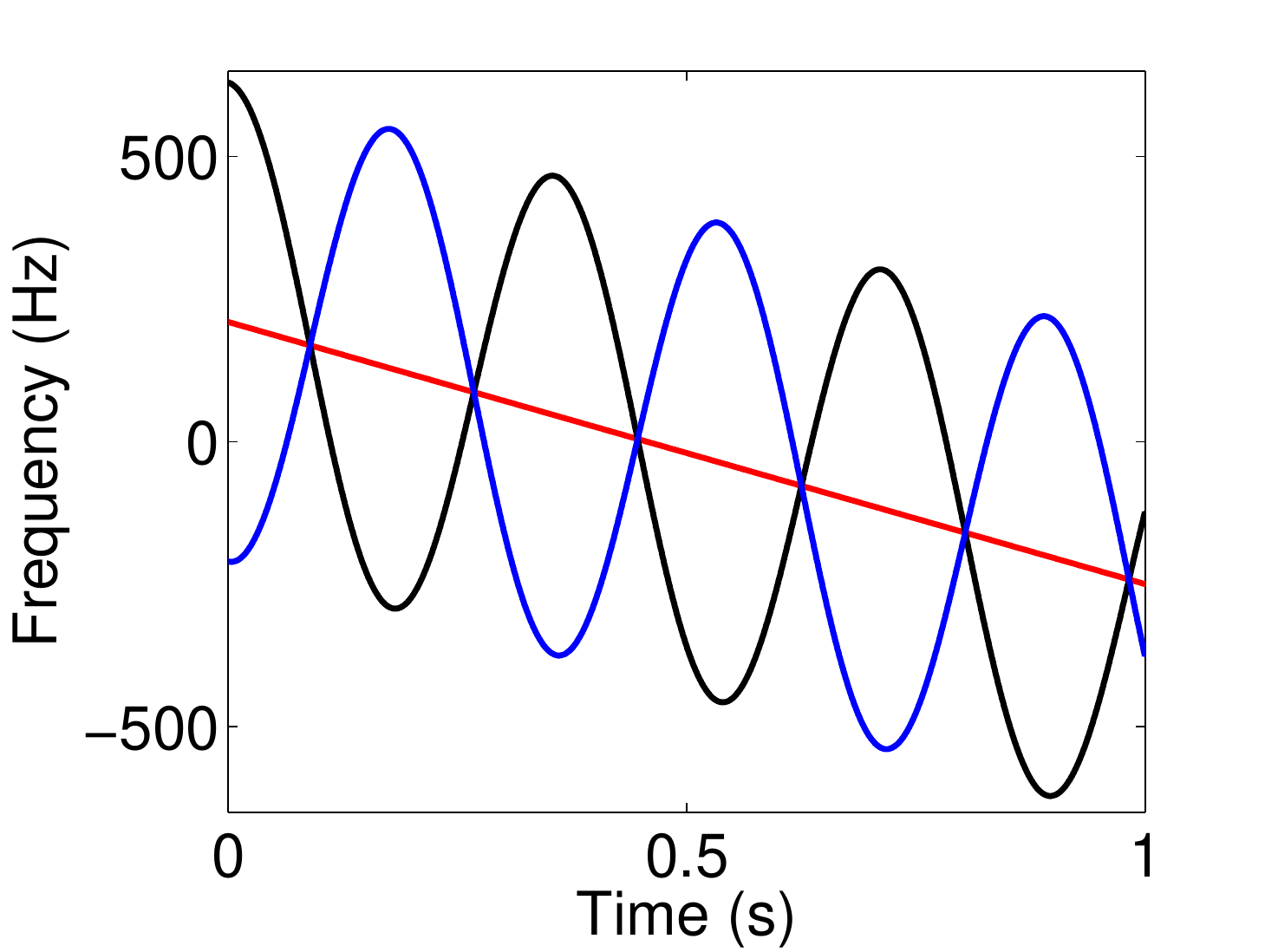}}
\quad & \quad
\resizebox {2in}{1.5in} {\includegraphics{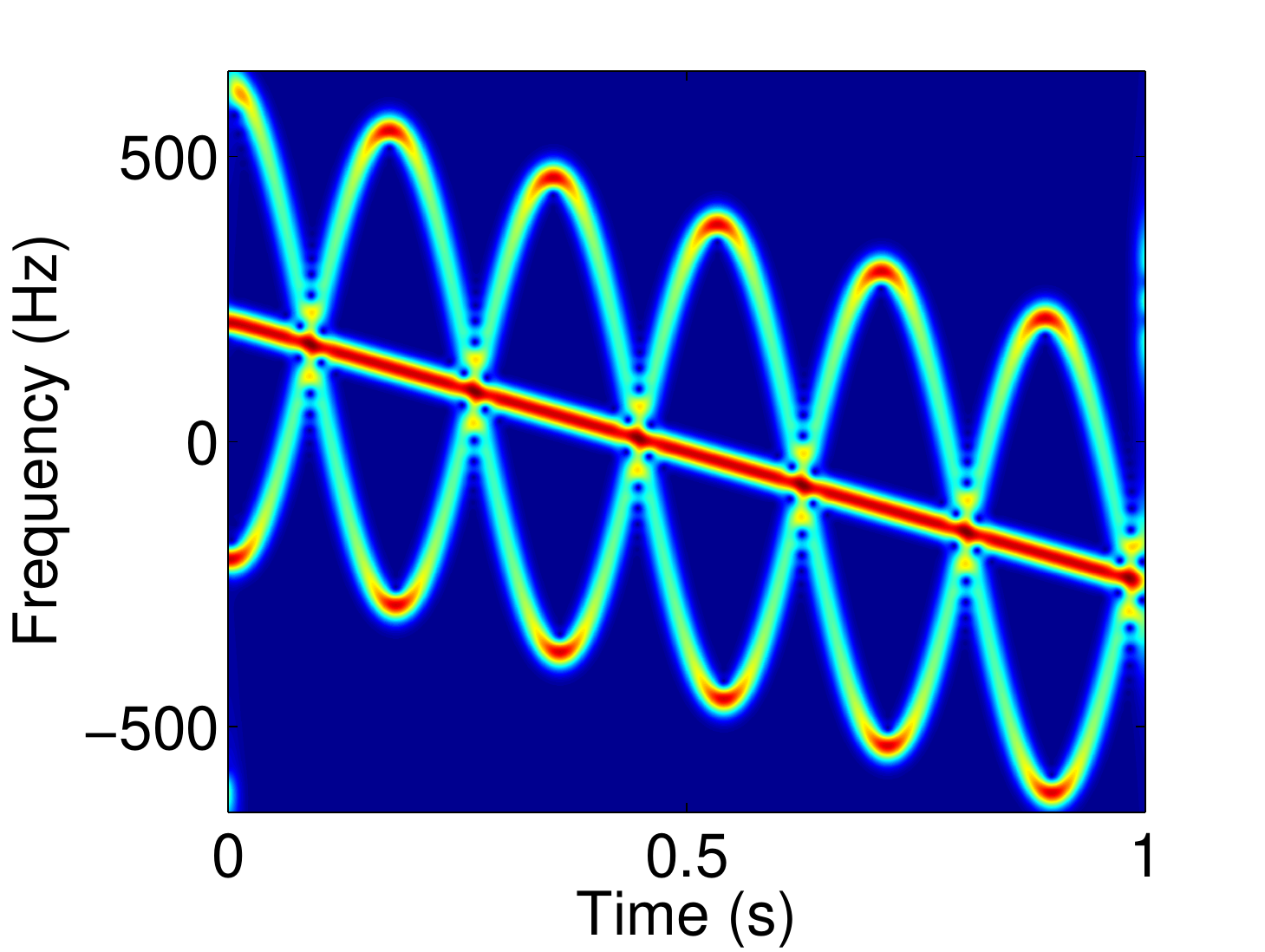}}
\end{tabular}
\vskip -0.3cm
\caption{\small  Micro-Doppler modulations induced by target\rq{}s tumbling (Right) and their 
STFTs (Left).} 
\label{Figure:Micro_Doppler}
\end{figure}
\noindent effects are represented by highly
nonstationary signals, when the target or any structure on the target undergoes micro-motion dynamics, such as mechanical vibrations, rotations, or tumbling and coning motions \cite{radar_basic_2016, Stankovic_compressive_sense_2013}.
Figure \ref{Figure:Micro_Doppler} shows  simulated micro-Doppler modulations (two sinusoidal signals) induced by two target\rq{}s tumbling motions 
and their short-time Fourier transforms (SFTFs).
In practice we need to recover each components or at least the main body signature corresponding to the target\rq{}s motion in the radar signal processing.

We say two components $x_{k-1}(t)$ and $x_k(t)$ in \eqref{AHM} are overlapping in time-frequency plane when the IF curves of them are crossing at $t=t_{0}$, i.e., $\phi_{k-1}^{'}(t_{0})=\phi_{k}^{'}(t_{0})$.
In this paper we consider multicomponent signals of the form \eqref{AHM} satisfying
\begin{equation}\label{def_sep_cond_cros}
|\phi'_{k}(t)-\phi'_{\ell}(t)|+\rho |\phi''_{k}(t)-\phi''_{\ell}(t)| \ge 2 \gt,
\end{equation}
where $\rho\ge 0$ is the regularization parameter and $\gt>0$ is called the separation resolution.
When $\rho=0$, \eqref{def_sep_cond_cros} is reduced to the well-separated condition \eqref{def_sep_cond} required for SST and SSO. Condition
\eqref{def_sep_cond_cros} allows the IFs of some components $x_k(t)$ to cross as long as their chirp rates
$\phi_k''(t)$ are different near the time $t_0$ where IFs crossing occurs.

Recently, many studies have been devoted to estimations of crossing instantaneous frequencies. An algorithm based on linear least square fitting technique was introduced \cite{add1}. 
In \cite{add2}, the authors employed a three-dimensional parameter space (time, frequency, chirp rate) to estimate instantaneous frequencies and chirp rates jointly of signal, which can make the crossing of IFs appear as separated in the time-frequency-chirprate domain. Furthermore, \cite{add2} proposed a three-dimensional ridge detection algorithm.
\cite{add3} used time-frequency analysis and Radon transform  for the unsupervised separation of frequency modulated signals in noisy environment. The chirp rates were also introduced in \cite{add5}, where estimations of IFs were turned into the resolution of a two-dimensional linear system. Effective quasi maximum likelihood–random samples consensus algorithm was also proposed for IF estimation of overlapping signals in the time-frequency (TF) plane \cite{add6}.

Moreover, \cite{Stankovic_compressive_sense_2013} developed a compressive sensing approach to recover stationary narrowband signals contaminated by strong nonstationary signals. An algorithm for estimations of IFs of signals with intersecting signatures in the time-frequency domain is proposed in \cite{add4}. Using the estimated IF, the corresponding component is removed from the mixture signal based on de-chirping method \cite{add4}. \cite{add7} combined parameterized de-chirping and band-pass filter to obtain components of multi-component signal, which avoided dealing with time-frequency representation of the signal and worked well under heavy noise. \cite{IEEE_Sensors_2017} introduced a novel nonparametric algorithm called ridge path regrouping to extract the IFs of the overlapped components and recover the corresponding component by using the intrinsic chirp component decomposition method. The time-frequency distribution (with a cross-term) \cite{TF_distribution_2004}
was used  to estimate IFs.
In \cite{Over1}, the authors developed a general model to characterize multi-component chirp signals, where IFs and instantaneous amplitudes of the intrinsic chirp components were modeled as Fourier series. 
Similarly, a multivariate intrinsic chirp mode was defined in \cite{Over2}. Then IFs can be estimated using the framework of the general parameterized time-frequency transform and the corresponding multivariate intrinsic chirp modes were reconstructed by solving multivariate linear equations through an extended least square method \cite{Over2}. Variational nonlinear chirp mode decomposition was employed to separate close or even crossed modes in \cite{Over3}. In \cite{Over4}, the blind source was decomposed into a series of nonlinear chirp components and the reconstructed nonlinear chirp components can be clustered into corresponding intrinsic chirp sources. 
The frequency-domain intrinsic component decomposition method was proposed to characterize nonlinear and non-monotonic nonlinear group delays in frequency domain by using different kernel functions and separate and reconstruct each mode simultaneously as a time-frequency filter \cite{Over5}. 
An adaptive chirp mode pursuit was proposed to capture signal modes one by one in a recursive framework \cite{Over6}. In \cite{Over7}, the authors transformed a two-dimensional sinusoidal pattern into a single point in a two-dimensional plane and reconstructed signals from a reduced set of observations (back-projections).

In this paper, using the idea of SSO, we propose a chirplet transform-based signal separation scheme (CT3S) to retrieve modes of multicomponent signals with crossover IFs.  
Our algorithm addresses the inverse transform and aims to retrieve modes of multicomponent signals with fast varying IFs, even crossing IFs, while the aforementioned papers use  the chirplet transform to obtain IFs.  Most importantly, we provide a mathematically rigorous theorem which guarantees  the recovery of components of multicomponent signals satisfying \eqref{def_sep_cond_cros} with CT3S. To our best knowledge, there is no existing literature that establishes mathematical theorems to guarantee the recovery of the components overlapping in the time-frequency plane.
Furthermore, we use a matched-filter along the specific time-frequency lines respect to the chirp rate to make different 
components be further separated and more concentrated in the three dimensional space of the chirplet transform.  Based on the approximation of source signals with linear chirps at any local time, 
we propose an innovative signal reconstruction algorithm, called group 
 filter-matched CT3S (GFCT3S), which takes a group of components into consideration simultaneously.
GFCT3S  is more suitable for signals with crossing IFs.
When the IFs curves of different components are not crossover, but fast-varying and close to each other, our reconstruction algorithm will decrease the recovery errors significantly.

The remainder of this paper is organized as follows.
In Section 2, we first review the signal separation methods by SST and SSO. After that we state the problem of recovering sub-signals with crossing IFs.
In Section 3, we  state our theorem which guarantees  the recovery of components for multicomponent signal satisfying \eqref{def_sep_cond_cros}  with CT3S.   We also introduce  
the filtered chirplet transform and propose GFCT3S for separating multicomponent signals with fast-varying and crossing IFs. We present the numerical experiments in Section 4.
A conclusion is then presented in Section 5. 

\section {Problem statement}
The (modified) STFT of signal $x(t) \in L_2(\RR)$ with a window function $g(t) \in L_2(\RR)$ is defined by,
\begin{equation}
\label{def_STFT}
V_x(t, \eta)=\int_{\RR} x(\tau) g(\tau-t) e^{-i2\pi \eta(\tau-t)}d\tau,
\end{equation}
If $g(0)\not=0$, then the original signal $x(t)$ can be recovered back from its STFT:
\begin{equation}
\label{rec_x_def}
x(t)=\frac 1{g(0)}\int_\RR V_x(t, \eta) d\eta.
\end{equation}
For multicomponent signal $x(t)$ in \eqref{AHM} satisfying the separation condition \eqref{def_sep_cond}, the sub-signal $x_k(t)$ can be recovered by
\begin{equation}
\label{rec_real_x_def}
x_k(t)\approx \frac 1{g(0)}  \int_{|\eta-\phi_k'(t)|<\Gamma_1} V_x(t, \eta) d\eta,
\end{equation}
for some $\Gamma_1>0$.

To enhance the time-frequency resolution and concentration, the idea of SST is to reassign the frequency variable.
As in \cite{Thakur_Wu11}, denote
\begin{equation}
\label{def_phase}
\go_x(t, \eta)=\frac{\frac {\partial}{\partial t} V_x(t, \eta)}{i 2\pi  V_x(t, \eta)}.
\end{equation}
The quantity $\go_x(t, \eta)$ is called the ``phase transformation"  \cite{Daub_Lu_Wu11}.
The STFT-based SST is to reassign the frequency variable $\eta$ by transforming the STFT $V_x(t, \eta)$ of $x(t)$ to a quantity, denoted by $R_x(t, \eta)$, on the time-frequency plane:
\begin{equation}
\label{def_FSST_simple}
R_x(t, \eta)=\int_{\{\xi: V_x(t, \xi)\not=0\}} V_x(t, \xi) \delta\left(\go_x(t, \xi)-\eta\right) d \xi.
\end{equation}
\begin{figure}[H]
	\centering
	\hspace{-0.5cm}
	\begin{tabular}{c@{\hskip -0.06cm}c@{\hskip -0.06cm}c@{\hskip -0.06cm}c}
		\resizebox{1.6in}{1.2in}{\includegraphics{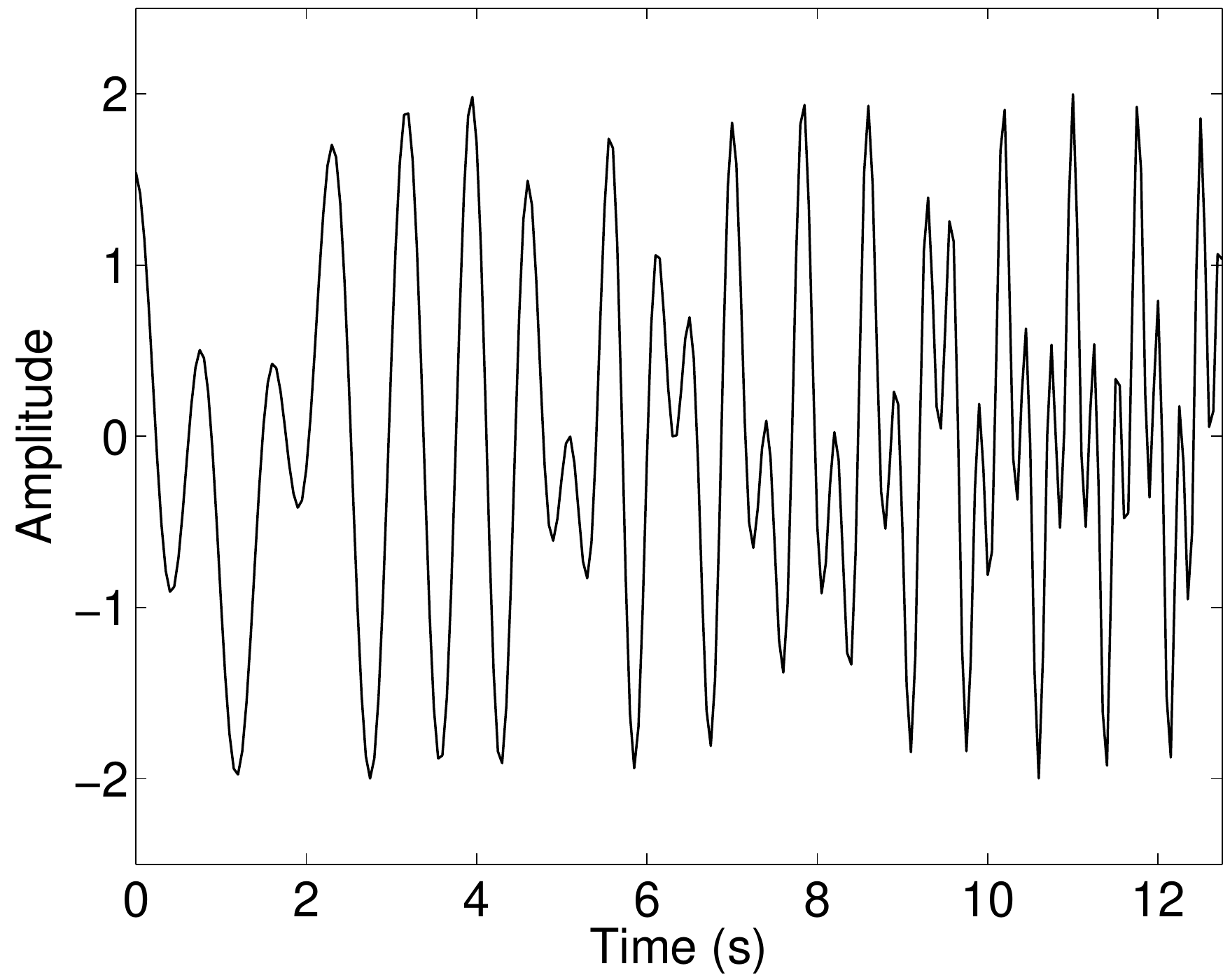}} \quad &
		\resizebox{1.6in}{1.2in}{\includegraphics{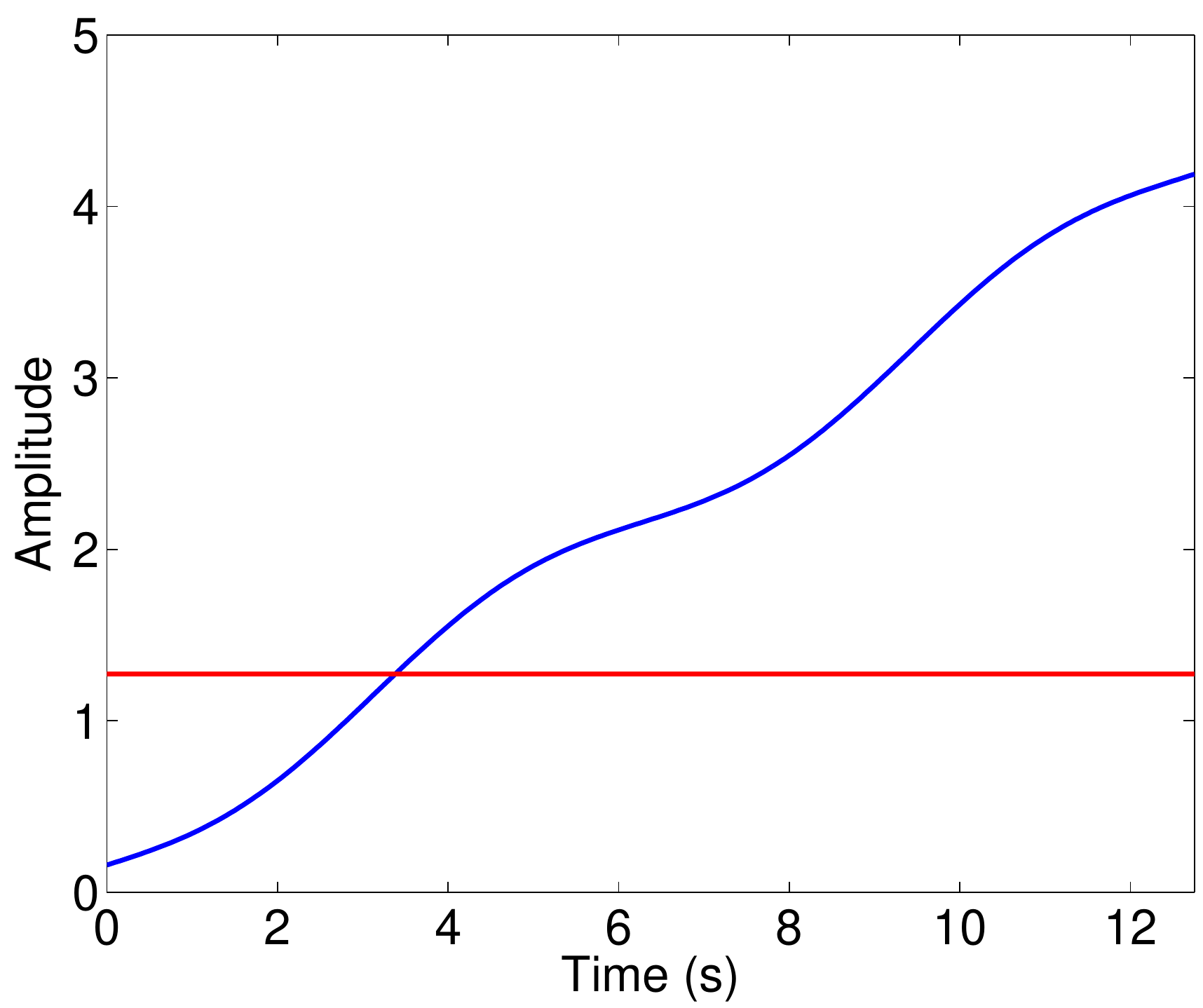}} \\
		\resizebox{1.6in}{1.2in}{\includegraphics{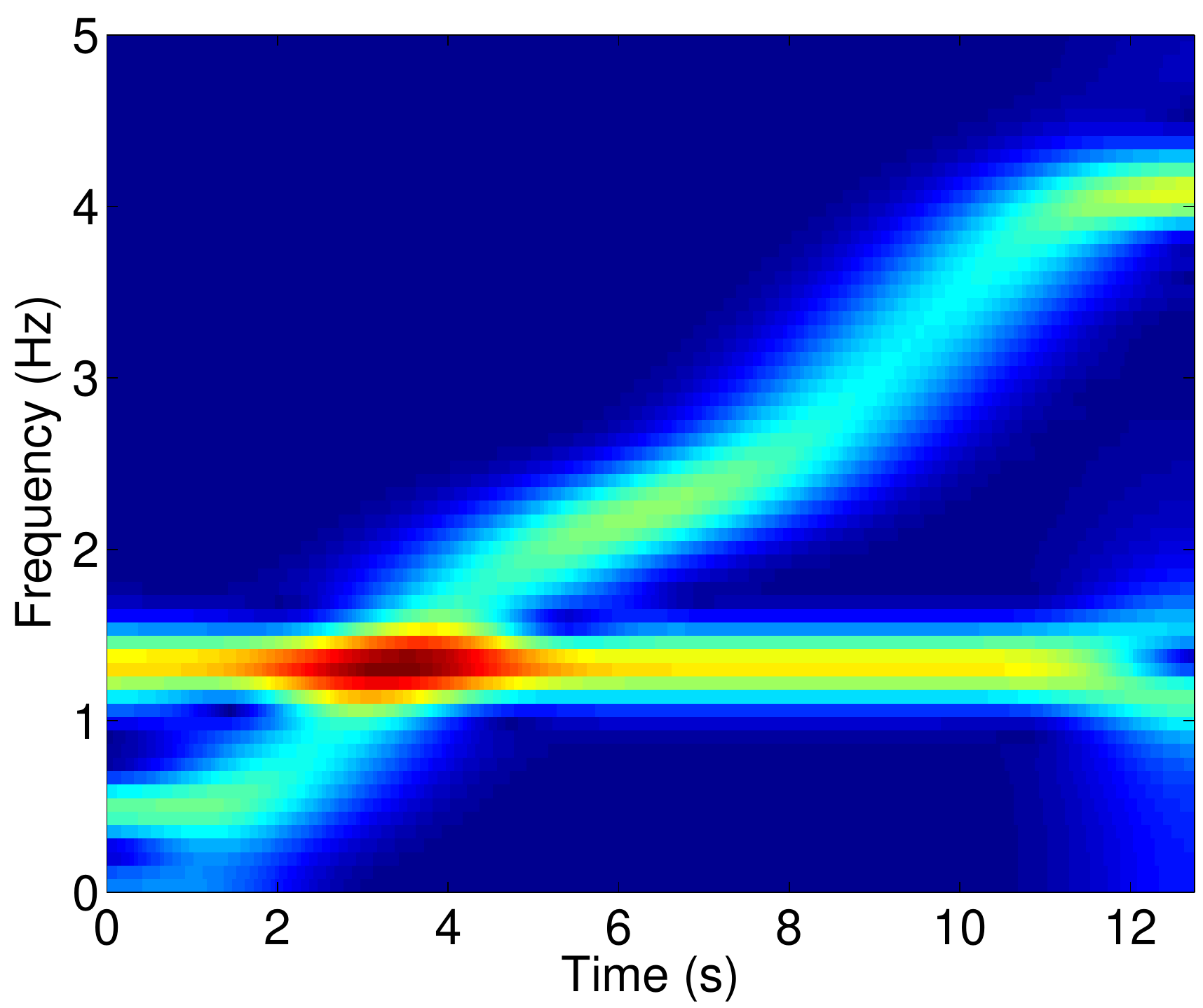}} \quad &
		\resizebox{1.6in}{1.2in}{\includegraphics{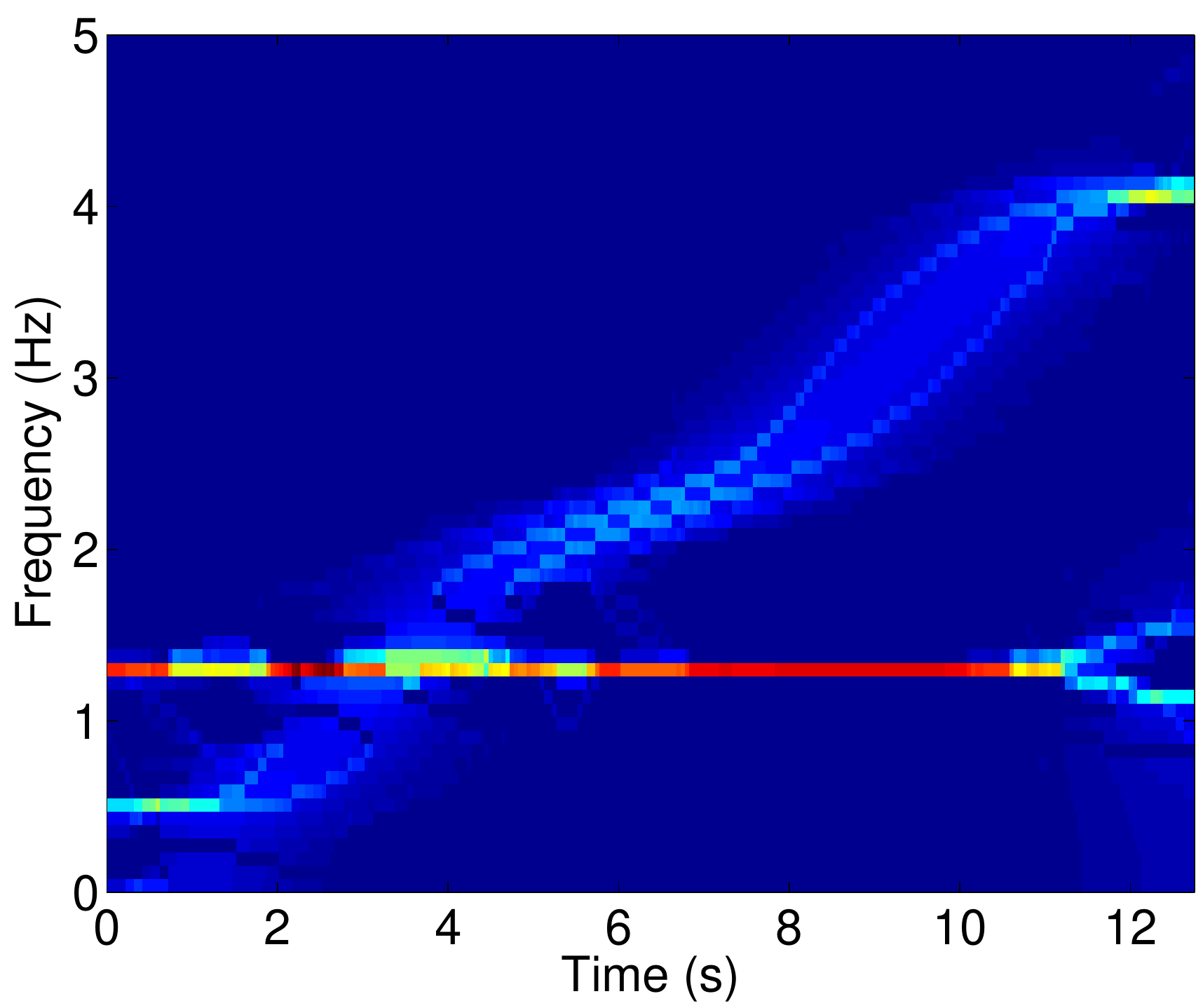}} \quad &
		\resizebox{1.6in}{1.2in}{\includegraphics{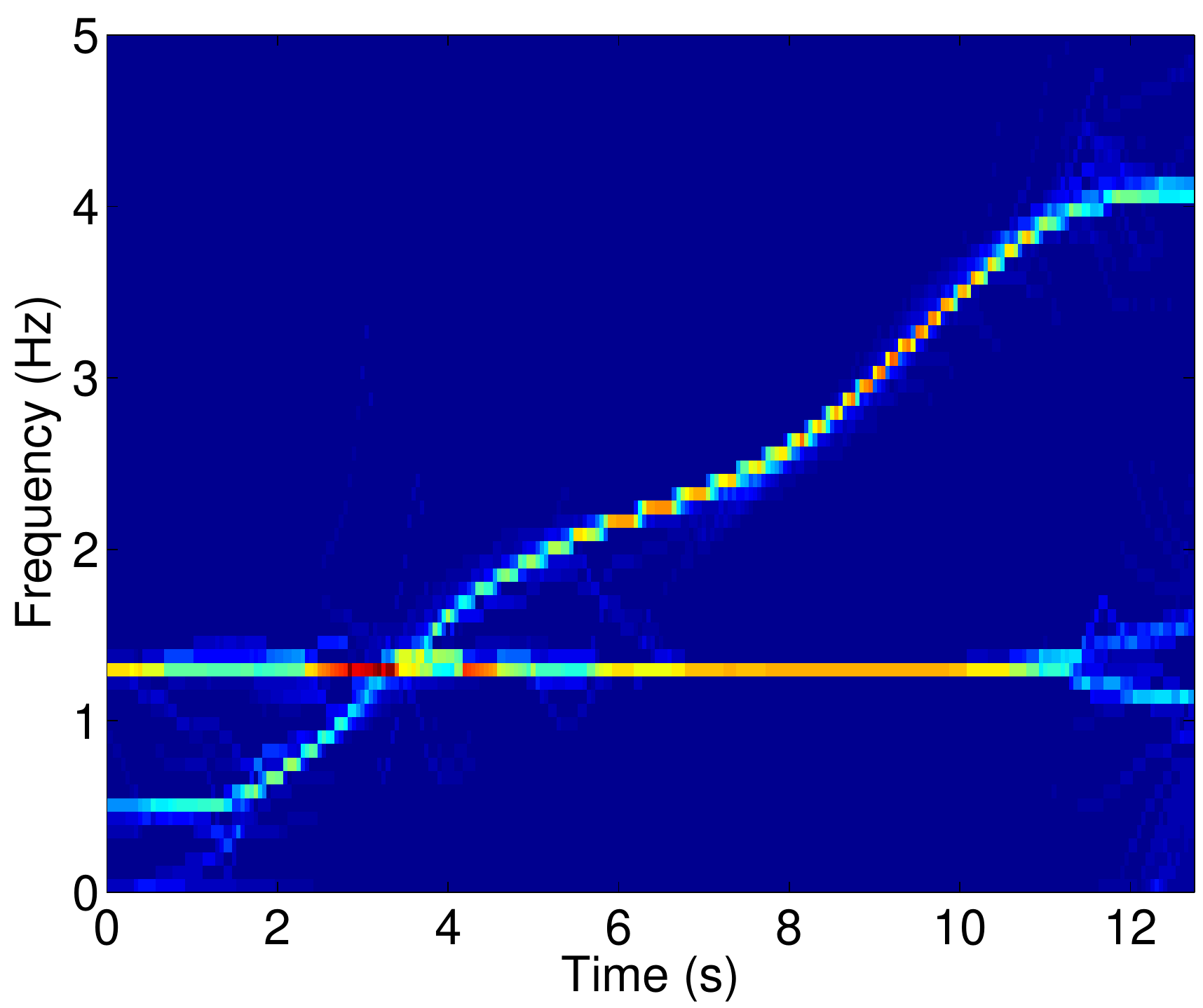}} \quad &
		\resizebox{1.6in}{1.2in}{\includegraphics{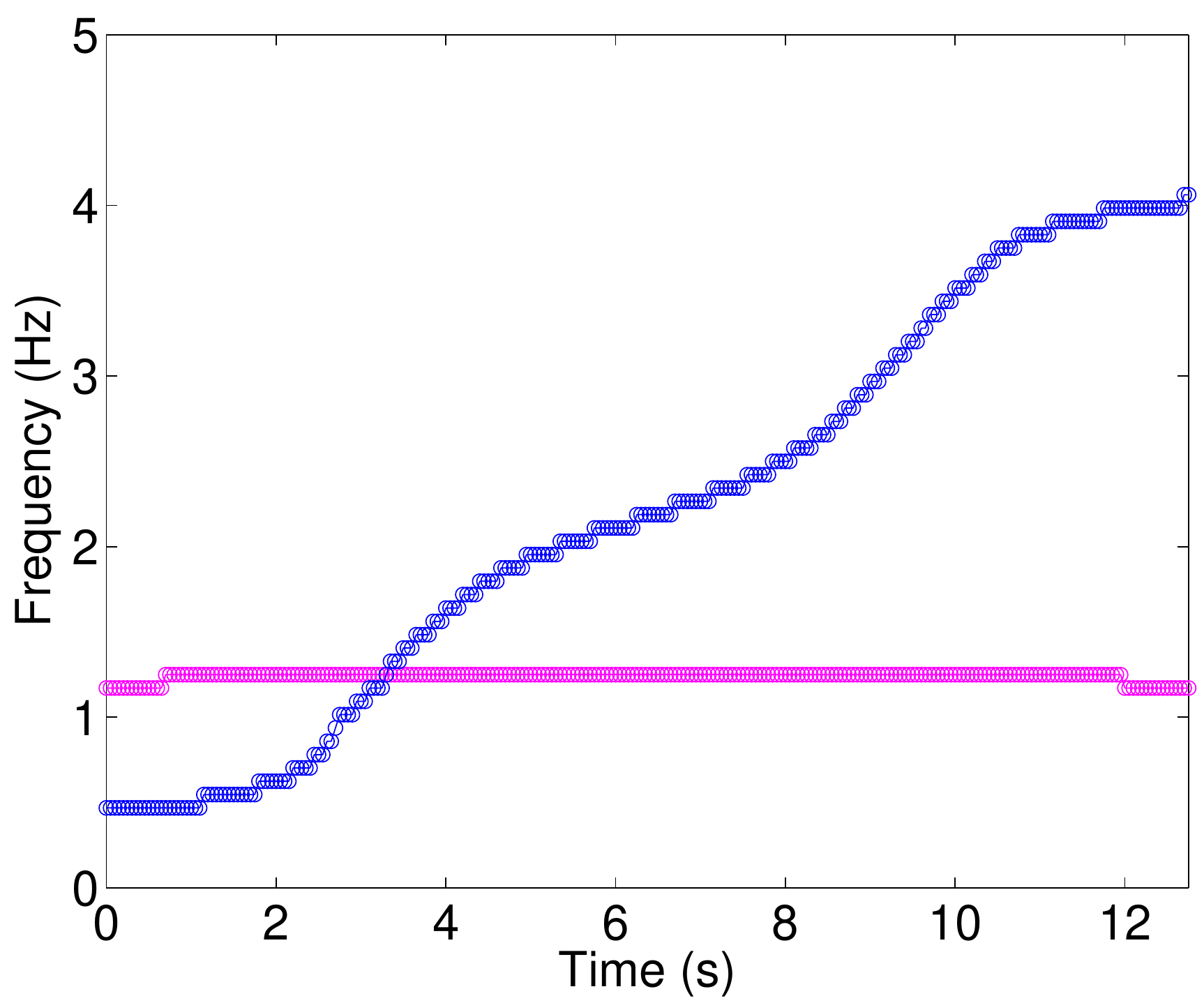}}\\
		\resizebox{1.6in}{1.2in}{\includegraphics{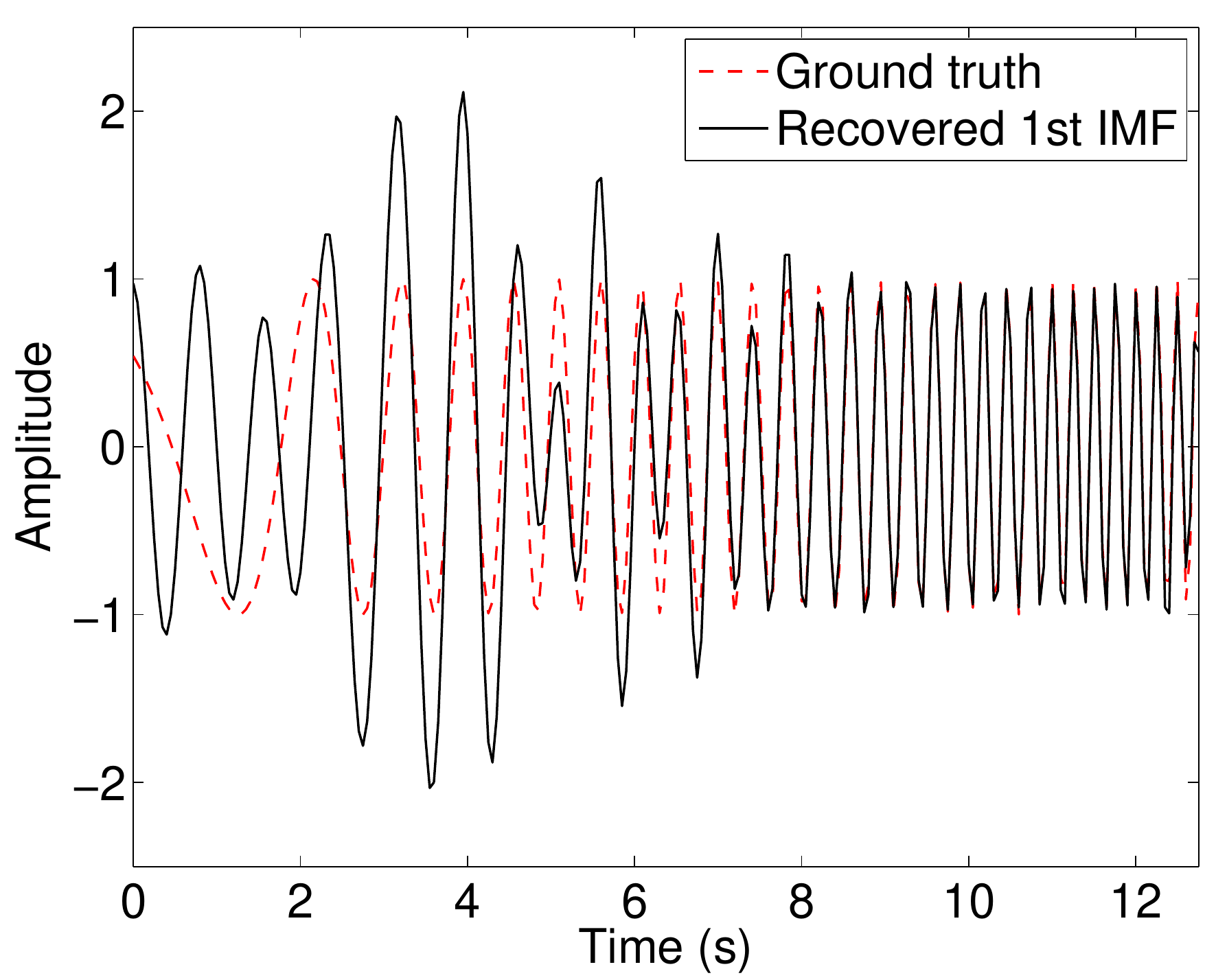}} \quad &
		\resizebox{1.6in}{1.2in}{\includegraphics{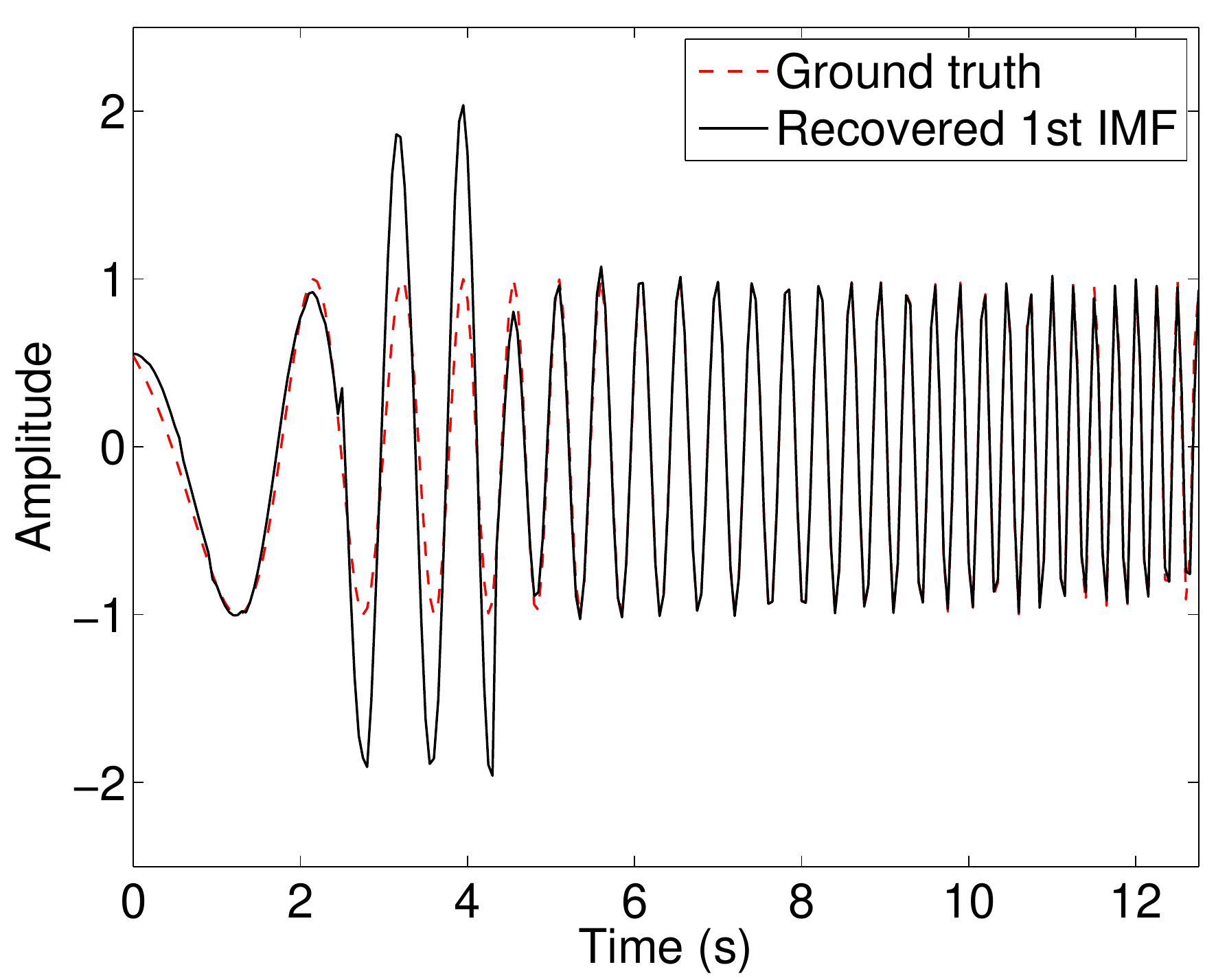}} \quad &
		\resizebox{1.6in}{1.2in}{\includegraphics{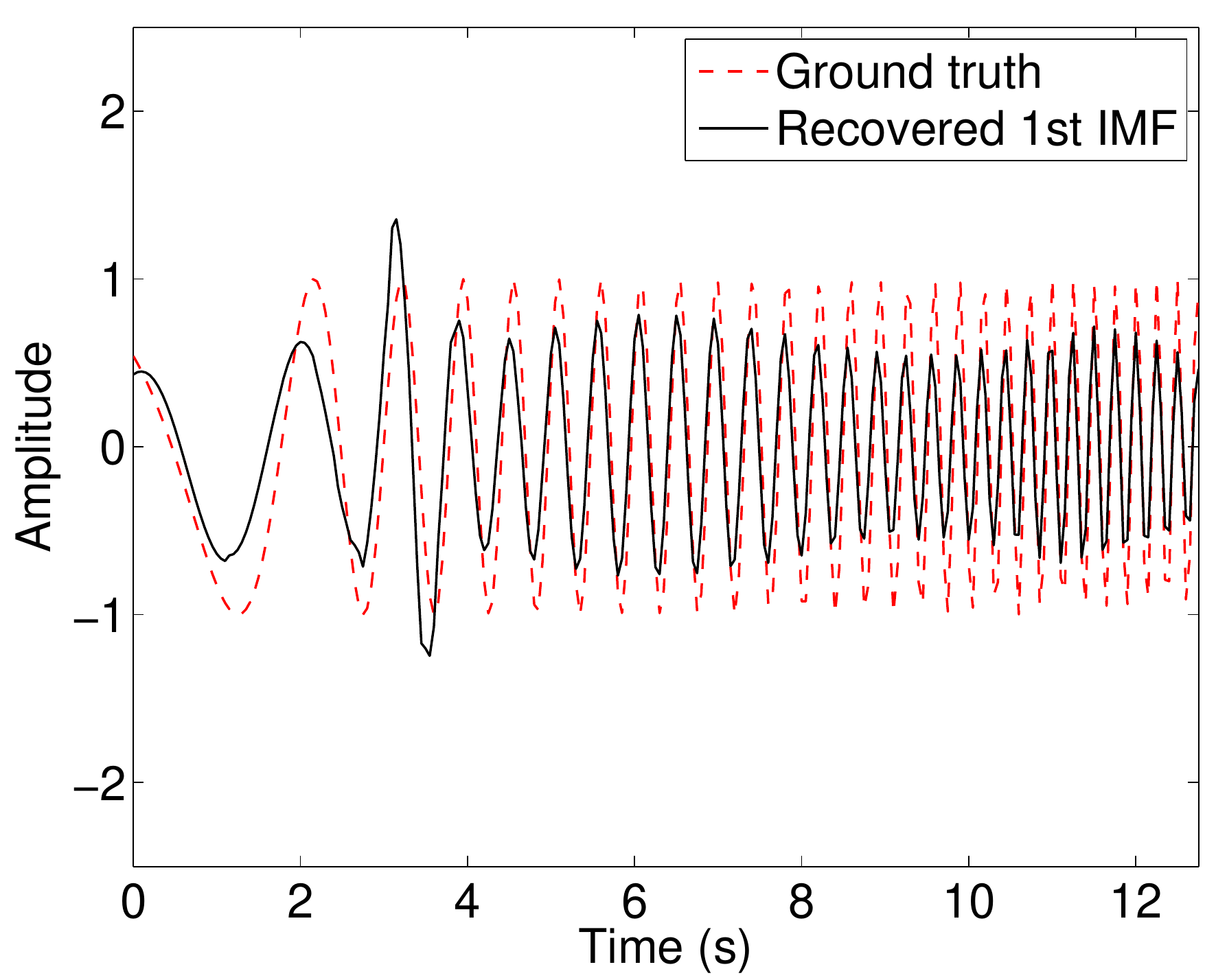}} \quad &
		\resizebox{1.6in}{1.2in}{\includegraphics{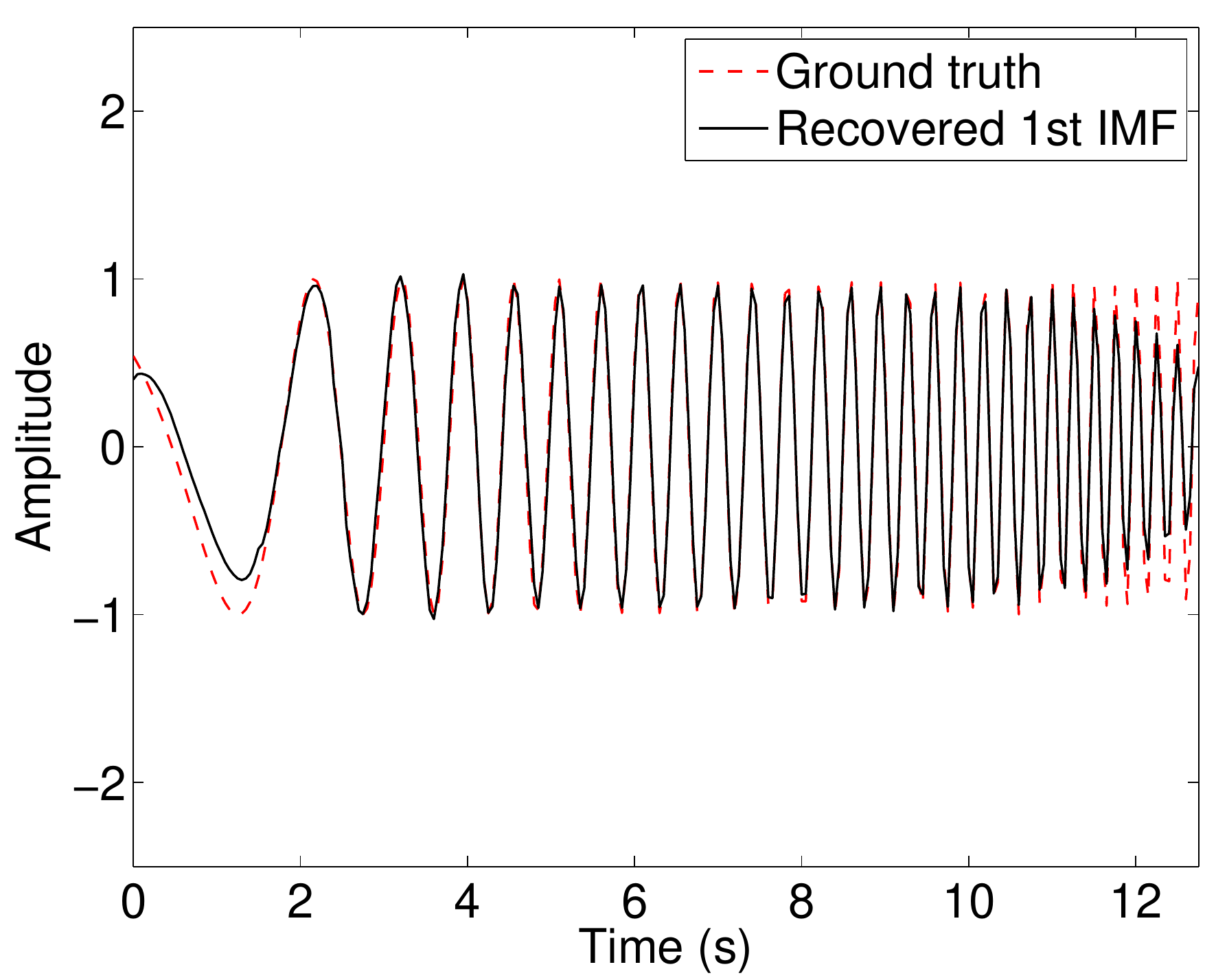}}\\
		\resizebox{1.6in}{1.2in}{\includegraphics{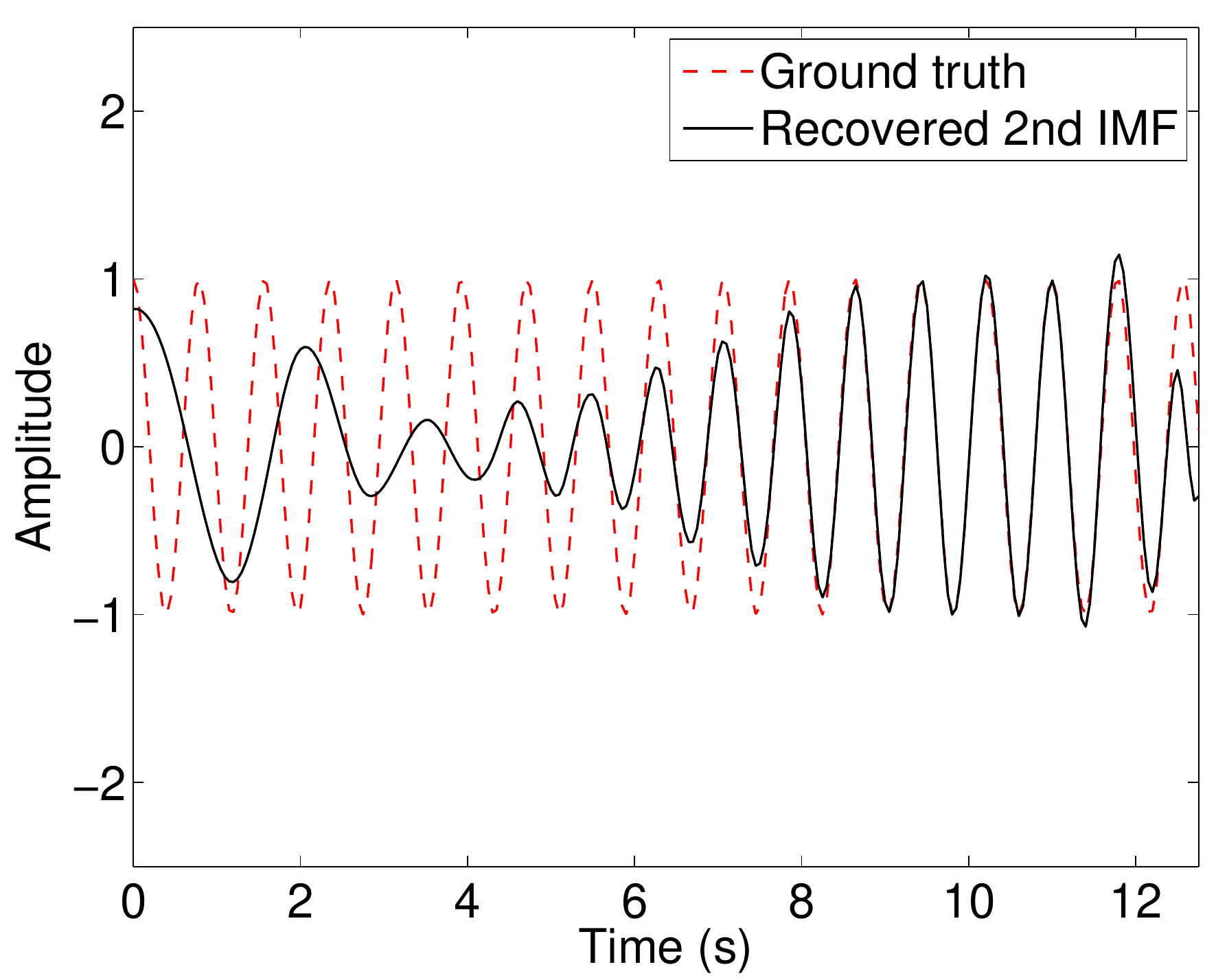}} \quad &
		\resizebox{1.6in}{1.2in}{\includegraphics{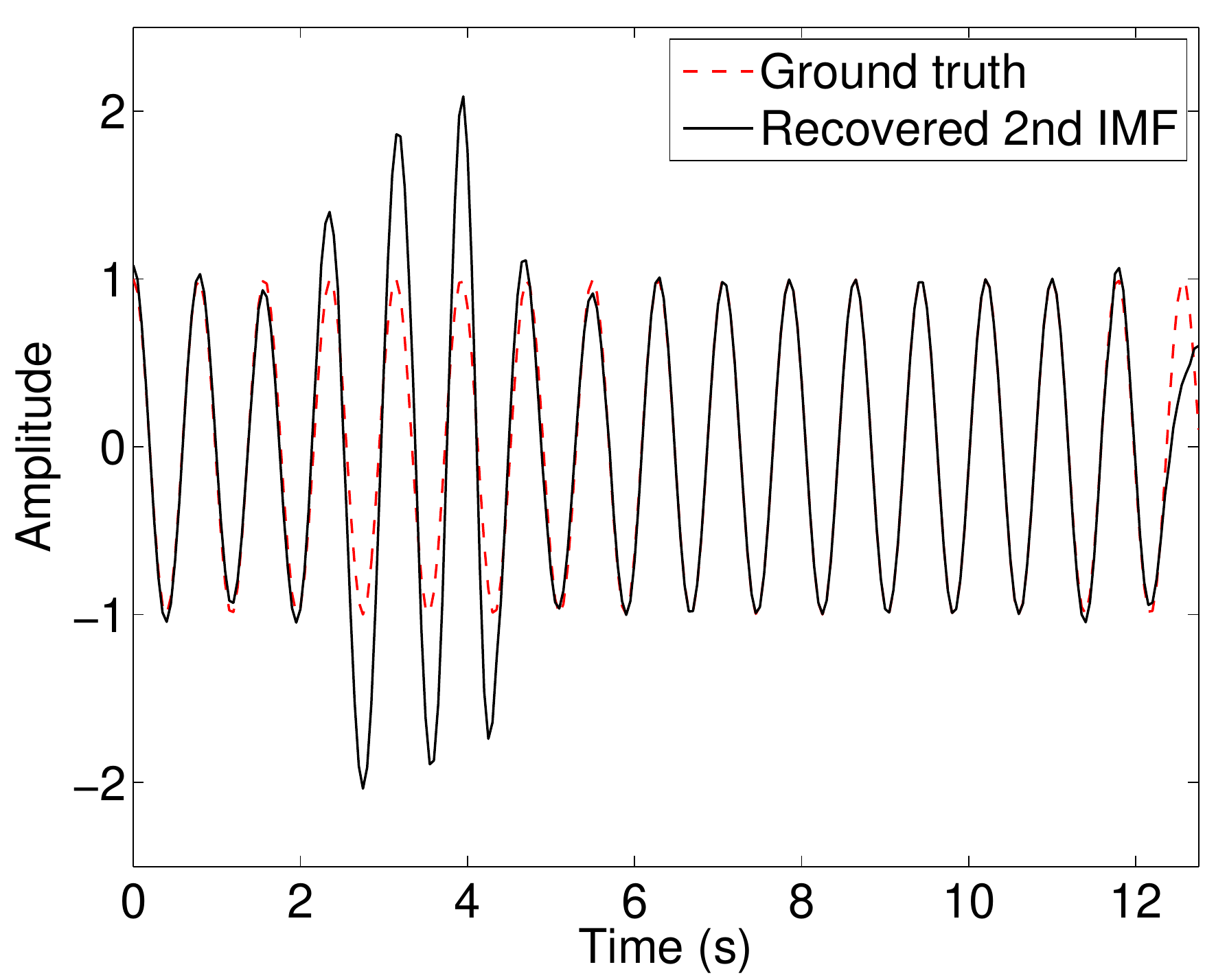}} \quad &
		\resizebox{1.6in}{1.2in}{\includegraphics{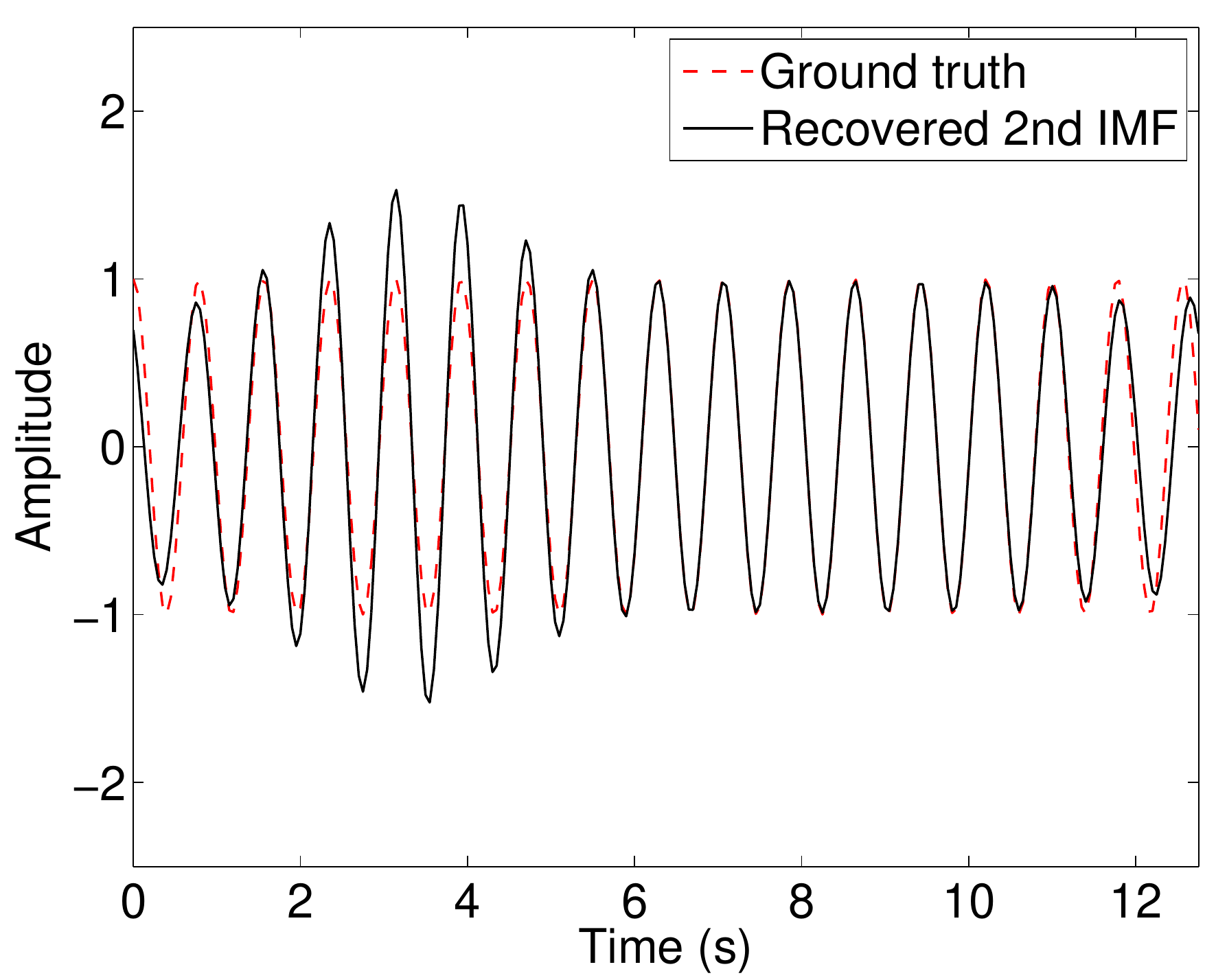}} \quad &
		\resizebox{1.6in}{1.2in}{\includegraphics{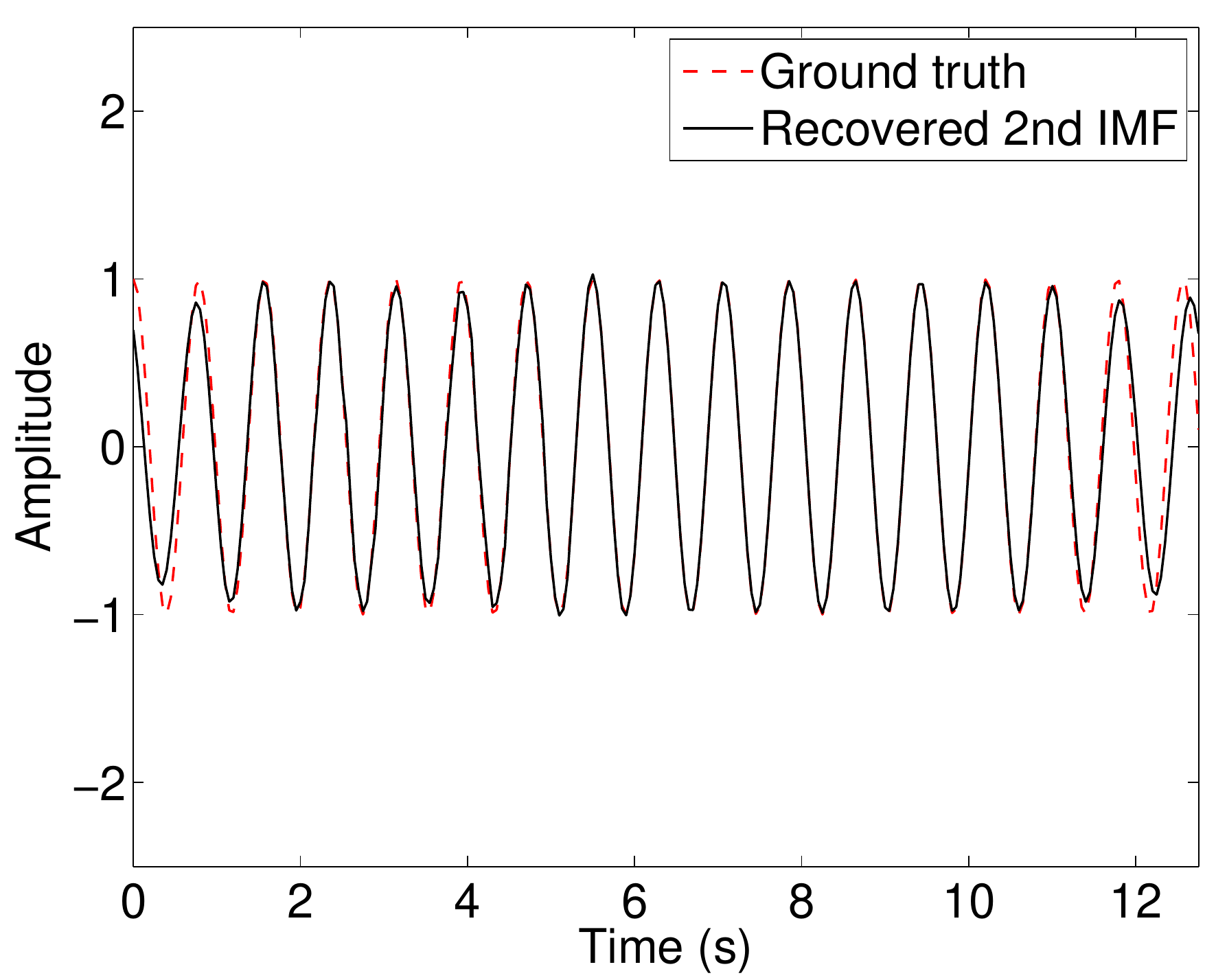}}
	\end{tabular}
	\caption{\small Results of component recovery  of two-component signal $f(t)$ in \eqref{two_component}. Source signal $f(t)$: Waveform and IFs (Top row, from left to right);
		Time-frequency diagrams: STFT, SST, 2nd-order SST and estimated IFs by our proposed filter-matched CT  (Second row, from left to right);
		Recovery results of $f_1(t)$ by different methods: EMD, 2nd-order SST, SSO and proposed method GFCT3S (Third row, from left to right);
		Recovery results of $f_2(t)$ by different methods: EMD, 2nd-order SST, SSO and proposed method GFCT3S (Bottom row, from left to right).}
	\label{Fig:two-component}
\end{figure}
\noindent One also has a reconstruction formula of $x_k(t)$ similar to \eqref{rec_real_x_def} with $V_x(t, \xi)$ replaced by $R_x(t, \xi)$. 

Observe that signal reconstructions with the STFT and SST depend on the IF estimation of $\phi_k'(t)$ and a given threshold $\Gamma_1$, hence it is indirect.
In contrast, signal separation operator (SSO) \cite{Chui_Mhaskar15} extracts signal components via local frequencies directly.
The SSO  $T_{a,\delta}$, which is applied to signals $x$ in \eqref{AHM}, is defined by
	\begin{equation}
	\label{def_SSO}
	\left(T_{a,\delta}x\right)(t,\eta) = \frac 1 {\hbar_a} \sum \limits_{n \in \ZZ} x(t-n\delta) h \left(\frac n a\right) e^{i2\pi n\eta},
	\end{equation}
where $h$ is a compactly supported window function, $\eta \in [0,1]$ and $\delta,a>0$ are parameters, with $a$ so chosen that
	\begin{equation}
	\label{h_area}
	\hbar_a = \sum \limits_{n \in \ZZ} h \left(\frac n a\right) > 0.
	\end{equation}	
For a multicomponent signal $x(t)$ defined in \eqref{AHM}, satisfying certain conditions including
the well-separation condition \eqref{def_sep_cond}, the set
 \begin{equation*}
\left\{\eta\in[0,1]: |\left(T_{a,\delta}x\right)(t,\eta)|>\Gamma_2 \right\}
 \end{equation*}
can be expressed as a disjoint union of exactly $K$ non-empty sets $\Theta_\ell, \ell=1,2,...,K$, corresponding to the $K$ components of $x(t)$.	
The sub-signal $x_k(t)$ can be reconstructed by,
\begin{equation}
\label{SSO_recon}
\wh x_k(t) = 2 \Re e \left \{ \left(T_{a,\delta}x\right)(t,\wh \eta_k) \right\},
\end{equation}
where
\begin{equation}
\label{find_ridge}
\wh \eta_k(t) = \arg \mathop{\max}\limits_{\eta \in \Theta_k} |\left(T_{a,\delta}x\right)(t,\eta)|.
\end{equation}
As mentioned in Section 1, to recover components $x_k(t)$ with
SST or SSO,  it is required all IFs of different components be separated from each other, namely they be far away from each other and non-crossing as shown in \eqref{def_sep_cond}. In particular, there is no mathematical theorem 
to guarantee the recovery of the waveforms of components when their IFs are crossover with only one observation $x(t)$ available.
This paper is to provide a method to retrieve modes of such nonstationary multicomponent signals  and establish 
retrieve error bounds. Next let us consider an example to show the performances of EMD, SST, SSO and our proposed method GFCT3S in retrieving the modes of a signal with crossover IFs.

Let $f(t)$ be the two-component signal introduced in \cite{Daub_Lu_Wu11}:
\begin{equation}
\label{two_component}
f(t) =f_1(t) +f_2(t) = \cos\left(t^2+t+\cos(t)\right) + \cos(8t).
\end{equation}
Here we let the sampling rate $F_s = 20$Hz and we only analyze the truncation signal on $t \in [0,256/F_s]$, with $256$ discrete sampling points. The instantaneous frequencies of  $f_1(t)$ and $f_2(t)$ are $\phi_1'(t) = (2t+1-\sin(t))/(2\pi)$ and $\phi_2'(t) = 4/\pi$, respectively.

Figure \ref{Fig:two-component} shows the recovery results of $f_1(t)$ and $f_2(t)$. Observe that compared to the STFT and the STFT-based SST, the 2nd-order SST of $f(t)$ represents this two-component signal with crossing IF curves much sharper and clearer. However, the existing methods including EMD \cite{Huang98}, SST \cite{Thakur_Wu11}, the 2nd-order SST  \cite{MOM15} and SSO  \cite{Chui_Mhaskar15} are unable to recover the waveforms $f_1(t)$ and $f_2(t)$ accurately, see the recovered $f_1$ and $f_2$ by these methods in Figure \ref{Fig:two-component}. Our proposed GFCT3S in Algorithm 1 provided in Section 3.2 can recover the two components accurately as shown in the 4th panels (from the left) in Row 3 and Row 4 respectively in Figure \ref{Fig:two-component}.
Note that EMD, SST, 2nd-order SST and SSO all result in big recovery errors for either $f_1$ or $f_2$ around $t_0=3.38$ where IFs crossing occurs, while our method produces very small errors near $t_0$.
 The boundary effect is unavoidable for all methods since we only use the truncation signal for $ 0\le t \le 12.8$ and the boundary extension has not be considered in this example. In addition, in this example we simply use the same Gaussian window with constant variance $\sigma = 1.6$ for STFT, SST, the 2nd-order SST, SSO and GFCT3S .



\section{Chirplet transform-based signal separation scheme}
In this section we propose a chirplet transform-based signal separation scheme (CT3S) to retrieve modes. We provide the main theorem on component recovery analysis.
In addition, we introduce filtered chirplet transform (CT) to make IFs crossover components further separated and more concentrated in the three-dimensional space of CT. Finally, 
to improve the performance of  CT3S in mode retrieval, we present an algorithm, called group filter-matched CT3S or GFTC3S for short,  based on filtered CT and the approximation of source signals with linear chirps at any local time.

\subsection{Main results}

The chirplet transform (CT) applied to a signal $x(t)$ is defined by  (see \cite{Mann95}-\cite{Czarnecki18}) 
\begin{eqnarray}
 \label{def_MSSO}
 \fs_x(t,\eta,\lambda)\hskip -0.6cm && = \int_{\RR} x(\tau) \frac 1 {\gs} g\big(\frac {\tau-t} {\gs}\big) e^{-i2\pi\eta(\tau-t) -i\pi \lambda (\tau-t)^2}
 d\tau\\
\nonumber && = \int_{\RR} x(t+\tau) 
\frac 1 {\gs} g\big(\frac \tau {\gs}\big) e^{-i2\pi\eta\tau -i\pi \lambda \tau^2}
d\tau,
 \end{eqnarray}
where $g(t)$ is a window function and $\gs>0$. $\fs_x$ is also called the localized polynomial Fourier transform (of order 2),  see \cite{Katkovnik95, Bi_Stankovic11}.  
Observe that when $\lambda=0$,  $\fs_x(t,\eta,\lambda)$ is the STFT.  
Thus we can also regard $\fs_x(t,\eta,\lambda)$ as the STFT after the quadratic term $ e^{ -i\pi \lambda \tau^2}$ is added to match the local change of a non-stationary signal. Hence, we also call  it the localized quadratic-phase Fourier transform in an early version of this paper. 

The CT represents a multicomponent signal in a three-dimension space of time, frequency and chirp rate. Note that when the IF curves of two components $x_{k-1}(t)$ and $x_k(t)$  are crossing, they may be well-separated in the three-dimensional space of the CT if 
$\phi_{k-1}^{\gp\gp}(t)\not = \phi_k^{\gp\gp}(t)$ for $t$ near the crossover time $t_0$. Thus a multicomponent signal with IFs crossover components could be well-separated 
in the three-dimensional space of the CT, and hence, it is feasible to propose  to reconstruct its components based on the CT.

In the following definition, we give some conditions on $A_k(t), \phi_k(t)$ under which the CT-based approach can estimate IFs and retrieve modes with certain error bounds.
\begin{figure}[H]
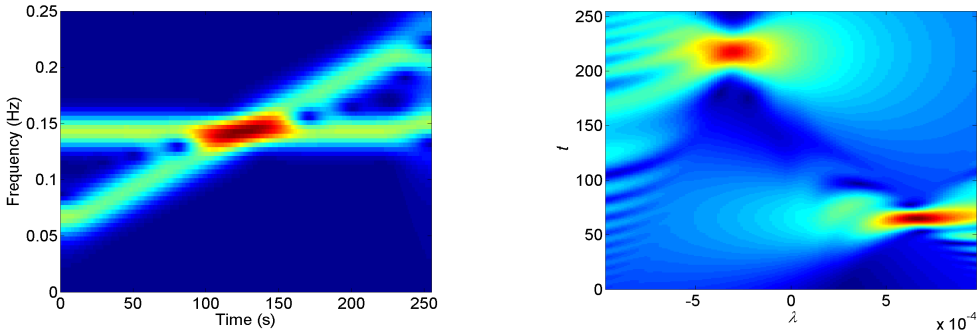

	\centering
	\begin{tabular} {cc}  
		\resizebox{2.5in}{1.8in}{\includegraphics{STFT_harmonic_chirp}}
		\quad &\quad \resizebox{2.5in}{1.8in}{\includegraphics{SSO_frequency_26}}
	\end{tabular}
	\caption{\small STFT, and a fixed slice of the CT of s(t).
		Left: STFT $|V_x(t, \eta)|$; Right: $|\fs_s(t,26/N,\lambda)|$.}
	\label{Figadd1}
\end{figure}	
\begin{mdef}
	\label{def:function_class}
	For an $\ga>0$, let $\mathcal {A}_\ga$ denote the set consisting of (complex) adaptive harmonic models (AHMs) of the form 
\begin{equation}
\label{MAHM}
x(t)=A_{0}(t)+\sum_{k=1}^K x_k(t)=\sum_{k=0}^K A_k(t) e^{i2\pi \phi_k(t)},  
\end{equation}
 with $A_k(t) \in L_\infty(\RR), \; A_k(t)>0, \phi_k(t)\in C^3(\RR), \inf\limits_{t\in \RR} \phi_k'(t)>0, \sup\limits_{t\in \RR} \phi_k'(t)<\infty$,   and
 $A_k(t), \phi_k(t)$ satisfying 
\begin{eqnarray}
\label{cond_A}
&&|A_k(t+\tau)-A_k(t)|\leq \alpha^{3}B_1 |\tau|A_{k}(t), ~~k=0,\cdots,K, \\
\label{cond_phi}
&&	\sup_{t\in \RR}|\phi'''_{k}(t)| \leq \alpha^{7} B_2,  ~~k=0,\cdots,K,
	\end{eqnarray}
where $B_1, B_2$ are some positive constants independent of $\ga$, 
and \eqref{def_sep_cond_cros} holds for some $\rho\ge 0, \gt> 0$. 
\end{mdef}

In this paper, we assume that the separation condition satisfies the condition \eqref{def_sep_cond_cros}.
When $\rho=0$, \eqref{def_sep_cond_cros} is reduced to the well-separated condition required for SST and SSO. The proposed condition allows the IFs of some components $x_k(t)$ to cross as long as their chirp rates
$\phi_k''(t)$ are different near the time $t_0$ where IFs crossing occurs. We consider a two-component signal given by 
\begin{equation*}
	\label{two_component_lfm}
\begin{aligned}
	s(t) &=s_1(t) +s_2(t)= \cos\left( 2 \pi c_1 t + \pi r_1 t^2\right) + \cos \left( 2 \pi c_2 t + \pi r_2 t^2\right).\\
\end{aligned}
	\end{equation*}
Here we let sampling rate $F_s = 1$Hz and just deal with the truncation signal on $t \in [0,255]$, with $N=256$ discrete sampling points. Especially, we consider the case $c_1 = 15/N$, $c_2 = 43/N$, $r_1 = 43/N^2$ and $r_2 = -20/N^2$.
Figure \ref{Figadd1} shows the STFT and a fixed slice of the CT of $s(t)$ with the Gaussian window with $\gs=0.1N$. As shown in this figure, frequencies of sub-signals are crossover at $t_{0}=114$ and however the nonstationary signals are well separated in the three-dimensional space of the CT.

In Definition \ref{def:function_class} and in the rest of this paper,  we also write the trend $A_0(t)$ as $x_0(t)=A_0(t) e^{i2\pi \phi_0(t)}$ with $\phi_0(t)=0$. In the real world, most signals are real-valued. Here for simplicity of presentation of the main result, Theorem 1,  and its proof, we consider complex AHM given in the form of \eqref{MAHM}. The statement of Theorem 1 and its proof still hold for (real-valued) AHM given in \eqref{AHM} by extra arguments.

Denote 
\begin{equation}\label{def_M_u}
\mu=\mu(t)= \min_{0\leq k\leq K} |A_{k}(t)|, ~~ M=M(t)=\sum_{k=0}^K |A_{k}(t)|.
\end{equation}
The main result of this paper to be stated in Theorem \ref{theo:LQFT} below applies to any value $t\in\mathbb{R}$. Since $t$ will be fixed throughout the proof of that theorem, for simplicity we also use $\mu$ and $M$ to denote $\mu(t)$ and $M(t)$ respectively, as shown in \eqref{def_M_u}. 

For a window function $g\in L_1(\RR)$, denote
\begin{equation}
\label{def_PFT}
\wb g (\eta, \gl)=\int_{\RR} g(\tau) e^{-i2\pi\eta\tau-i\pi \lambda \tau^2}d\tau.
\end{equation}
$\wb g(\eta, \gl)$ is called a polynomial Fourier transform (of order 2) of $g$ \cite{Bi_Stankovic11,Stankovic13}.

\begin{mdef}\label{definition2} {\rm ({\bf Admissible window function})} \;
A function  $g(t)\geq 0$
is called an admissible window function if $\int_\RR g(t)dt=1$,  \text{supp}$(g)\subseteq[-N, N]$ for some $N>0$, and satisfies the following conditions.
\begin{itemize}
\item[{\rm (a)}] There exists a constant $C$ such that
\begin{equation}\label{inequality_g1}
|\wb g (\eta, \gl)|\le \frac  {C}{\sqrt{ |\eta|+ |\lambda|}}, \; \forall \eta, \gl\in \RR.
\end{equation}
\item[{\rm (b)}]
If there exists a constant $D$ such that 
\begin{equation}
\label{inequality_g2}
1- |\wb g (\eta, \gl)|\leq D \varepsilon,
\end{equation}
holds for sufficiently small $\varepsilon>0$ and $(\eta, \gl)$ in the neighborhood of $(0, 0)$, then $\eta$ and $\gl$ must satisfy
\begin{equation}\label{inequality_g3}
|\eta |=o(1) ~~{\rm and} ~~ |\lambda|=o(1) ~~{\rm as}~~ \varepsilon \to 0.
\end{equation}
\end{itemize}
\end{mdef}

When $g$ is the Gaussian function given by
 \begin{equation}
\label{def_g}
g(t)=\frac 1{\sqrt {2\pi}} \; e^{-\frac {t^2}2},
\end{equation}
then (refer to \cite{Leon_Cohen,LCHJJ18, LCJ18})
\begin{equation}
\label{g_PFT}
\wb g(\eta, \gl)=\frac 1{\sqrt{1+i2\pi\gl}} e^{-\frac{2\pi^2 \eta ^2}{1+i2\pi \gl}}.
\end{equation}
One can verify that $|\wb g(\eta, \gl)|=\frac 1{(1+4\pi^2\gl^2)^{1/4}} e^{-\frac{2\pi^2 \eta ^2}{1+(2\pi \gl)^2}}$ satisfies conditions (a)(b) in Definition \ref{definition2}.

\bigskip

Observe that for an admissible window function $g$, we have
$$
|\wb g(\eta, \gl)|\le \wb g(0, 0)=1.
$$
In addition, from \eqref{inequality_g1}, we have
\begin{equation}\label{inequality_g}
|\wb g (\eta, \gl)|\le \frac  {L}{\sqrt{ |\eta|+ \rho |\lambda|}}, \; \forall \eta, \gl\in \RR,
\end{equation}
where $L=\max(1, \sqrt \rho) C$, and $\rho\ge 0$ is  the number in \eqref{def_sep_cond_cros}.

\begin{theo}\label{theo:LQFT}
Let $x(t)\in {\mathcal A}_\ga$ for some $\ga>0$, and $\fs_x(t,\eta,\lambda)$ be the CT of $x(t)$ with an admissible window function $g$. Let $\gs=\frac {c_0}{\ga^2}$ for some $c_0>0$.
If 
\begin{equation}
\label{cond_alpha}
\alpha\le \min\Big\{\frac\mu{4 M c_0 N(B_1+ \frac \pi 3 B_2 c_0^2 N^2)}, \frac{\mu\sqrt{c_0 \gt}}{4 M L}\Big\},
\end{equation}
then the following statements hold.
\begin{enumerate}
\item[{\rm (a)}] The set $\mathcal{G}(t)=\{(\eta,\lambda)  : |\fs_x(t, \eta,\lambda)| \geq \mu/2\}$ can be expressed as a disjoint union of exactly $K+1$ non-empty sets
\begin{equation}
 \label{def_Gell}
 \begin{array}{l}
 \mathcal {G}_{\ell}(t)=\big\{(\eta,\lambda) \in\mathcal{G}(t):\gs |\eta-\phi^{'}_{\ell}(t)|+\rho \gs^2 |\lambda-\phi^{''}_{\ell}(t)| \leq \big(\frac {4LM}{\mu}\big)^2
\big\},\ell=0,\cdots, K.
 \end{array}
 \end{equation}

\item[{\rm (b)}] For each $t$, let
\begin{equation}
\label{def_hateta}
(\wh{\eta}_\ell(t), \wh{\lambda}_\ell(t))={\rm argmax}_{(\eta,\lambda)\in\mathcal{G}_{\ell}(t)  }|\fs_x(t,\eta,\lambda)|, \; \ell=0,\cdots, K.\\
\end{equation}
Then
\begin{eqnarray}
 &&\label{abs_IA_est}
 \big| |\fs_x(t,\widehat{\eta}_{\ell}(t), \widehat{\lambda}_{\ell}(t))|-A_{\ell}(t) \big|\le \ga M \Big( \frac {L}{\sqrt{c_0 \gt}}+c_0 N B_1+ \frac \pi 3 B_2 c_0^3 N^3\Big),\\
 &&\label{phi_est}
|\wh{\eta}_{\ell}(t)-\phi_{\ell}^{'}(t)|=\frac{1}{\gs} o(1)=\ga^2 o(1), \\
 && \label{lambda_est}
|\wh{\lambda}_{\ell}(t)-\phi_{\ell}^{''}(t)|=\frac 1{\gs^2} o(1)=\ga^4 o(1),\\
&& \label{comp_est}
\big| \fs_x(t,\widehat{\eta}_\ell, \widehat{\lambda}_\ell)-x_\ell (t)\big |\leq o(1) +\ga M \Big( \frac {L}{\sqrt{c_0 \gt}}+c_0 N B_1+ \frac \pi 3 B_2 c_0^3 N^3\Big),
 \end{eqnarray}
as $\ga\to 0^+$.
\end{enumerate}
\end{theo}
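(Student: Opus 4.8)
The plan is to reduce everything to one local approximation: at the fixed time $t$, each component is nearly a linear chirp, so its chirplet transform is, up to a controllable remainder, the window's polynomial Fourier transform $\wb g$ evaluated at a rescaled, recentred argument. First I would Taylor-expand inside the integral \eqref{def_MSSO}, writing $A_k(t+\tau)=A_k(t)+[A_k(t+\tau)-A_k(t)]$ and $\phi_k(t+\tau)=\phi_k(t)+\phi_k'(t)\tau+\tfrac12\phi_k''(t)\tau^2+\tfrac16\phi_k'''(\xi)\tau^3$, and then substitute $u=\tau/\gs$ to pull out the clean main term
\begin{equation*}
\fs_{x_k}(t,\eta,\gl)=x_k(t)\,\wb g\big(\gs(\eta-\phi_k'(t)),\,\gs^2(\gl-\phi_k''(t))\big)+R_k ,
\end{equation*}
since the $\tfrac1\gs g(\cdot/\gs)$ and $d\tau$ rescalings cancel and the linear/quadratic phase pieces recentre the two arguments of $\wb g$. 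Summing gives $\fs_x=\sum_k x_k\wb g_k+\sum_k R_k$, where I abbreviate $\wb g_k=\wb g(\gs(\eta-\phi_k'(t)),\gs^2(\gl-\phi_k''(t)))$.

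The main obstacle, and the step I would execute most carefully, is bounding the remainder $R_k$, which absorbs the amplitude increment and the cubic phase tail. Using that $g\ge0$ is supported in $[-N,N]$ so that $\tfrac1\gs g(\cdot/\gs)$ is supported in $|\tau|\le\gs N$ and integrates to $1$, together with $|e^{i\theta}-1|\le|\theta|$, \eqref{cond_A} and \eqref{cond_phi}, I expect
\begin{equation*}
|R_k|\le\ga A_k(t)\Big(c_0NB_1+\tfrac\pi3B_2c_0^3N^3\Big),
\end{equation*}
the crucial cancellation being that $\gs=c_0/\ga^2$ converts the $\ga^3$ in \eqref{cond_A} into $\ga^3\gs N=\ga c_0NB_1$ and the $\ga^7$ in \eqref{cond_phi} into $\ga^7(\gs N)^3=\ga c_0^3N^3B_2$; this is exactly why the exponents $3$ and $7$ are chosen. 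Hence $|\sum_kR_k|\le\ga M(c_0NB_1+\tfrac\pi3B_2c_0^3N^3)$, which the first term of \eqref{cond_alpha} forces to be at most $\mu/4$.

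For part (a) the workhorse is the interference estimate: \eqref{inequality_g} gives $|\wb g_k|\le L/\sqrt{d_k}$ with $d_k:=\gs|\eta-\phi_k'|+\rho\gs^2|\gl-\phi_k''|$, and since $\gs\ge1$ for small $\ga$, \eqref{def_sep_cond_cros} yields centre separation $\ge2\gs\gt$ in the $\ell^1$-type metric $d$ on $(\phi_k',\phi_k'')$. Nonemptiness follows by evaluating at $(\phi_\ell',\phi_\ell'')$, where $\wb g(0,0)=1$ and every other ridge sits at distance $\ge2\gs\gt$, so $|\fs_x|\ge A_\ell-\mu/4-\mu/4\ge\mu/2$. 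Disjointness holds because the radius $(4LM/\mu)^2$ of each $\mathcal G_\ell$ is $\le\gs\gt$ by the second term of \eqref{cond_alpha}, i.e. half the centre separation. Coverage is the delicate point: outside every ball the triangle inequality shows at most one ridge can have $d_k<\gs\gt$, and that one satisfies $d_k>(4LM/\mu)^2$, so every factor obeys $|\wb g_k|\le\mu/(4M)$; as the amplitudes sum to $M$, this gives $\sum_kA_k|\wb g_k|\le\mu/4$ and, with the remainder, $|\fs_x|<\mu/2$.

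For part (b), at the maximiser $(\wh\eta_\ell,\wh\gl_\ell)\in\mathcal G_\ell$ the true ridge $(\phi_\ell',\phi_\ell'')$ also lies in $\mathcal G_\ell$, so comparing the two values sandwiches $|\fs_x(\wh\eta_\ell,\wh\gl_\ell)|$ between $A_\ell-E$ and $A_\ell+E$ with $E=\ga M(L/\sqrt{c_0\gt}+c_0NB_1+\tfrac\pi3B_2c_0^3N^3)$, which is \eqref{abs_IA_est}. Feeding the lower bound back gives $|\wb g(\gs(\wh\eta_\ell-\phi_\ell'),\gs^2(\wh\gl_\ell-\phi_\ell''))|\ge1-D\ga$, and then admissibility condition (b), namely \eqref{inequality_g2}--\eqref{inequality_g3}, forces both arguments to be $o(1)$, so $|\wh\eta_\ell-\phi_\ell'|=\ga^2o(1)$ and $|\wh\gl_\ell-\phi_\ell''|=\ga^4o(1)$, proving \eqref{phi_est}--\eqref{lambda_est}. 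Finally \eqref{comp_est} follows by writing $\fs_x(\wh\eta_\ell,\wh\gl_\ell)-x_\ell=x_\ell(\wb g_\ell-1)+\sum_{k\ne\ell}x_k\wb g_k+\sum_kR_k$, where continuity of $\wb g$ at $(0,0)$ and the just-proved argument bound make the first term $o(1)$, while the last two are again the interference-plus-remainder quantity $E$.
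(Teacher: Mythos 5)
Your proposal is correct and follows essentially the same route as the paper's proof: the linear-chirp local approximation with the remainder controlled by \eqref{cond_A}--\eqref{cond_phi} and the compact support of $g$ (the paper's Lemmas 1--2), the interference bound from \eqref{inequality_g} combined with the separation condition \eqref{def_sep_cond_cros} for part (a), the two-sided comparison at the maximiser for \eqref{abs_IA_est}, and admissibility condition (b) of Definition \ref{definition2} for \eqref{phi_est}--\eqref{lambda_est}. The only departures are cosmetic: you obtain $1-|\wb g_\ell|=O(\ga)$ directly from the sandwich rather than via the paper's phase-factor argument, and you prove \eqref{comp_est} from the decomposition and continuity of $\wb g$ at the origin instead of the paper's integral estimate of $\big|\fs_x(t,\wh\eta_\ell,\wh\gl_\ell)-\fs_x(t,\phi'_\ell(t),\phi''_\ell(t))\big|$; both variants are valid.
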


The proof of Theorem \ref{theo:LQFT} is provided in Appendix. Theorem \ref{theo:LQFT} shows that we can use the CT to separate a multicomponent signal, even when the IF curves of different components are crossover.
More precisely, for a multicomponent signal $x\in \cal A_\ga$,
its sub-signal $x_\ell(t)$ can be reconstructed by
		\begin{equation}
		\label{MSSO_recon_comp}
		\wh x_\ell(t) = \fs_x(t,\wh \eta_\ell, \wh \lambda_\ell), \; \ell=1, 2, \cdots, K.
		\end{equation}
Especially for real-valued signals, we have
		\begin{equation}
		\label{MSSO_recon_real}
		\wh x_\ell(t) =2 \Re e \left \{\fs_x(t,\wh \eta_\ell, \wh \lambda_\ell) \right\}, \; \ell=1, 2, \cdots, K.
		\end{equation}
		The trend in \eqref{MAHM} can also be recovered by
		\begin{equation}
		\label{MSSO_recon_trend}
		\wh x_0(t) = \fs_x(t,0,0)
		\end{equation}
	    for complex signals or
	    \begin{equation}
	    \label{MSSO_recon_trend_real}
	    \wh x_0(t) = 2\Re e \left \{ \fs_x(t,0,0) \right \}
	    \end{equation}
	    for real-valued signals.

It is important to point out that although the parameters $\alpha =\alpha(t)$, $\sigma=\sigma(t)$ are time-dependent, our experiments show that the dependence is not quite sensitive for reasonably well-behaved signals in $A(t)$. Therefore, we use the constant parameters $\alpha$ in Section $4$. However, in fact one can select the time-varying parameter so that the corresponding adaptive chirplet transform of the components of a multicomponent signal have sharp representations and are well-separated.

 We call the method to obtain IFs $\phi_k^\gp(t)$, chirp rates $\phi_k^{\gp\gp}(t)$ by \eqref{def_hateta}, and 
retrieve modes $x_\ell(t)$ and trend $A_0(t)$ by \eqref{MSSO_recon_comp}-\eqref{MSSO_recon_trend_real}
the chirplet transform-based signal separation scheme (or CT3S for short). 

Next we would like to explain the implementation of the proposed method. The first step is to apply  the CT  in (\ref{def_MSSO}) to (uniform or nonuniform) samples of $x(t)$. Then there are many narrow bands (clusters) corresponding to blind sources in time-frequency-chirprate spectrogram $|\fs_x(t, \eta,\lambda)|$. By applying an appropriate smoothing curve fitting scheme to obtain each local maximum curve. The second step is to find extreme curves in the narrow bands. Since the number of IF curves is unknown, curve fitting can be carried out by estimating one IF curve at a time, till there appears to be no curve left behind. After obtaining  local maximum values $\widehat{\eta}_{\ell}$ and $\widehat{\lambda}_{\ell}$, the third step is to compute instantaneous amplitudes, frequencies, chirp rates, and signal components by (\ref{abs_IA_est}), (\ref{phi_est}), (\ref{lambda_est}), and (\ref{comp_est}), respectively.
	
The dominant computational complexity of the proposed algorithm is to calculate the CT of a given signal. Assume that the number of samples of a given signal is $m$, the number of grids of the chirp rate is $n$ and the length of a window is $L$, then the computational complexity of the CT is $\mathcal {O}(nmL\log_{2}L)$. The computational cost of clustering is $\mathcal {O}(nmL)$. Then the set $\mathcal{G}(t)=\{(\eta,\lambda)  : |{\mathfrak{S}}_x(t, \eta,\lambda)| \geq \mu/2\}$ can be expressed as a disjoint union of exactly $K+1$ non-empty sets. For each set, the computational cost of recovering magnitudes, frequencies, waveforms and chirp rates is  $\mathcal {O}(m)$. Therefore, the total computational complexity is $\mathcal {O}(nmL\log_{2}L)$.

\begin{rem}  \;  The error bounds in \eqref{abs_IA_est} and \eqref{comp_est} are related to
\begin{equation*}
 M \big(\frac {L}{\sqrt{c_0 \gt}}+c_0 N B_1+ \frac \pi 3  B_2 c_0^3 N^3).
\end{equation*}
We shall choose $c_0$ such that the above quantity is sufficient small. Considering that the third power $c_0^3$ of $c_0$ appears in this quantity, we shall choose $c_0\le 1$. With $c_0\le 1$, one may choose $c_0$ such that $\frac {L}{\sqrt{c_0 \gt}}=c_0 N B_1$ with which
$\frac {L}{\sqrt{c_0 \gt}}+c_0 N B_1$ gains its minimum. Thus we may let
\begin{equation*}
c_0=\min\Big(1, \big(\frac {L^2}{B_1^2 N^2 \gt}\big)^{\frac 13}\Big).
\end{equation*}
\end{rem}

Note that $\gs=\frac {c_0}{\ga^2}$. To improve recovery
performance, we need to increase the window length $\gs$. However, for an arbitrary nonstationary signal,
the recover error cannot be reduced just by setting a sufficiently large number $\gs$.
We propose a modified algorithm in the next subsection to reduce the recovery errors.

\subsection{A improved scheme derived from linear chirp local approximation}

From the uncertainty principle \cite{Leon_Cohen}, one can get the minimum time-bandwidth product with a Gaussian window, which means the optimal two-dimensional resolution of time and frequency is attained when a Gaussian window is used. So next we consider the CT with the Gaussian window function given by \eqref{def_g}.

 When $x(t)\in \mathcal A_\ga$, then conditions \eqref{cond_A} and \eqref{cond_phi} imply that
 each component $x_k(t)$ is well-approximated locally by linear chirps (also called linear frequency modulation signals) at any time $t$  in the sense of (see Appendix)
 \begin{equation}
 \label{LFM_approx}
 x_k(t+\tau)\approx x_k(t)e^{i2\pi (\phi^\gp_k(t)\tau+\frac 12\phi^{\gp \gp}_k(t)\tau^2) }, \; t\in \RR,
 \end{equation}
 for $\tau \approx 0$. For a fixed $t$, the right side of \eqref{LFM_approx}, as a function of $\tau$, is a linear chirp (linear frequency modulation signal).
 Thus from \eqref{LFM_approx}, we have 
 \begin{eqnarray*}
\nonumber &&\fs_{x_k}(t,\eta,\lambda)\approx\int_{\RR}  x_k(t)e^{i2\pi (\phi^\gp_k(t)\tau+\frac 12\phi^{\gp \gp}_k(t)\tau^2) }  \frac 1 {\gs} g\big(\frac \tau {\gs}\big) e^{-i2\pi\eta\tau-i\pi \lambda\tau^2}d\tau
\\
\nonumber &&~~~~~~~~~~~~~~=x_k(t)\int_{\RR}  \frac 1 {\gs} g\big(\frac \tau {\gs}\big) e^{-i2\pi(\eta-\phi^\gp_k(t))\tau-i\pi (\lambda-\phi^{\gp\gp}_k(t)) \tau^2}d\tau
\\
\nonumber&&~~~~~~~~~~~~~~=x_k(t) \wb g\big(\gs(\eta-\phi^\gp_k(t)), \gs^2(\gl-\phi^{\gp\gp}_k(t))\big)
\\
\nonumber &&~~~~~~~~~~~~~~= \frac 1 {\sqrt{1+i2 \pi \sigma^2(\lambda-\phi^{\gp\gp}_k(t))}} x_k(t) e^{-\frac{2\pi^2 \sigma^2}{1+i2\pi \sigma^2(\lambda-\phi^{\gp\gp}_k(t))}(\eta-\phi^\gp_k(t))^2} \;\hbox{(by \eqref{g_PFT})}. 
\end{eqnarray*}
Denote 
$$
\mathbb{A}(\gl)=\frac 1 {\sqrt{1+i2 \pi \sigma^2\gl}}, \; \Omega(a,b)=e^{-\frac{2\pi^2 \sigma^2}{1+i2\pi \sigma^2 a}b^2}. 
$$
Then the above equation leads to 
\begin{equation}
\label{MSSO_LFM}
\fs_{x_k}(t,\eta,\lambda)\approx x_k(t) \mathbb{A}(\lambda-\phi^{\gp\gp}_k(t))
		\Omega(\lambda -\phi^{\gp\gp}_k(t),\eta-\phi^\gp_k(t)). 
\end{equation}
Observe that $\mathbb{A}(0)=1, \Omega(\cdot, 0)=1$, 
therefore, if $\big(\wh \eta_k, \wh \lambda_k\big) =  (\phi^\gp_k(t), \phi^{\gp\gp}_k(t))$, then 
$\fs_{x_k}(t,\wh \eta_k, \wh \lambda_k) =  x_k(t)$. From the expression of $\mathbb{A}$, we see $|\fs_{x_k}(t,\eta,\lambda)|$ is not exponentially decaying with respect to variable $\gl$ as  $\gl \to \infty$.
For a multicomponent signal $x(t)$, we need to find the extreme points $(\wh \eta_\ell, \wh \lambda_\ell), \ell=1,2,\cdots, K$ of $|\fs_x(t, \eta, \gl)|$ with $(\eta, \gl) \in \cG_\ell$.
Therefore, we first need to find these $K$ non-empty sets $\mathcal{G}_\ell, \ell=1,2, \cdots, K$ as defined in Theorem \ref{theo:LQFT} on the  three-dimensional space of the CT $\fs_x(t,\eta,\lambda)$.
However when the IF curves of two components $x_{\ell-1}(t)$ and $x_\ell(t)$  are crossing,
$\mathcal{G}_{\ell-1}$ and $\cG_\ell$ are not sufficiently separated in the three-dimensional space of the CT due to the fact that
$|\fs_x(t,\eta,\lambda)|$ is not a function with a fast rate of decay with respect to the chirp rate $\lambda$. Next we also use an example to explain this point.

Let $s(t)$ be a two-component signal 
given by
	\begin{equation}
	\label{two_component_lfm}
\begin{aligned}
	s(t) &=s_1(t) +s_2(t)= \cos\left( 2 \pi c_1 t + \pi r_1 t^2\right) + \cos \left( 2 \pi c_2 t + \pi r_2 t^2\right).\\
\end{aligned}
	\end{equation}
\begin{figure}[H]
		\centering
		\begin{tabular}{c@{\hskip -0.06cm}c@{\hskip -0.06cm}c@{\hskip -0.06cm}c}
			\resizebox{1.6in}{1.2in}{\includegraphics{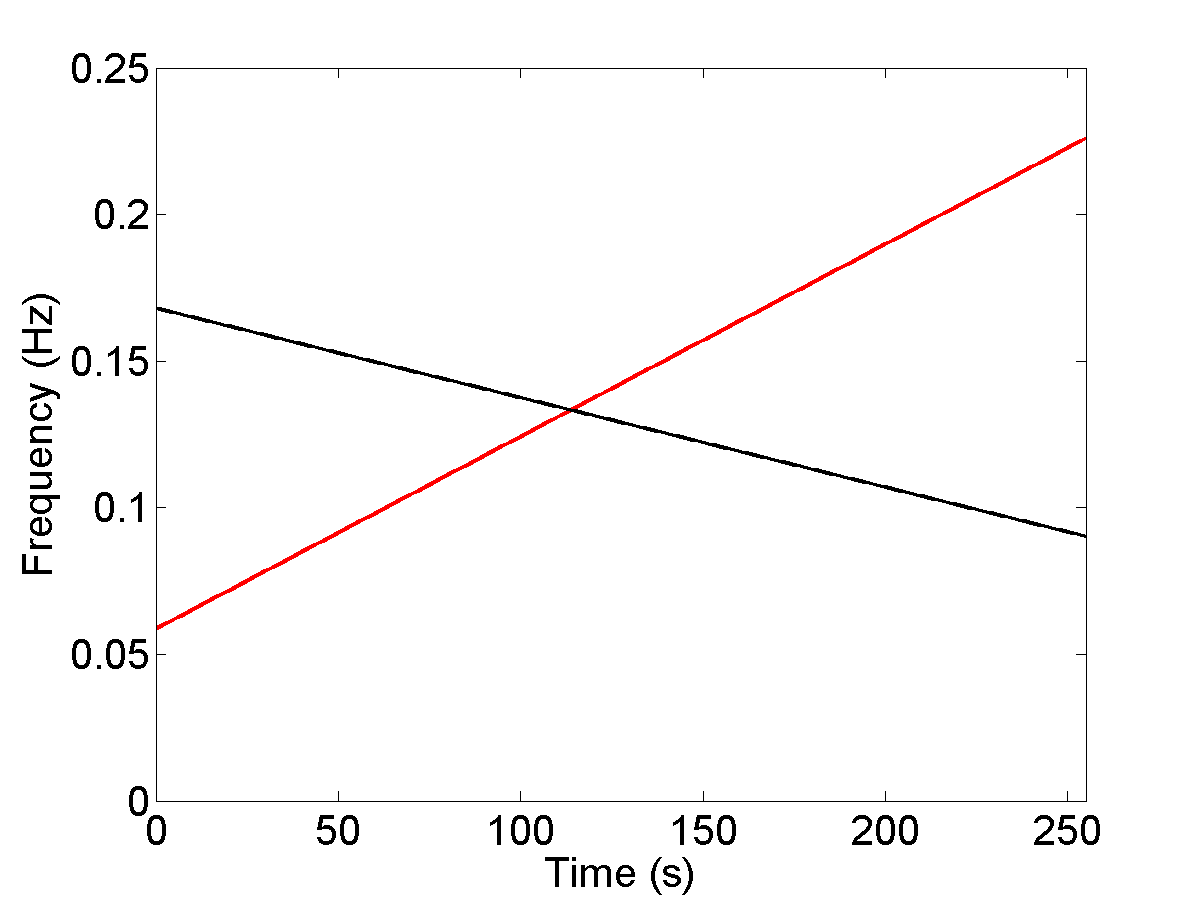}} \quad &
			\resizebox{1.6in}{1.2in}{\includegraphics{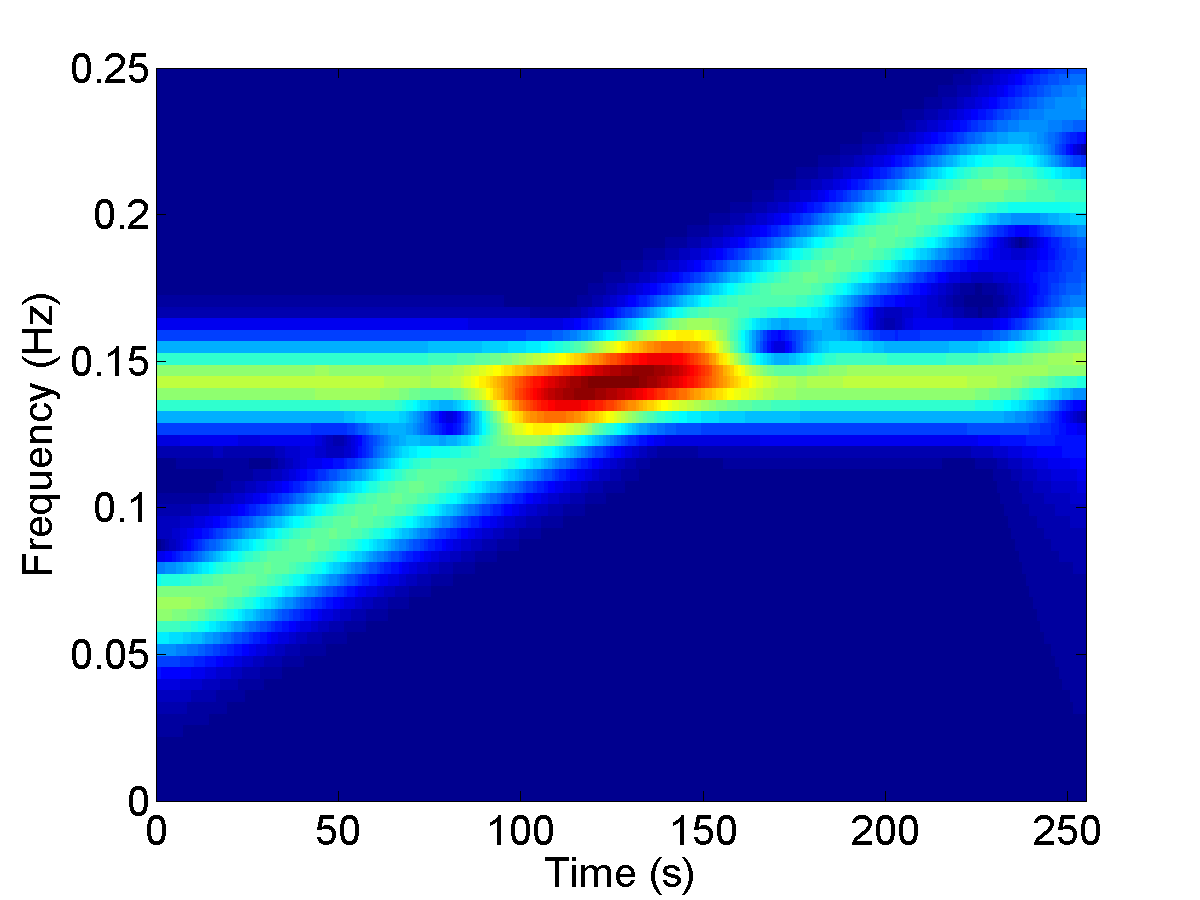}} \quad &
			\resizebox{1.6in}{1.2in}{\includegraphics{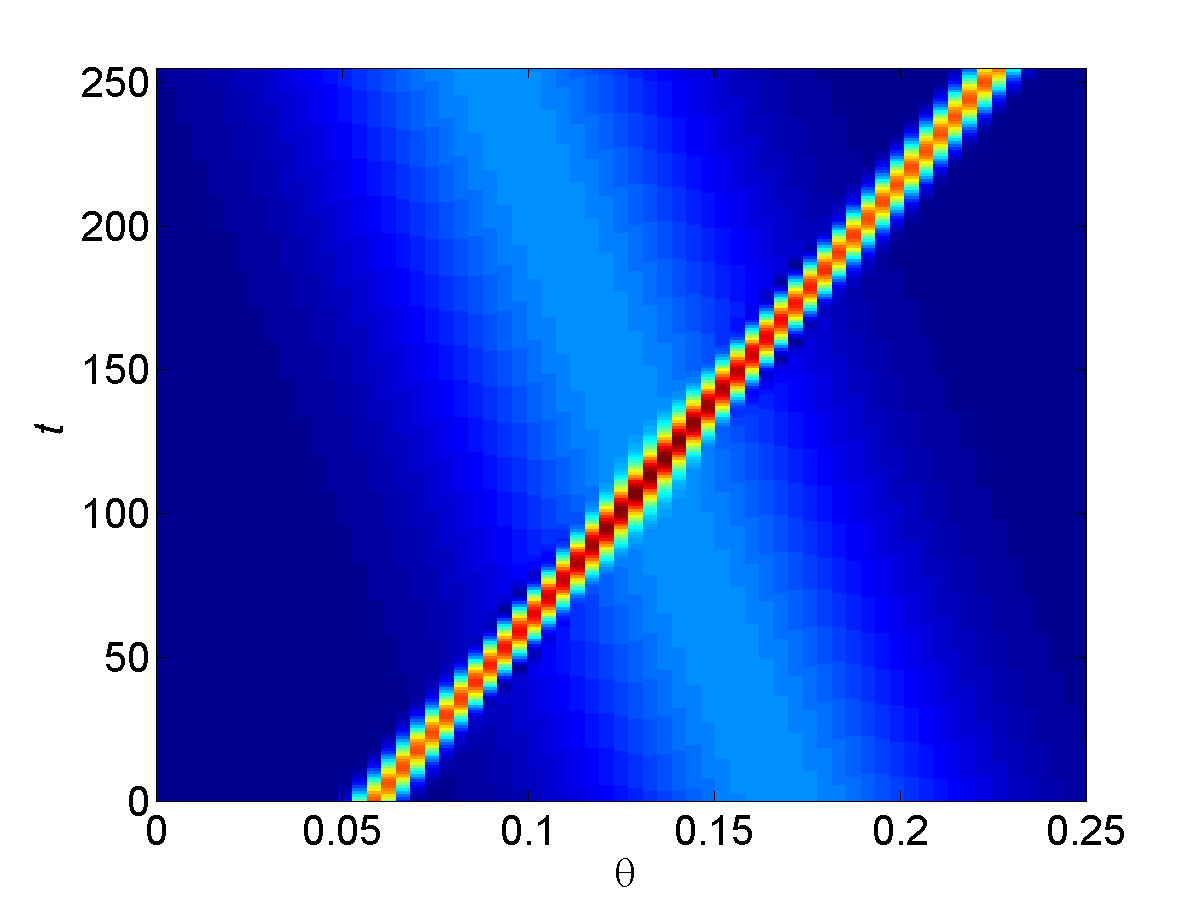}} \quad &
			\resizebox{1.6in}{1.2in}{\includegraphics{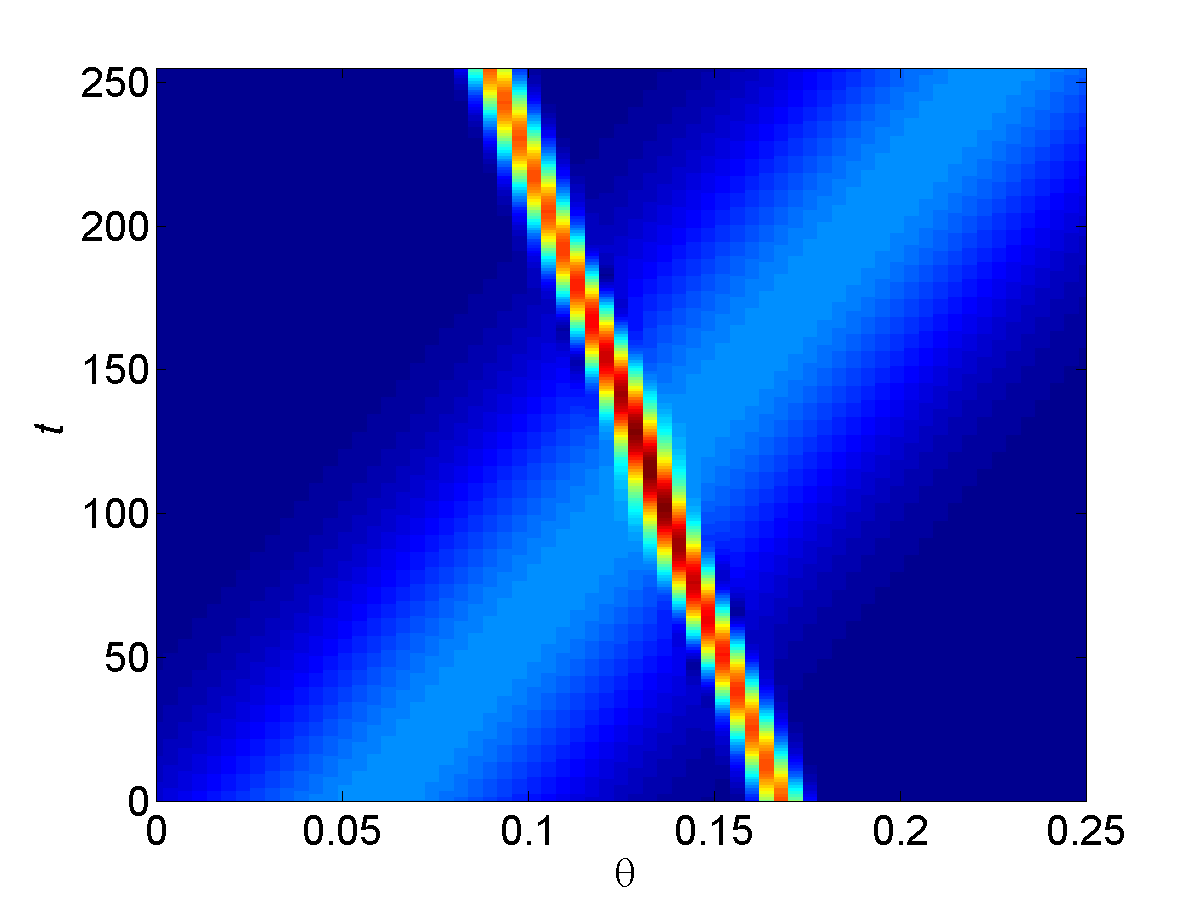}}\\
			\resizebox{1.6in}{1.2in}{\includegraphics{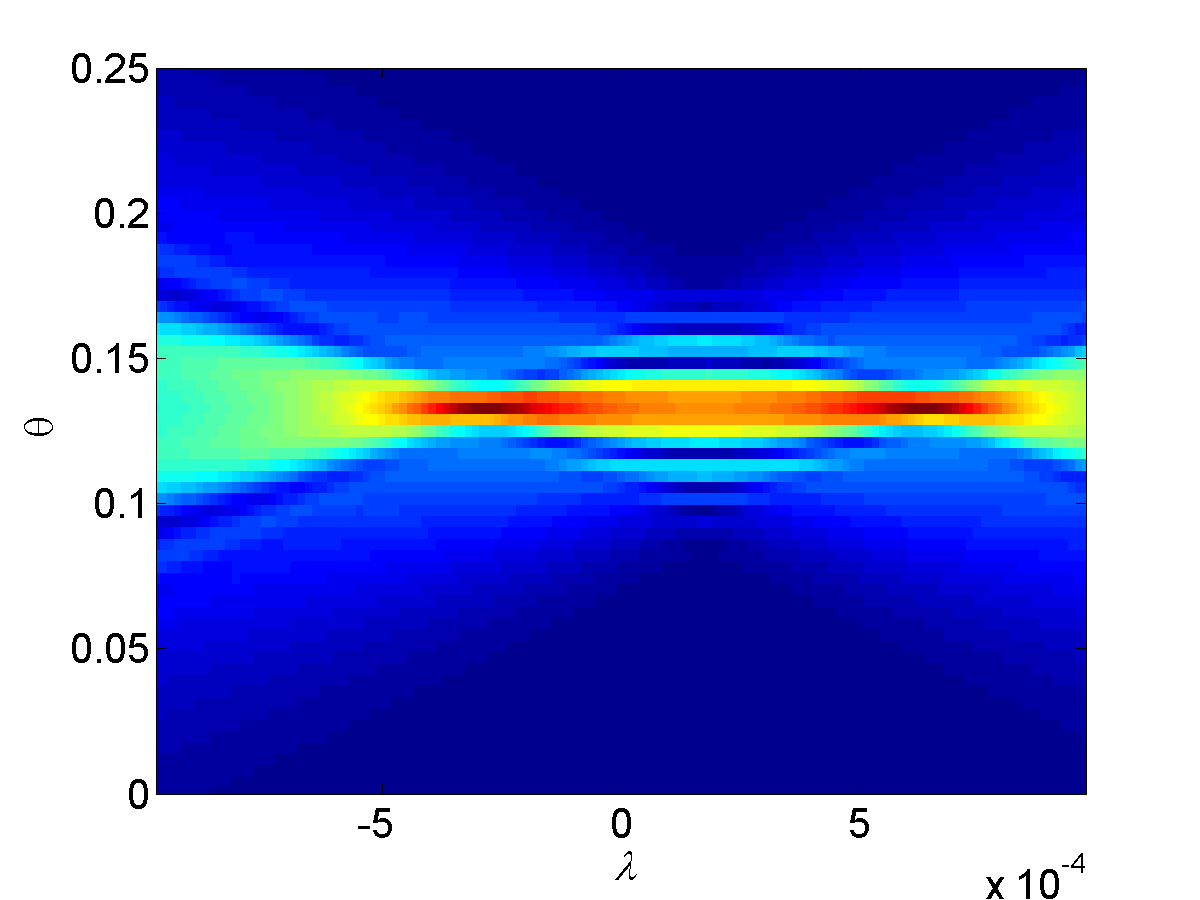}} \quad &
			\resizebox{1.6in}{1.2in}{\includegraphics{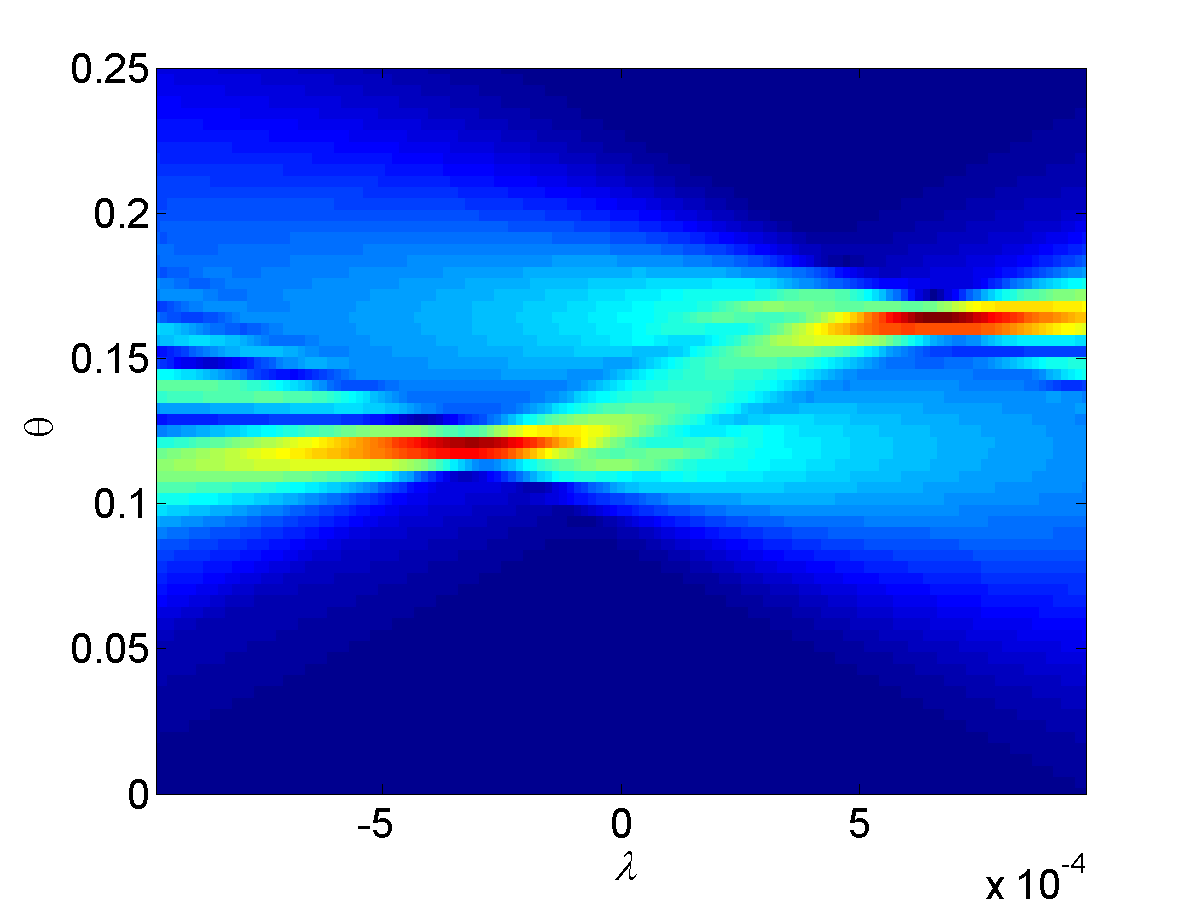}} \quad &
			\resizebox{1.6in}{1.2in}{\includegraphics{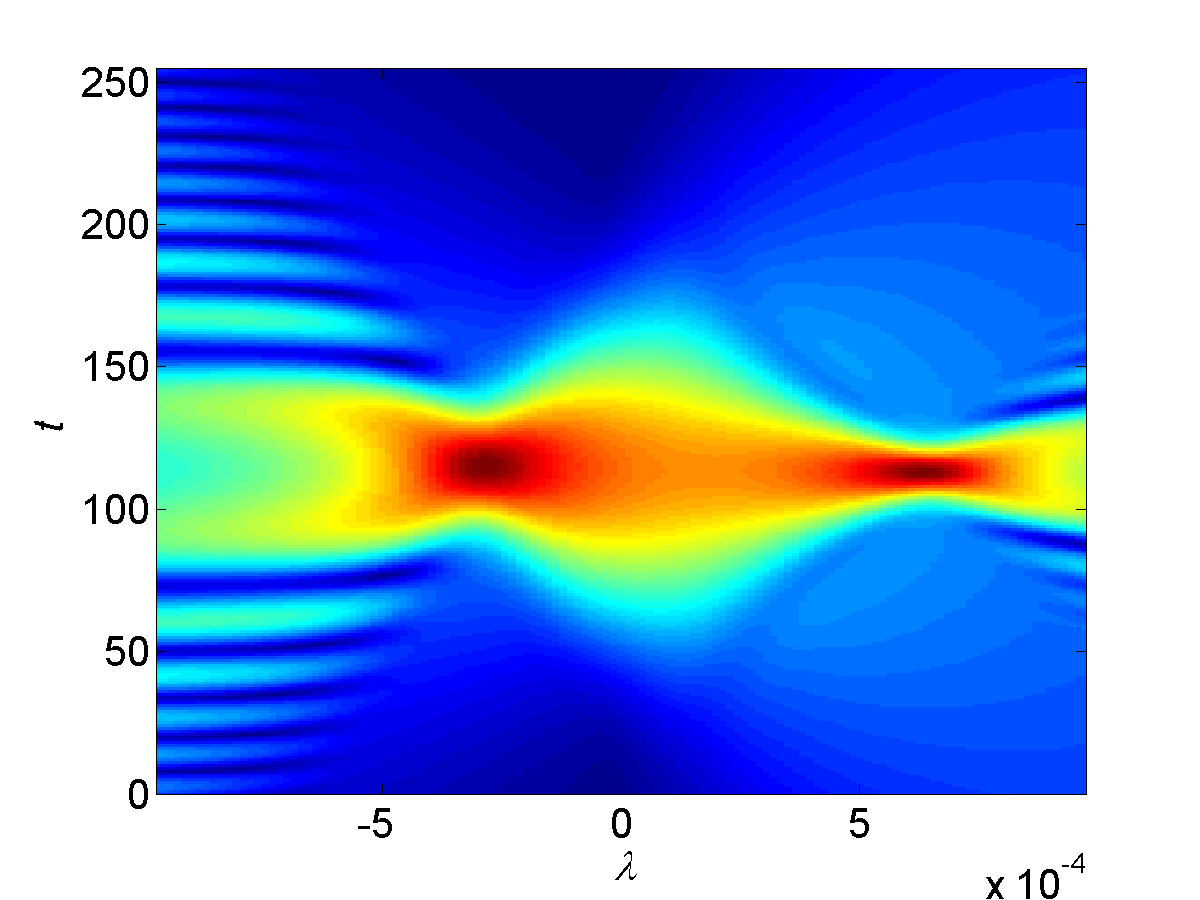}} \quad &
			\resizebox{1.6in}{1.2in}{\includegraphics{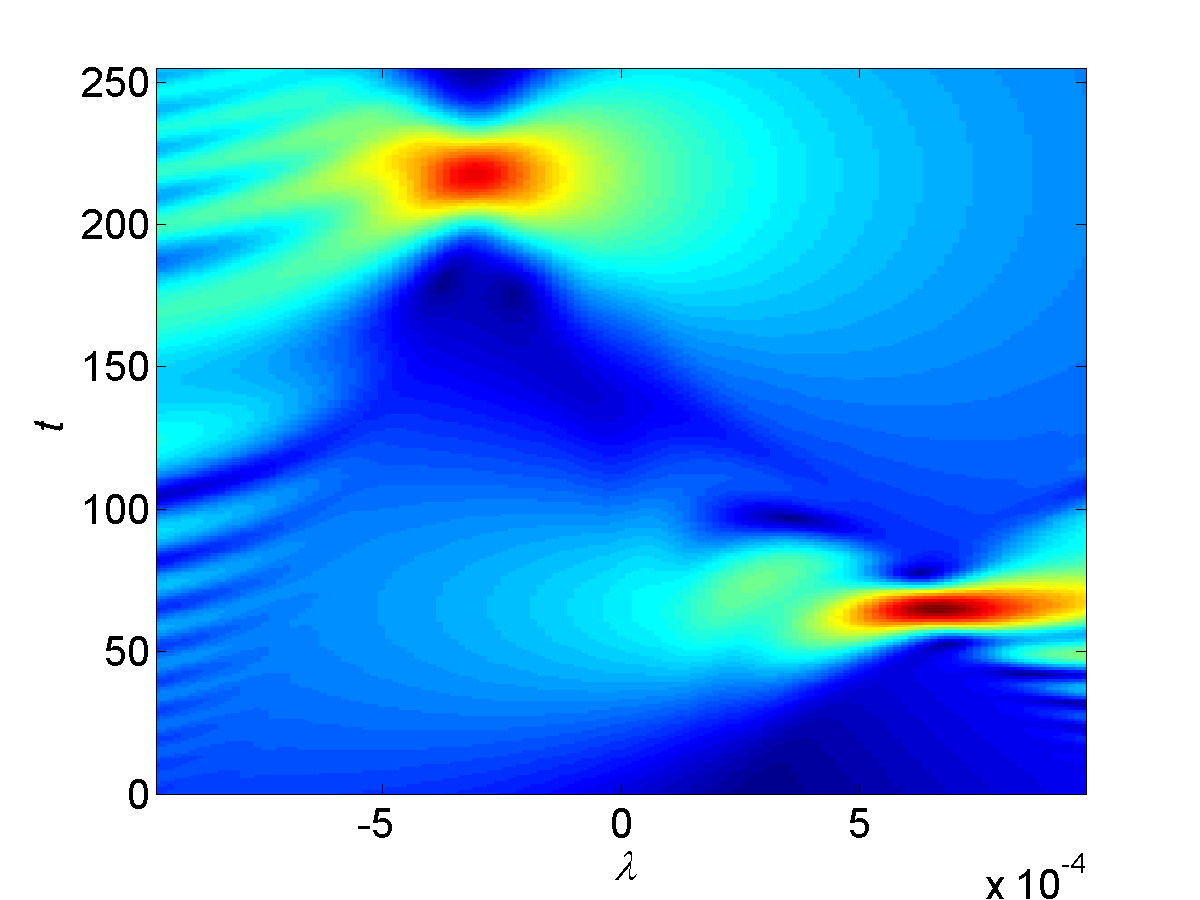}}
		\end{tabular}
	\caption{\small IFs, STFT, and some slices of the CT of the two-component signal $s(t)$ in \eqref{two_component_lfm}.
			Top row (from left to right): IFs, STFT $|V_x(t, \eta)|$, $|\fs_s(t,\eta,r_1)|$ and $|\fs_s(t,\eta,r_2)|$;
			Bottom row  (from left to right): $|\fs_s(114,\eta,\lambda)|$, $|\fs_s(160,\eta,\lambda)|$,  $|\fs_s(t,34/N,\lambda)|$ and $|\fs_s(t,26/N,\lambda)|$.}
		\label{Fig:two_component_SSOmatrix}
	\end{figure} 
\noindent Here we let sampling rate $F_s = 1$Hz and just deal with the truncation signal on $t \in [0,255]$, with $N=256$ discrete sampling points. Especially, we consider the case $c_1 = 15/N$, $c_2 = 43/N$, $r_1 = 43/N^2$ and $r_2 = -20/N^2$.
Figure \ref{Fig:two_component_SSOmatrix} shows the IFs and the STFT of $s(t)$, and some special slices of the CT of the two-component signal $s(t)$ with the Gaussian window with $\gs=0.1N$. Observe that when $\lambda = r_1$ or $\lambda = r_2$, $s_1(t)$ and $s_2(t)$ are well represented on the two time-frequency planes $|\fs_s(t,\eta, r_1)|$ and $|\fs_s(t,\eta, r_2)|$, respectively.  
	
Now we focus on the crossing point of the two IFs of $s_1(t)$ and $s_2(t)$, which is located at around $(t_0, \eta_0)$ with $t_0=114$ and $\eta_0= 34/N$. From the 1st and 3rd panels in the bottom row of Figure \ref{Fig:two_component_SSOmatrix},  namely the crossover point, the two components appear to two separated peaks on the chirp rate-frequency and chirp rate-time planes,  respectively, but not very clearly and sharply. Taking $t=t_0$ and $\eta=\eta_0$ in \eqref{MSSO_LFM}, we have
\begin{eqnarray*}
&&\mathcal{L} (\lambda)=|\fs_s(114,\eta_0,\lambda)| \\
&&~~~~~~= \bigg| \frac 1 {2\sqrt{1+i2\pi \sigma^2(\lambda-r_1)}}\tilde s_1(t) +
 \frac 1 {2\sqrt{1+i2\pi \sigma^2 (\lambda-r_2)}}\tilde s_2(t) \bigg|\\
 &&~~~~~~\approx \frac 1 {2\sqrt [4] {1+4 \pi^2 \sigma^4(\lambda-r_1)^2}} + \frac 1 {2\sqrt [4] {1+4 \pi^2 \sigma^4 (\lambda-r_2)^2}},
\end{eqnarray*}	
where $\tilde s_k(t) = e ^{i2\pi c_k t +i\pi r_k t^2 }$ denotes the analytic signal of $s_k(t)$, $k=1,2$. 	
Note that the two parts in $\mathcal{L} (\lambda)$ above are corresponding to $s_1(t)$ and $s_2(t)$, respectively, which are centered at $\lambda=r_1=43/N^2$ and $\lambda=r_2=-20/N^2$.
Then let $t=114$ and $\lambda=r_2$, we obtain
	$$ \mathcal{T} (\eta)=|\fs_s(114,\eta, r_2)| \approx \frac{1}{2} e^{-2 \pi^2 \gs^2 (\eta-\eta_0)^2 }. $$
	Compared to $\mathcal{T} (\eta)$, $\mathcal{L} (\lambda)$ is a slowly attenuated function from the two extrema located at $\lambda=r_1$ and $\lambda=r_2$.
This explains why there are two components which not well separated in either the 1st or the 3rd panel in the bottom row of Figure \ref{Fig:two_component_SSOmatrix}, while there is only one component in the 4th panel of the top row.

	 	
	To solve the above problem, we first consider the popular integral transform, namely Radon transform \cite{Beylkin_87}. Since $\fs_s(t,\eta,\lambda)$ is a three dimensional function, we may use the 3D Radon transform as that in \cite{Averbuch03} to detect the signal components, which is similar to the case of two dimensional Radon transform in \cite{Chui_98}.

For each time $t$, with a pair of angles $(\varphi_1,\varphi_2)$,  the 3D Radon transform is an  integral transform along the lines in the 3D space,
\begin{equation*}
\begin{aligned}
	R&(t',\varphi_1,\varphi_2)
= \int_{-\infty}^{\infty}\int_{-\infty}^{\infty}\int_{-\infty}^{\infty}
	\fs_s(t,\eta,\lambda)\delta (t'-\eta \sin \varphi_1 \cos \varphi_2 + \eta \sin \varphi_1 \sin \varphi_2+ t \cos \varphi_1) dt d\eta d \lambda,\\
\end{aligned}
	\end{equation*}
where $\delta$ is a  Dirac's delta function.
	
	If all the components of a multicomponent signal are linear chirps, then they can be well represented in the 3D space of $R(t',\varphi_1,\varphi_2) $. However, the computational burden is heavy for this 3D Radon transform. Moreover, a nonstationary signal is approximated by linear chirps just for local time around $t$.
	
Considering the relation among $t$, $\eta$ and $\lambda$ in $\fs_s(t,\eta,\lambda)$,   	
	namely $\phi'(t+u) = \eta +\lambda u, u\in [-b, b]$, where $b>0$ is small enough,
we introduce a time-frequency
filter operator $\mathcal{F}$ on $\fs_x$ to make different sub-signals more distinguishable in the 3D space of time, frequency and chirp rate.

For a multicomponent signal $x(t)$, the time-frequency filter-matched CT is defined by
			\begin{equation}
			\label{def_FLQFT}
			\mathcal{F}^h\left(\fs_x\right)(t,\eta,\lambda) = \frac 1 b \int_{\RR}  
			h\big(\frac ub\big) \left|\fs_x(t+u,\eta +\lambda u,\lambda)\right | du,
			\end{equation}
where $h(t)$ is a fast decay window function with $h(t)\ge 0$ and $\int_\RR h(t) dt=1$.
In the following experiments, in order to simplify calculations, we will use a rectangular window with width of $2b$.
	
\begin{figure}[H]
		\centering
		\begin{tabular}{ccc}
			\resizebox{1.6in}{1.2in}{\includegraphics{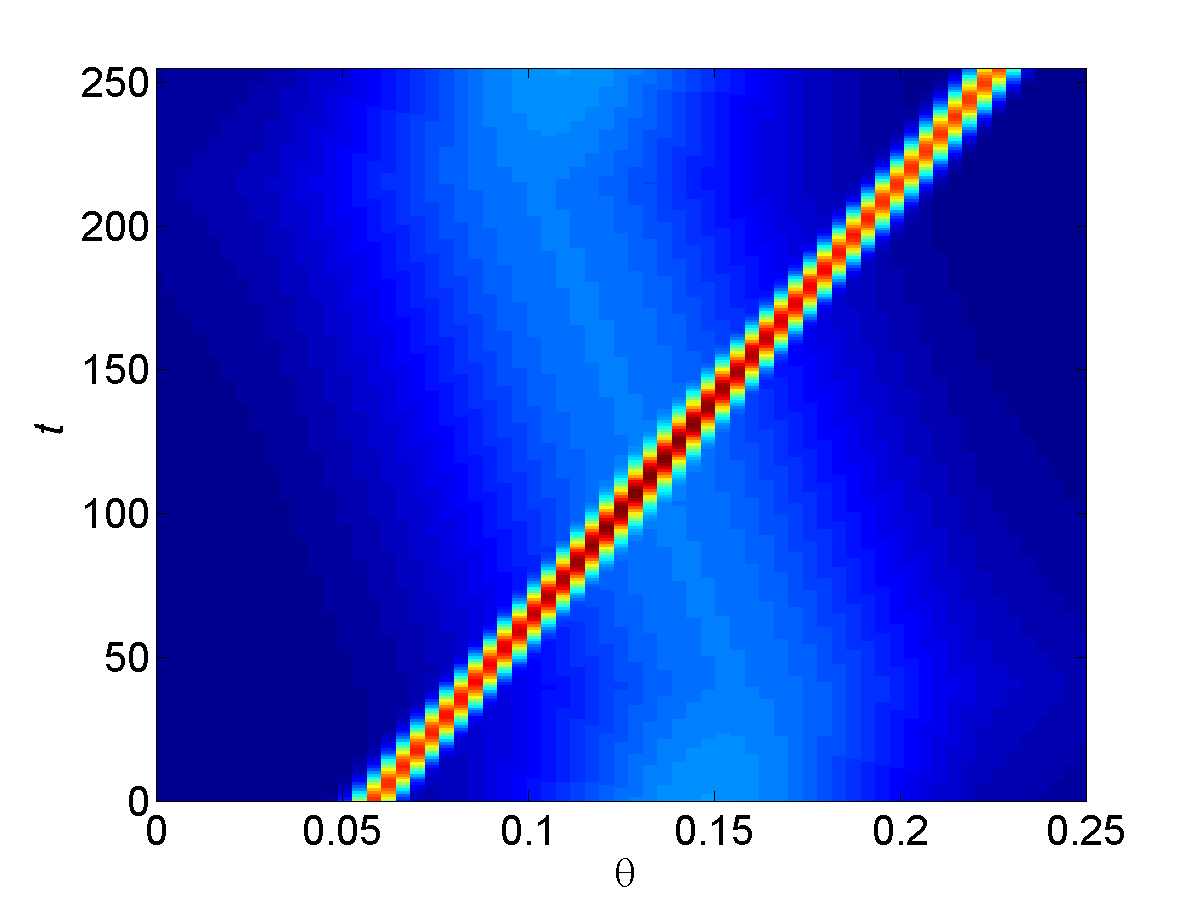}} \quad &
			\resizebox{1.6in}{1.2in}{\includegraphics{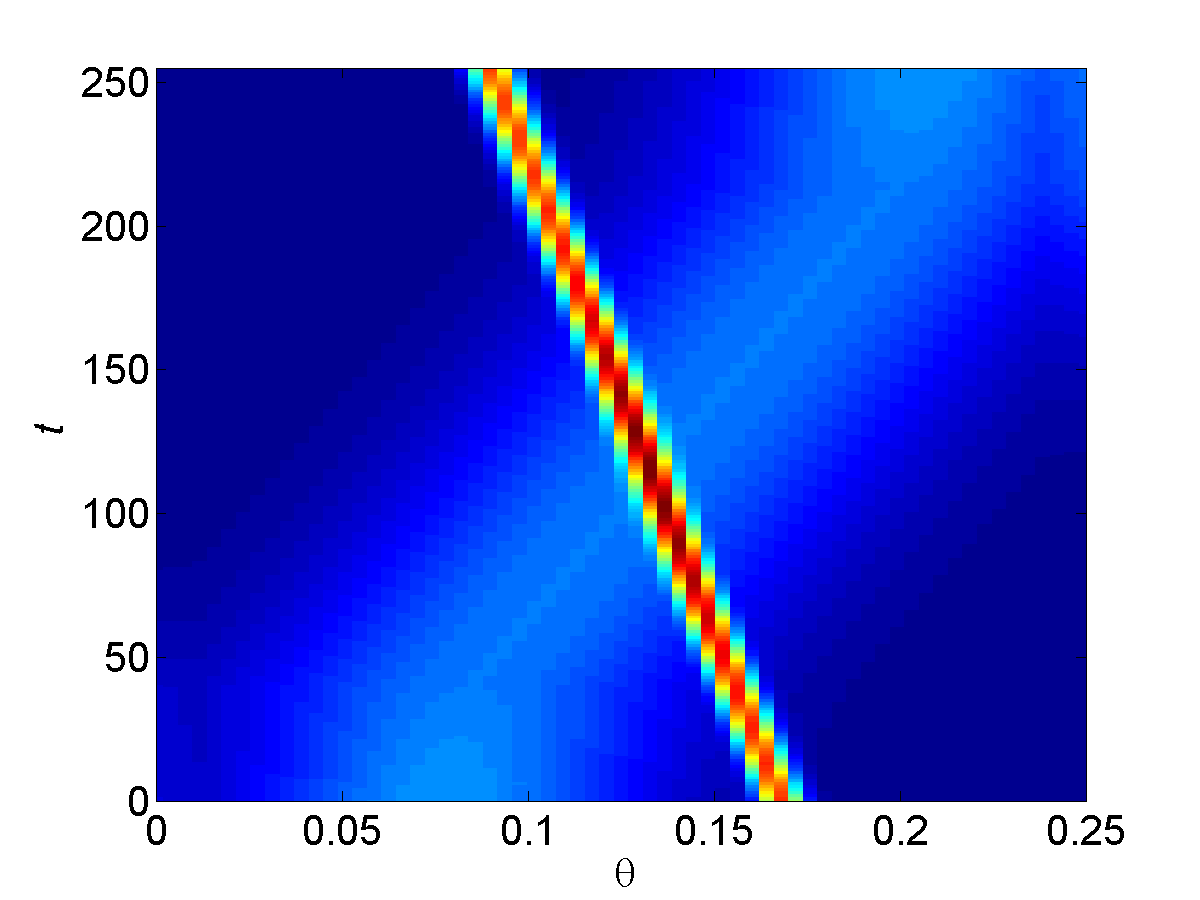}} \quad &
			\resizebox{1.6in}{1.2in}{\includegraphics{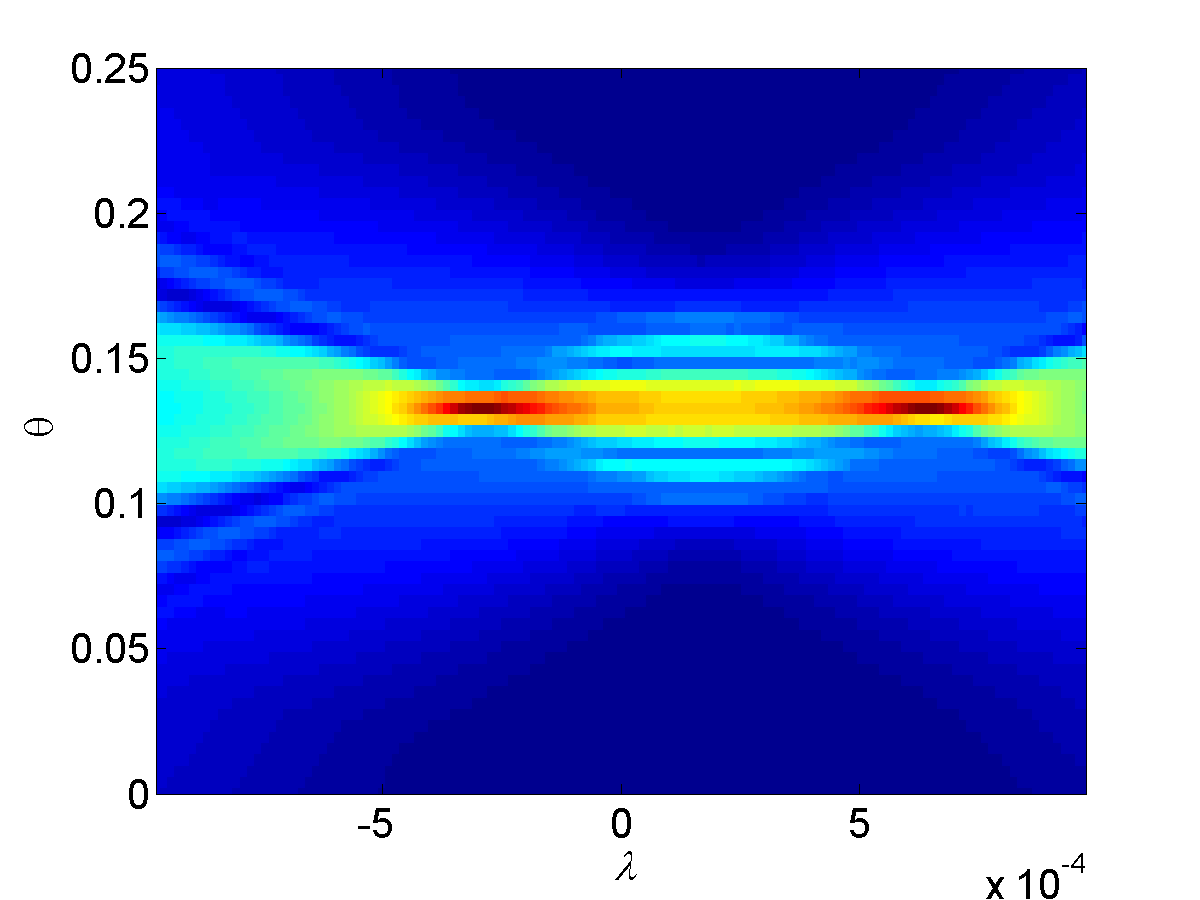}} \\
			\resizebox{1.6in}{1.2in}{\includegraphics{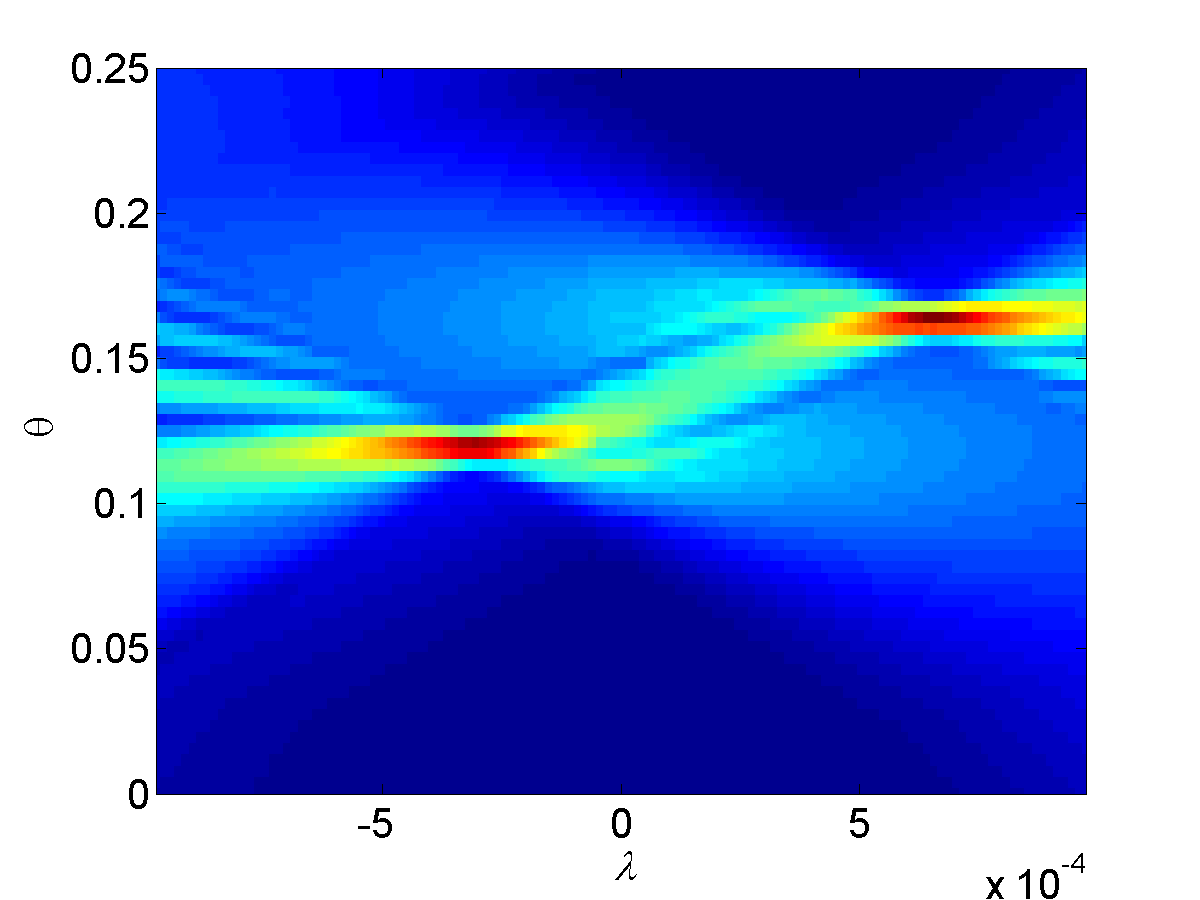}} \quad &
			\resizebox{1.6in}{1.2in}{\includegraphics{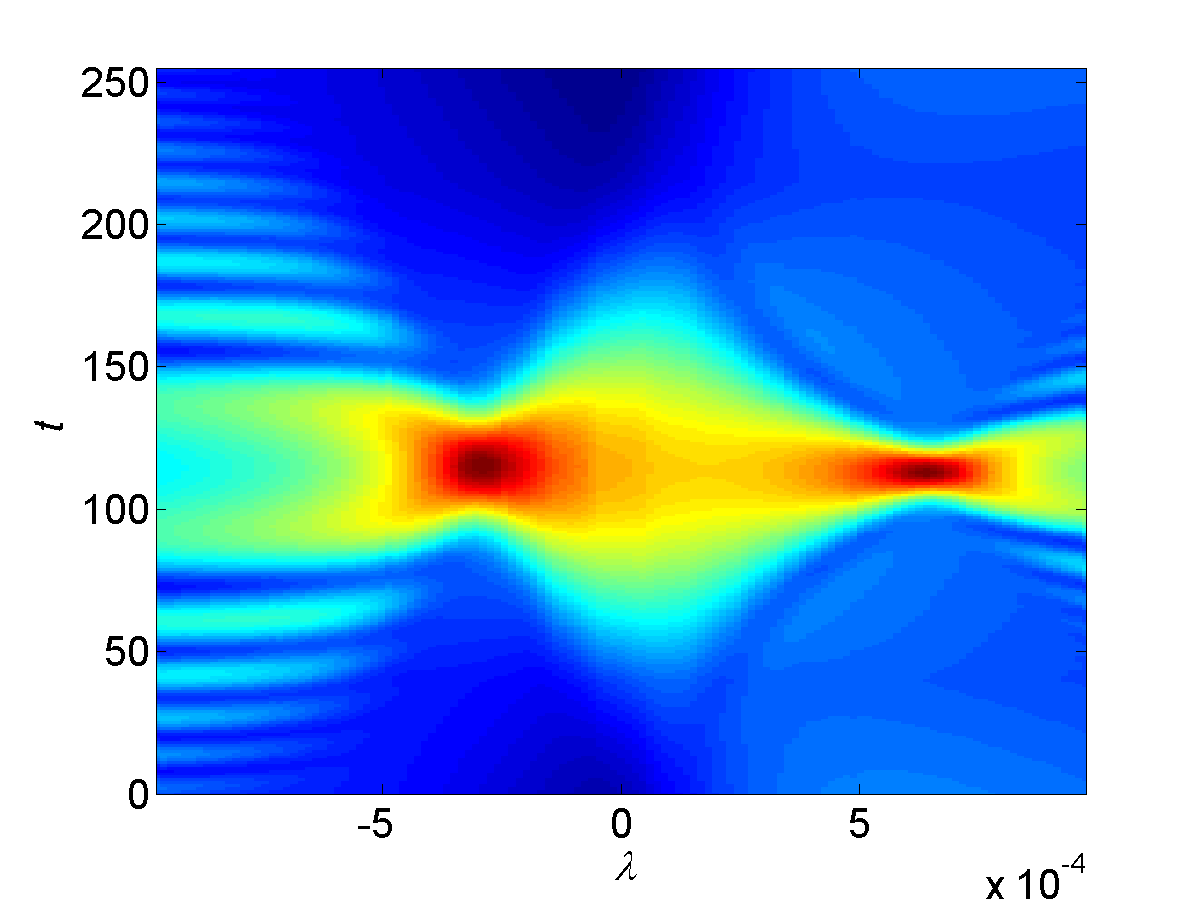}} \quad &
			\resizebox{1.6in}{1.2in}{\includegraphics{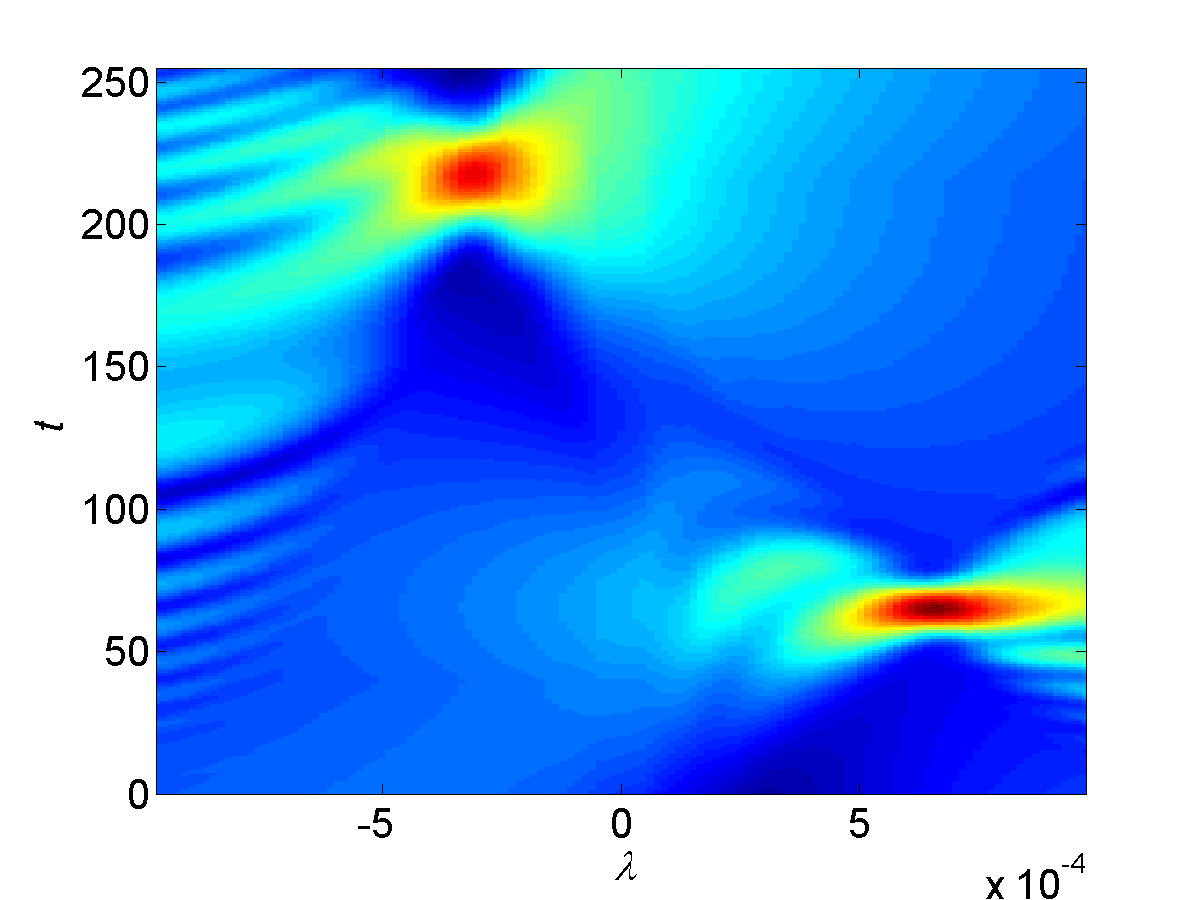}}
		\end{tabular}
\caption{\small Slices of filter-matched CT defined by \eqref{def_FLQFT} corresponding to the slices in Figure \ref{Fig:two_component_SSOmatrix}.
			Top row (from left to right): $|\mathcal F^h(\fs_s)(t,\eta,r_1)|$, $|\mathcal F^h(\fs_s)(t,\eta,r_2)|$ and $|\mathcal F^h(\fs_s)(114,\eta,\lambda)|$;
			Bottom row  (from left to right):  $|\mathcal F^h(\fs_s)(160,\eta,\lambda)|$,  $|\mathcal F^h(\fs_s)(t,34/N,\lambda)|$ and $|\fs_s(t,26/N,\lambda)|$.}
		\label{Fig:two_LFM_FSSO}
	\end{figure}	

Consider again the 
signal in \eqref{two_component_lfm}, Figure \ref{Fig:two_LFM_FSSO} shows the slices of the 
filter-matched CT defined by \eqref{def_FLQFT} with Gaussian window and $b=20$ (discrete points, unitless). Observe that by comparing the corresponding pictures in Figure \ref{Fig:two_component_SSOmatrix} and Figure \ref{Fig:two_LFM_FSSO}, the filter-matched CT indeed improves the separability of the two components when their IFs are crossover.

For $x\in \mathcal A_\ga$, we will use the filter-matched CT  to estimate $\phi^\gp_\ell(t), \phi^{\gp\gp}_\ell(t)$ by
\begin{equation}
\label{IF_estimate_filter}
\phi^\gp_\ell(t)\approx \wh \eta_\ell^h(t), \phi^{\gp\gp}_\ell(t)\approx \wh \gl_\ell^h(t)
\end{equation}
with  
\begin{equation}
\label{IF_estimate_FSSO}
(\wh \eta^h_\ell(t), \wh \lambda^h_\ell(t))= {\rm argmax}_{\eta,\lambda \in \cG_\ell(t)}\mathcal F^h(\fs_x )(t,\eta,\lambda),
\end{equation}
where $\cG_\ell(t)$ is defined in Theorem \ref{theo:LQFT}. We call $(t, \wh \eta^h_\ell(t), \wh \lambda^h_\ell(t))$ the ridge of the filter-matched CT.

With the resulting $\wh \eta_\ell^h(t), \wh \lambda_\ell^h(t)$, we may use the CT (namely \eqref{MSSO_recon_comp} or \eqref{MSSO_recon_real}) to recover sub-signal $x_\ell(t), 1\le \ell\le K$. Observe that the recovering formula \eqref{MSSO_recon_comp} or \eqref{MSSO_recon_real} recovers $x_\ell(t)$ one by one.
Next, by considering all $x_\ell(t), 0\le \ell \le K$ as a whole group,
we propose an innovative algorithm based on the filter-matched CT to recover sub-signals $x_\ell(t)$.

Recall that when $g$ is the Gaussian window given in \eqref{def_g}, then \eqref{MSSO_LFM} holds. Thus for $x\in \mathcal A_\ga$, we have
\begin{equation*}
\fs_{x}(t,\eta,\lambda)
\approx \sum_{\ell=0}^K  x_\ell(t) \mathbb{A}(\lambda-\phi^{\gp\gp}_\ell(t))
		\Omega(\lambda -\phi^{\gp\gp}_\ell(t),\eta-\phi^\gp_\ell(t)).
\end{equation*}
Hence if $\wh \eta^h_\ell(t), \wh \lambda^h_\ell(t), 1\le \ell\le K$ obtained from the filter-matched CT by \eqref{IF_estimate_FSSO} are good approximations to
 $\phi^\gp_\ell(t), \phi^{\gp\gp}_\ell(t)$, then, with $\wh \eta_0=\wh \gl_0=0$, we have
 $$
\fs_{x}(t,\eta,\lambda)
\approx \sum_{\ell=0}^K  x_\ell(t) \mathbb{A}(\lambda-\wh \gl_\ell^h(t))
		\Omega(\lambda -\wh \gl_\ell,\eta-\wh \eta_\ell^h(t)).
 $$
 In particular, with $\eta=\wh \eta_m^h(t), \gl=\wh \gl_m^h(t)$,
 we have
 \begin{equation}
 \label{group_est}
\fs_{x}(t,\wh \eta_m^h(t), \wh \lambda_m^h(t))
\approx \sum_{\ell=0}^K  a_{m, \ell} x_\ell(t),  \quad m=0, 1, \cdots, K,
 \end{equation}
 where
 \begin{equation}
 \label{a_lm}
 a_{m,\ell}=\mathbb{A}\big(\wh \gl_m^h(t)-\wh \gl_\ell^h(t)\big)\;
		\Omega\big(\wh \gl_m^h(t) -\wh \gl_\ell^h(t), \wh \eta_m^h(t)-\wh \eta_\ell^h(t)\big).
 \end{equation}
Denote 
\begin{equation}
\label{def_A}
\mathbf{A}_t 
=\begin{bmatrix}
					1 & a_{0,1} & \cdots & a_{0,K} \\
					a_{1,0} & 1 & \cdots & a_{1,K} \\
					\vdots  & \vdots  & \ddots & \vdots  \\
					a_{K,0} & a_{K,1} & \cdots & 1
					\end{bmatrix}, 
					\quad 
					\mathbf{X}_t =
					\begin{bmatrix}
					x_0(t)\\
					x_1(t)\\
					\vdots\\
					x_K(t)
					\end{bmatrix}, \quad 					
					\wh{\mathbf{X}}^h_t =
					\begin{bmatrix}
					\fs_x\big(t,\wh \eta_0^h(t), \wh \lambda_0^h(t)\big)\\
					\fs_x\big(t,\wh \eta_1^h(t), \wh \lambda_1^h(t)\big)\\
					\vdots\\
					\fs_x\big(t,\wh \eta_K^h(t), \wh \lambda_K^h(t)\big)
					\end{bmatrix}. 
\end{equation}
 Then \eqref{group_est} can be written as 
 $$
 \wh{\mathbf{X}}^h_t \approx \mathbf{A}_t  \mathbf{X}_t.  
 $$
 Hence 
  \begin{equation}
  \label{group_relation}
   \mathbf{X}_t  \approx \mathbf{A}_t^{-1}\;  \wh{\mathbf{X}}^h_t. 
\end{equation}
Based \eqref{group_relation}, we propose the following algorithm, called the group filter-matched chirplet transform-based signal separation scheme (GFCT3S), to recover components $x_\ell(t)$ and trend $x_0(t)$.

 \begin{alg}
\label{alg:Recov_cross_SLQFT}
{\rm (Mode retrieval with GFCT3S)}. Let $x(t)$ be a multicomponent signal $x\in \mathcal A_\ga$ satisfying \eqref{def_sep_cond_cros} and $\fs_x (t,\eta,\lambda)$ be the CT of $x$ with the Gaussian window function.
\begin{itemize}
\setlength{\itemindent}{1.5em}
\item[{\rm Step 1.}] Set $\wh \eta_0^h=\wh \gl_0^h=0$. Calculate $\wh \eta_\ell^h, \wh \lambda_\ell^h, 1\le \ell\le K$ by \eqref{IF_estimate_FSSO} with the filter-matched CT. 

\item[{\rm Step 2.}]
Calculate
			\begin{equation}
			\label{MSSO_recon_cross}
			\wt {\mathbf{X}}_t= \mathbf{A}^{-1}_t \; \wh{\mathbf{X}}^h_t, 
			\end{equation}
where $\mathbf{A}_t$ and $\wh{\mathbf{X}}^h_t$  are defined by  \eqref{def_A}. 

\item[{\rm Step 3.}] The components $\wt x_\ell(t) $ of $\wt{\mathbf{X}}_t =: [\wt x_0(t), \wt x_1(t), \cdots, \wt x_K(t)]^T$ are the recovered sub-signals $x_\ell(t)$ of $x(t)$.
		\end{itemize}
		
		\end{alg}

We first apply the CT to the blind source signal $x(t)$, with output $\fs_x(t,\eta,\lambda)$ for thresholding with the appropriate parameter $\mu$ that can be determined by the energy distribution of 
$|\fs_x(t,\eta,\lambda)|$. This allows us to separate the thresholded set $\{(\eta,\lambda)  : |\fs_x(t, \eta,\lambda)| \geq \mu/2\}$ into different clusters $\mathcal {G}_{\ell}(t)$ $(\ell=0,\cdots, K)$. Note that $\Omega(\gl,\eta)$ is a fast decay function with respect to $\eta$. 
If all the sub-signals $x_k(t)$ are well-separated in the time-frequency plane, namely satisfy \eqref{def_sep_cond}, then $a_{\ell,m}\approx 0$ for $\ell\ne m$, and hence, $\mathbf{A}_t$ in \eqref{MSSO_recon_cross} 
is essentially the identity matrix.
So the reconstruction algorithm above is fit for both cases in \eqref{def_sep_cond} and \eqref{def_sep_cond_cros}.

Finally, we discuss the algorithm of real-time processing of an consecutive input signal. To reduce the computational cost, we need to predefine some variables. Let $T\in \ZZ^+$ denote the truncation length of the input signal, e.g. $T=128$ or $T=256$, then frequency $\eta$ is  discretized into $\eta = 0,1/T,...,(T/2-1)/T$ and $\lambda$ is  discretized into $\lambda = -(T/2-1)/T^2,-(T/2-2)/T,...,0,...(T/2-1)/T^2$. Define ${\bf S}^{t_m}_{T/2 \times (T-1)} = \fs_x (t_m,\cdot,\cdot) $, where $t_m$ is a fixed time. Suppose we will separate $x$ when $t = t_m$, then we need to calculate ${\bf S}^{t_m-L_{m}},\cdots,{\bf S}^{t_m+L_{m}}$ first, where $2L_{m}+1$ should equivalent to the window length $b$ in \eqref{def_FLQFT} and  $2L_{m}+1 \ll T$. Then for the next time instant $t = t_m+1$, we just need to calculate ${\bf S}^{t_m+1+L_{m}}$, and continue the procedure.

	\begin{figure}[H]
	\centering
	\begin{tabular}{cc}
		\resizebox{3.14in}{1.2in}{\includegraphics{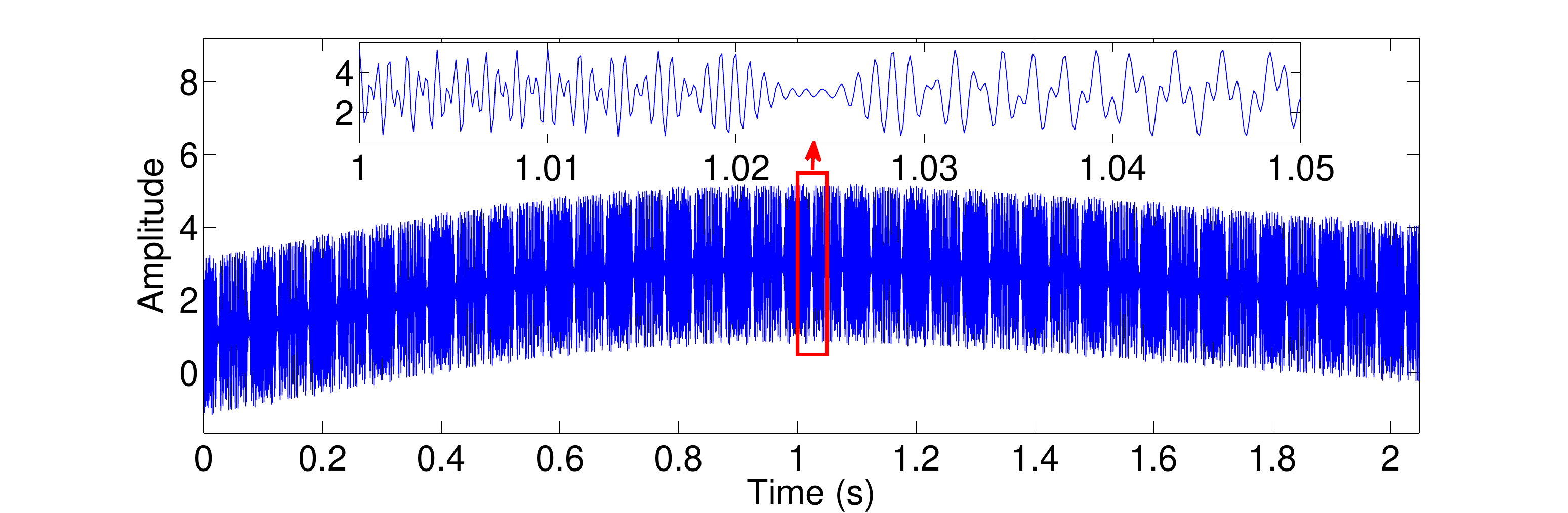}}
		\resizebox{3.14in}{1.2in}{\includegraphics{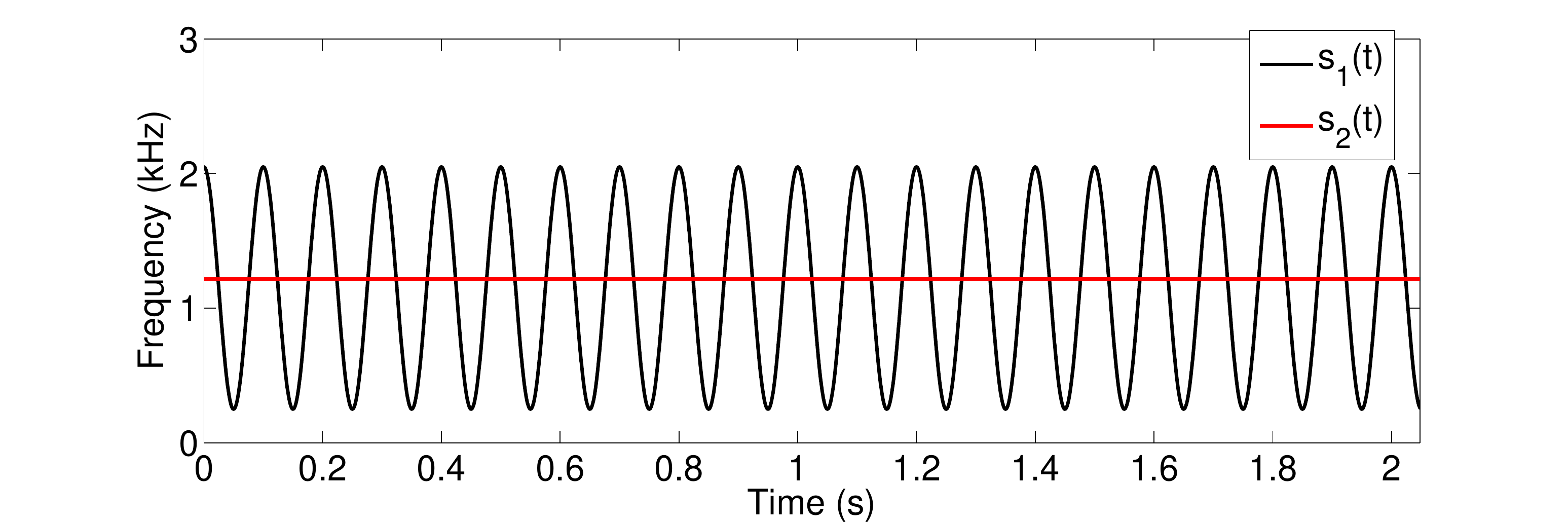}} 		
	\end{tabular}
	\caption{\small Waveform of $s(t)$ in \eqref{s_t} with enlarged picture around $t=1$ and IFs of two AM-FM components. }
	\label{Fig:exp1}
\end{figure}

\begin{figure}[H]
	\centering
	\begin{tabular} {cc}  
		\resizebox{3.14in}{1.2in}{\includegraphics{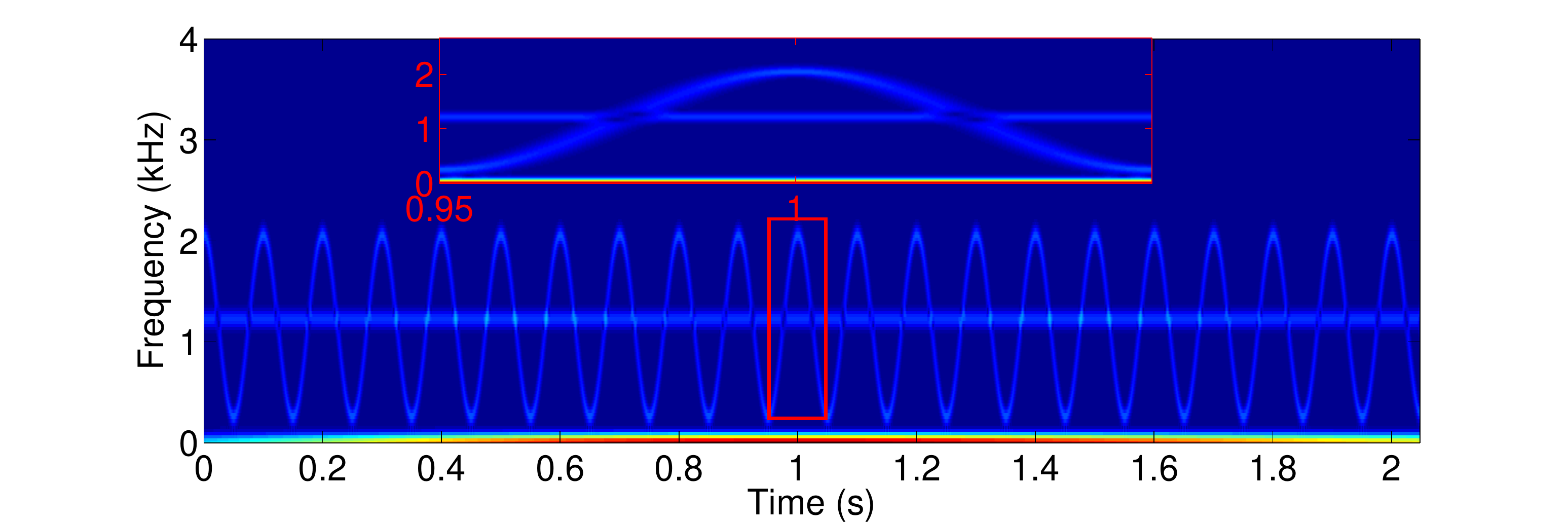}}
		\resizebox{3.14in}{1.2in}{\includegraphics{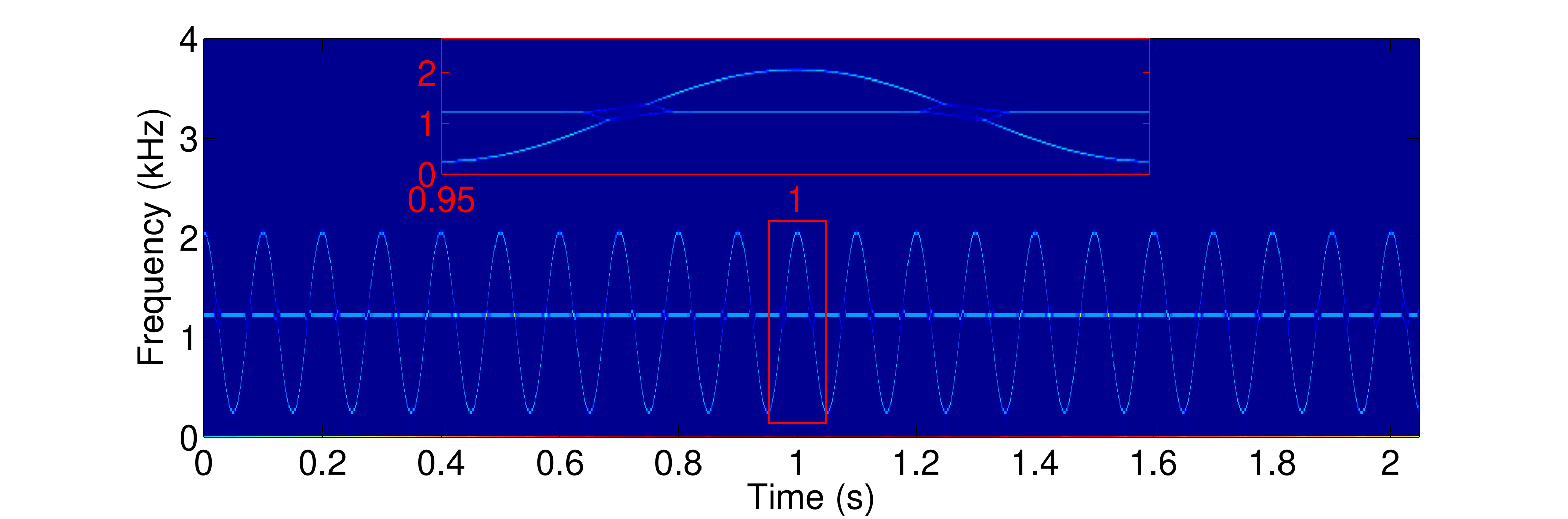}} \\
		\resizebox{3.14in}{1.2in}{\includegraphics{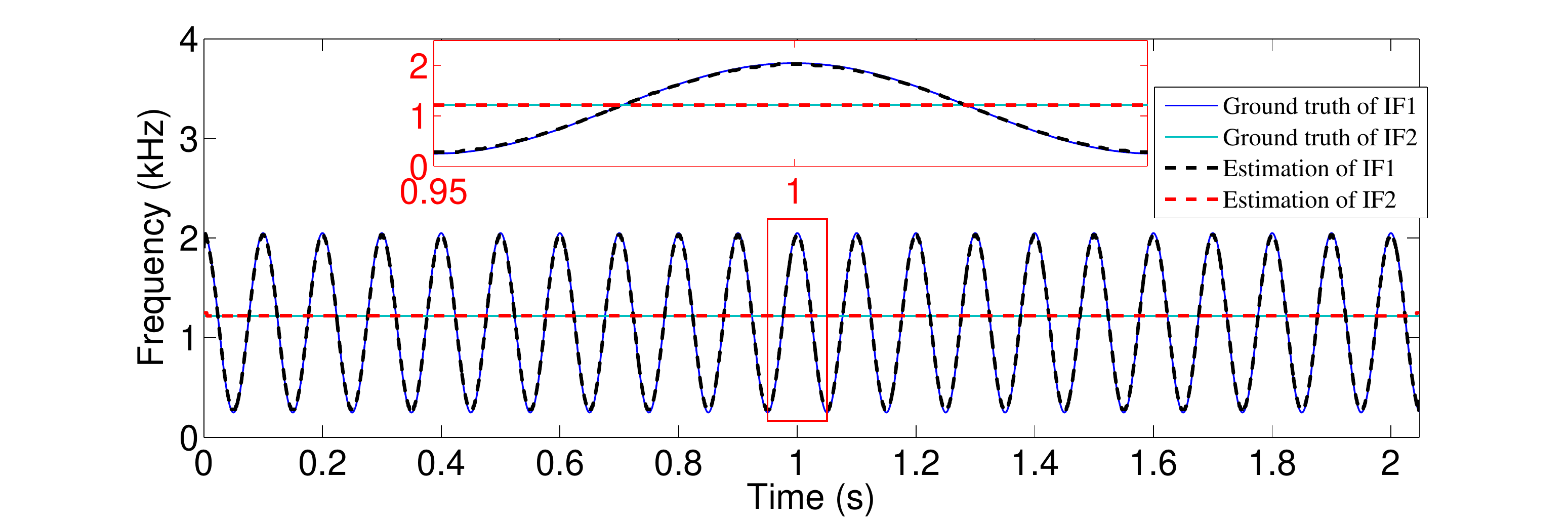}}		
	\end{tabular}
	\caption{\small Recovered IFs of $s(t)$ in \eqref{s_t} with SST and ridge of the filter-matched CT (with enlarged picture around $t=1$). Top row (from left to right): Time-frequency diagram of STFT, time-frequency diagram of the 2nd-order SST in \cite{MOM15}; Bottom row: Estimated IFs by ridges of the filter-matched CT given in \eqref{IF_estimate_FSSO}}.
	\label{Fig:compar_SST_MSSO}
\end{figure} 

\section {Numerical experiments}

Figure \ref{Fig:two-component} demonstrates our proposed method GFCT3S (Algorithm 1) 
is efficient for the two-component signal in
\eqref{two_component} with one cross point of the IFs.
In this section, we first consider another synthetic multicomponent signal $s(t)$, given as
\begin{equation}\label{s_t}
\begin{array}{l}
s(t) =s_1(t) +s_2(t) +A_0(t) = 1.2\cos(2300\pi t +90 \sin(20\pi t)) + \cos(2438\pi t)+(1+(t^2+t)e^{1-t^{1.5}}),
\end{array}
\end{equation} 
where $t>0$. Note that the IFs of $s_1(t)$ and $s_2(t)$ are $\phi'_1(t)=1150+900\cos(20\pi t)$ and $\phi'_2(t)=1219$, respectively, called IF1 and IF2 in Figure \ref{Fig:compar_SST_MSSO}.

In this experiment, we discretize $s(t)$ with a sampling rate $F_s = 8$kHz and data length $M= 2^{14}$, namely $t\in [0, 2.048]$. Figure \ref{Fig:exp1} shows the waveform of  $s(t)$ and the IFs of two AM-FM components, $s_1(t)$ and $s_2(t)$. As the expression in \eqref{s_t}, $s(t)$ consists of one trend and two oscillating AM-FM components. Meanwhile, the IFs of the two AM-FM components are crossover.
	
Observe that the signal $s(t)$ to be processed contains lots of samples, which should be analyzed or separated locally. Here we use a sliding truncated Gaussian window with length of $N=2^8$ points for methods of STFT, SST and GFCT3S. Hence the frequency bins is discretized as $\frac{F_s}{N}\{-N/2+1,-N/2+2,...,N/2-1,N/2  \}$ for complex signals, or just $\frac{F_s}{N}\{0,1,...,N/2-1,N/2  \}$ for real signals. Note that we set $N$ as a power of 2 to take full advantage of the fast Fourier transform.
Figure \ref{Fig:compar_SST_MSSO} shows the time-frequency diagrams of the STFT and the 2nd-order SST in \cite{MOM15} and the recovered IFs by the ridges of the filter-matched CT given in \eqref{IF_estimate_FSSO}.
The enlarged pictures around $t=1$ are also attached with each sub-Figure. Obviously, when the IFs of $s_1(t)$ and $s_2(t)$ are crossover, either the STFT or the 2nd-order SST  hardly represents the sub-signals reliably. Thus we cannot use the time-frequency diagram of the STFT or the 2nd-order SST to extract the IFs of the sub-signals for the purpose of recovering their waveforms.
However, using the ridges $\wh \eta_\ell^h(t)$ of the filter-matched CT, we can extract the IFs of $s(t)$ accurately.

\begin{figure}[H]
	\centering
	\begin{tabular} {c}  
		\resizebox{3.3in}{1.5in}{\includegraphics{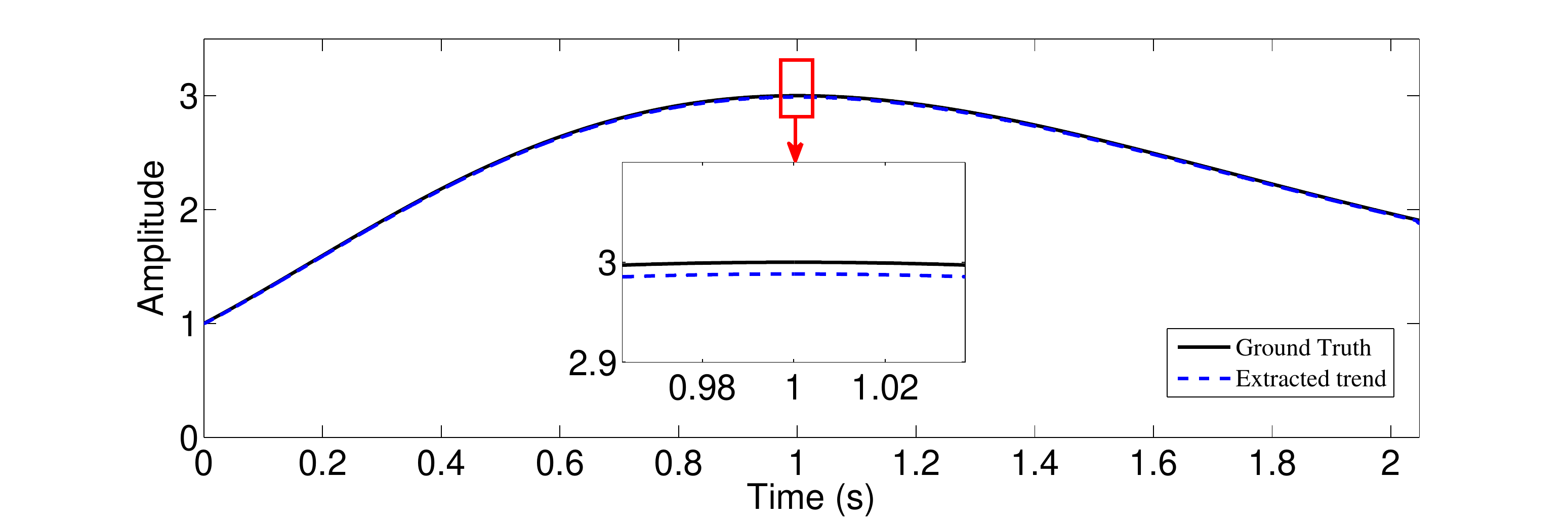}}\\
		\resizebox{3.3in}{1.5in}{\includegraphics{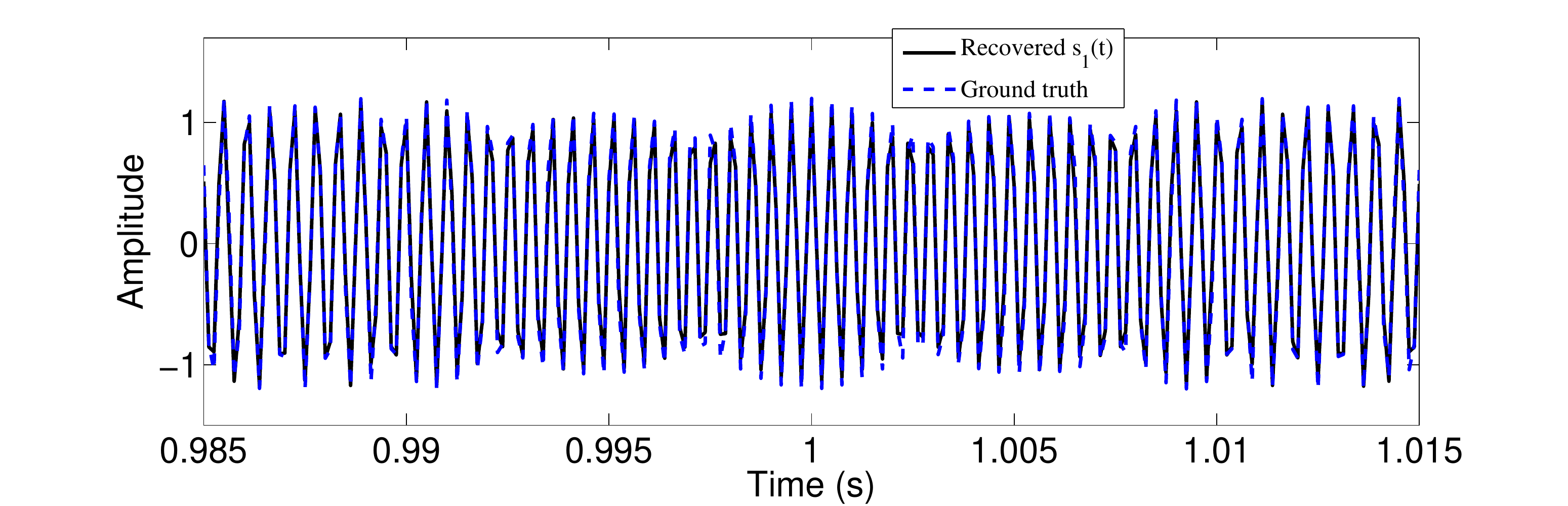}}
		\resizebox{3.3in}{1.5in}{\includegraphics{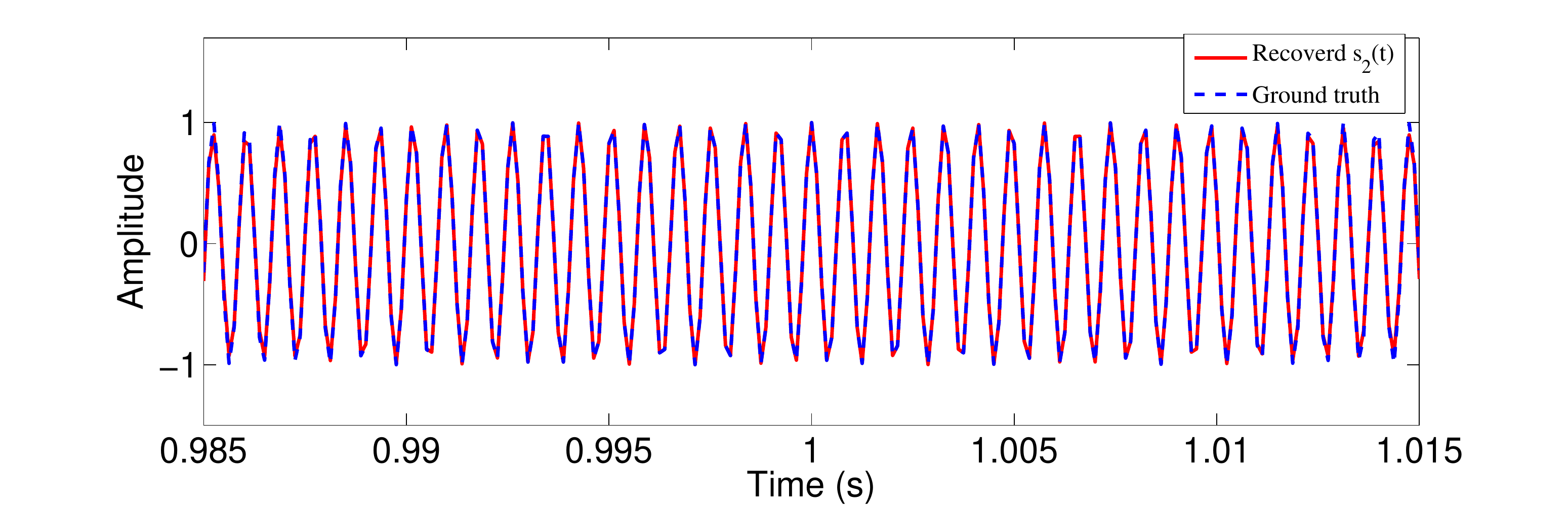}} 	
	\end{tabular}
	\caption{\small Mode recovery results of $s(t)$ in \eqref{s_t} with proposed GFTC3S (Algorithm \ref{alg:Recov_cross_SLQFT}).
	Top: Recovered trend with enlarged picture around $t=1$; Bottom: Recovered modes $s_1(t)$ (Left) and   $s_2(t)$ (Right) around $t=1$.}
	\label{Fig:recovery_s(t)}
\end{figure}
\begin{figure}[H]
	\centering
	\begin{tabular} {cc}  
		\resizebox{2.5in}{1.8in}{\includegraphics{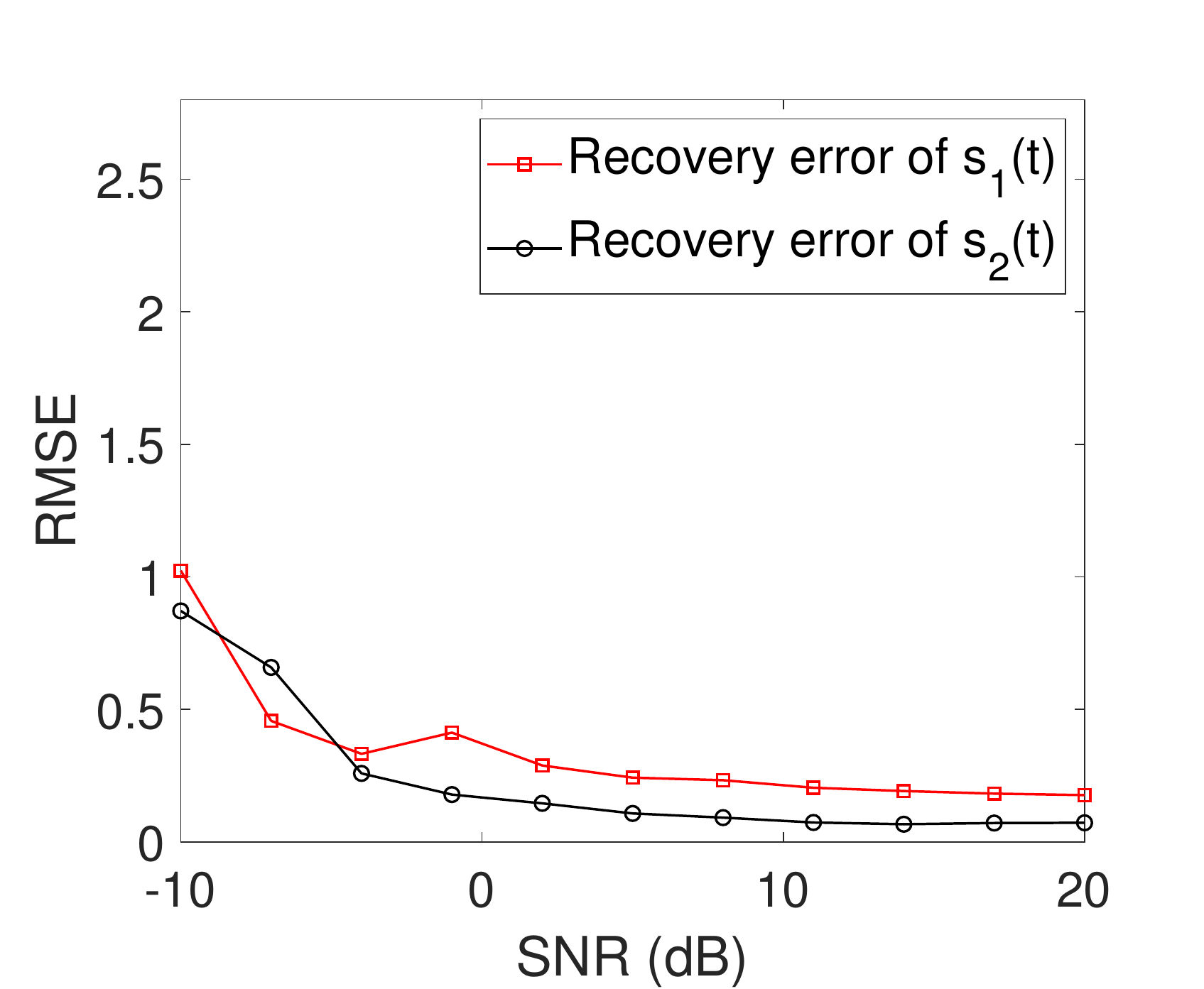}}
		\quad &\quad \resizebox{2.5in}{1.8in}{\includegraphics{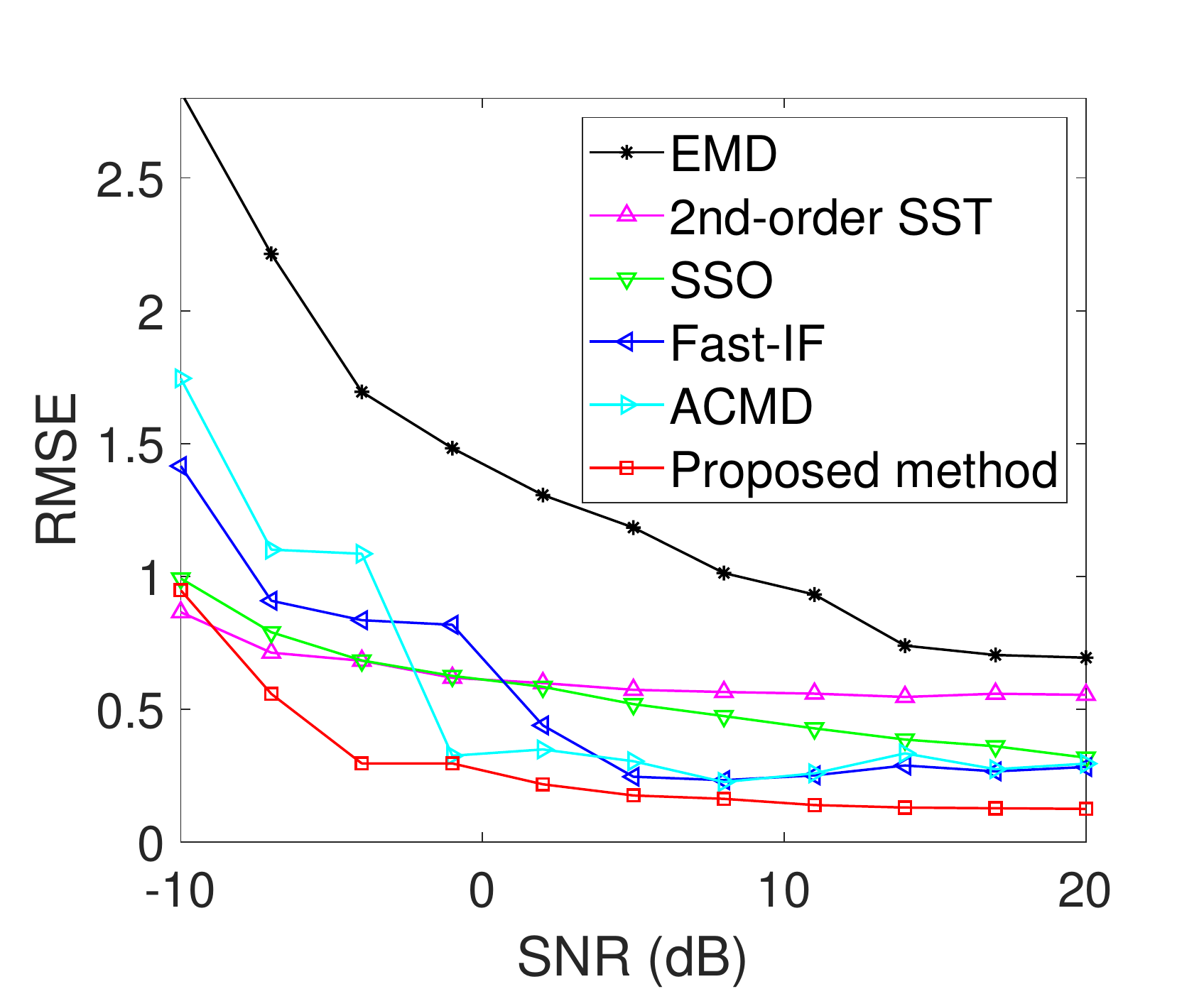}}
	\end{tabular}
	\caption{\small Recovery errors of $s(t)$ under different SNRs.
		Left:  Recovered errors of $s_1(t)$ and $s_2(t)$ by proposed GFTC3S; Right: Comparison with various methods.}
	\label{Fig:rmse_snr_s(t)}
\end{figure}

Figure \ref{Fig:recovery_s(t)} provides recovered modes $s_1(t)$ and $s_2(t)$ around $t=1$ with GFTC3S (Algorithm \ref{alg:Recov_cross_SLQFT}) proposed in Section 3.2. Since there are so many sample periods of the signal $s(t)$, we just show a small truncation around $t=1$. Observe the differences between the recovered waveforms and the truth ones are extremely small.

To demonstrate the efficiency of our computational scheme for signal data with additive noise, we compared the proposed algorithm with the following methods: EMD \cite{Huang98}, 2nd-order SST \cite{MOM15}, SSO \cite{Chui_Mhaskar15}, Fast-IF \cite{add4}, and ACMD \cite{Over3}. Among
these algorithms, Fast-IF, and ACMD are the ones that take
crossover frequencies into account. To measure the errors, we use the root-mean-square error (RMSE) defined by
\begin{equation*}
E_f = \frac{||f-\wt f||_2}{||f||_2},
\end{equation*}
where $\wt f$ is the estimation of a function $f$. The left panel of Figure \ref{Fig:rmse_snr_s(t)} provides the RMSEs of $E_{s_1}$ and $E_{s_2}$ when signal-to-noise ratios (SNRs) vary from $-10$dB to $20$dB.  Note that the recovery performance tends to be stationary when SNR is larger than 0dB. From the right panel of Figure \ref{Fig:rmse_snr_s(t)}, where the average RMSE is defined by $E_{s} = \frac{E_{s_1}+E_{s_2}}{2} $,  we find that the existing methods EMD and 2nd-order SST and SSO are hardly to recover the sub-signals $s_1(t)$ and $s_2(t)$ with crossover IFs. 
Our method outperforms Fast-IF and ACMD. As shown in this figure, our algorithm posses the best stability with respect to robustness.

\begin{figure}[H]
	\centering
	\begin{tabular} {c} 
		\resizebox{1.6in}{1.2in}{\includegraphics{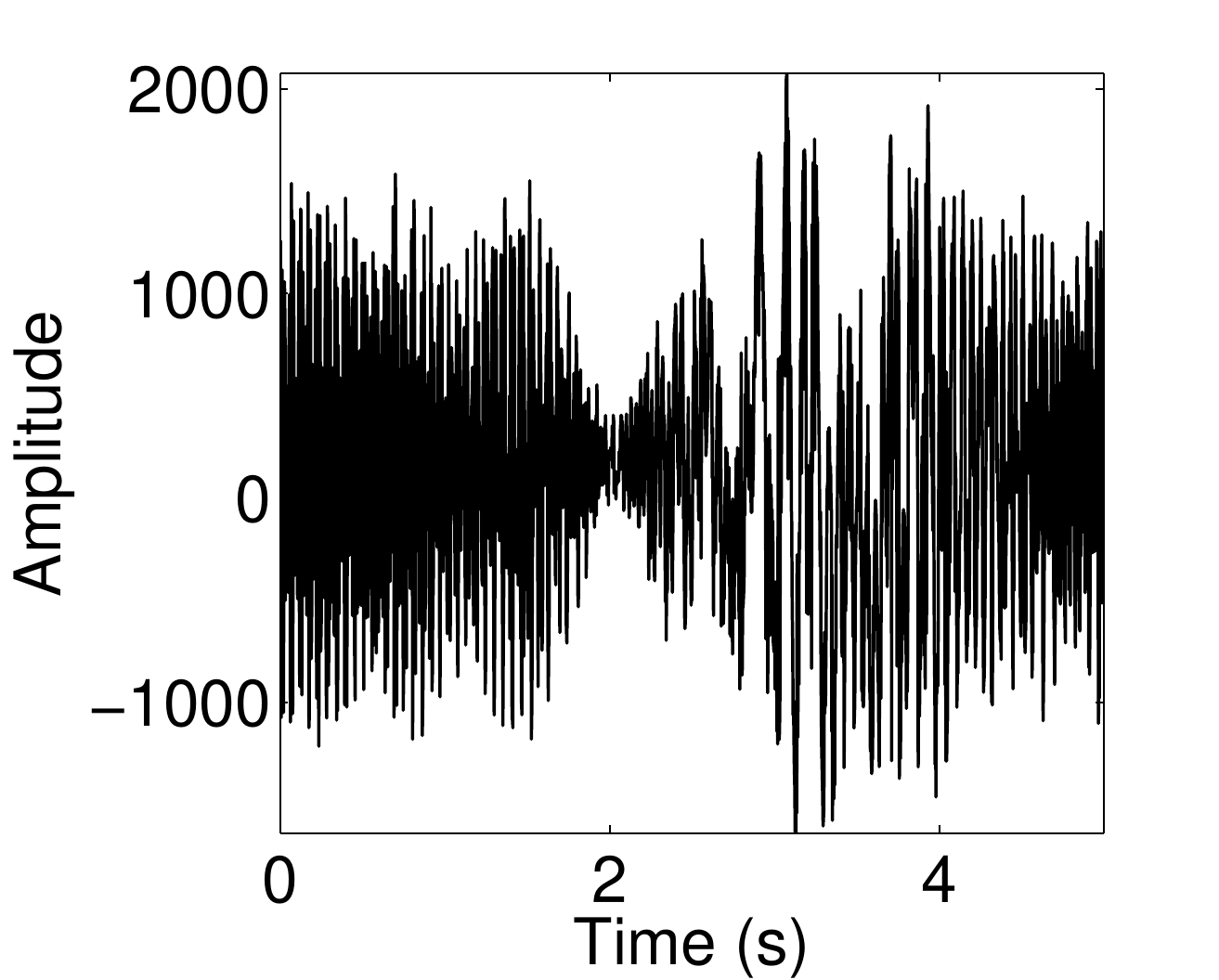}}
		\resizebox{1.6in}{1.2in}{\includegraphics{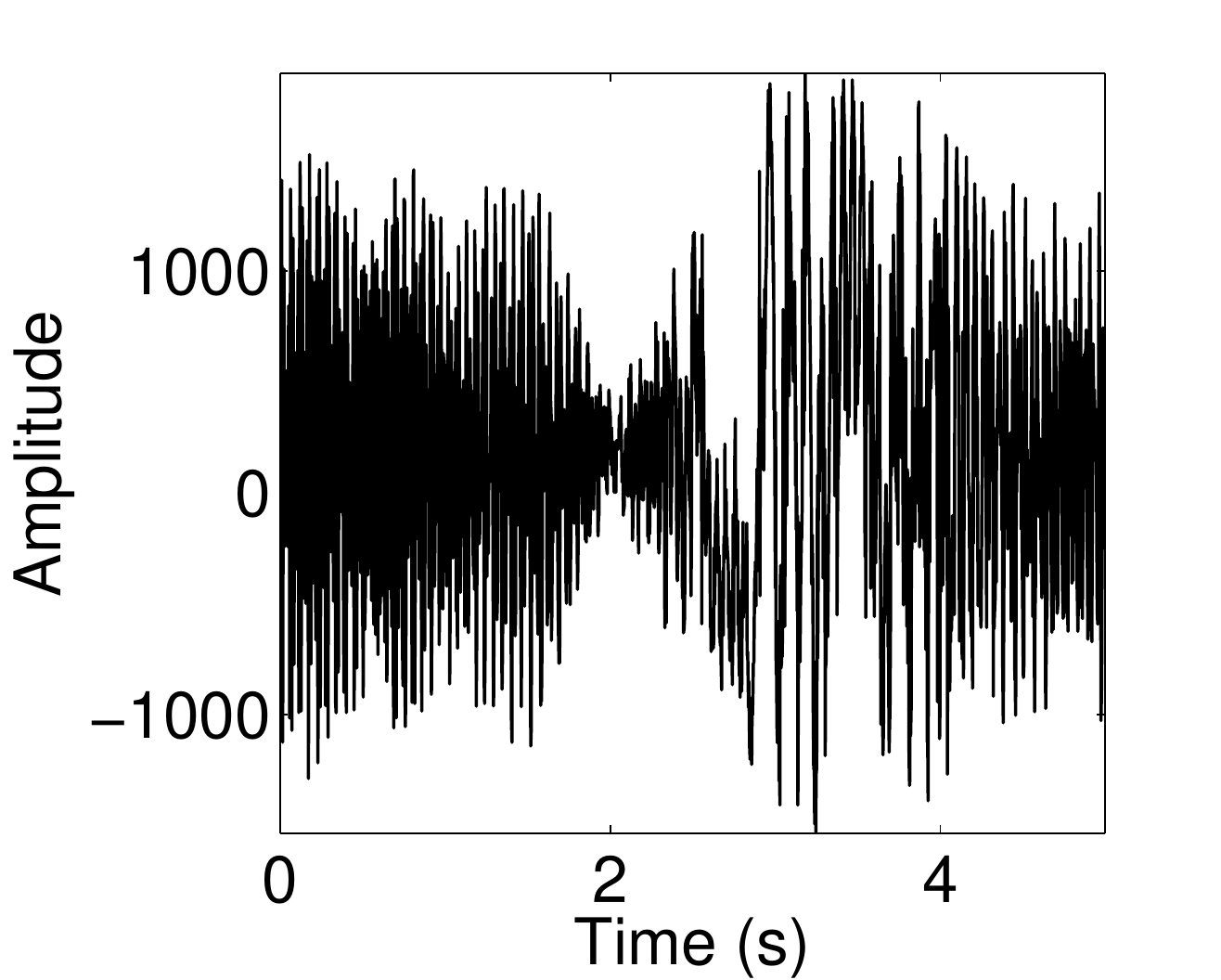}} \\
		\resizebox{3.0in}{1.2in}{\includegraphics{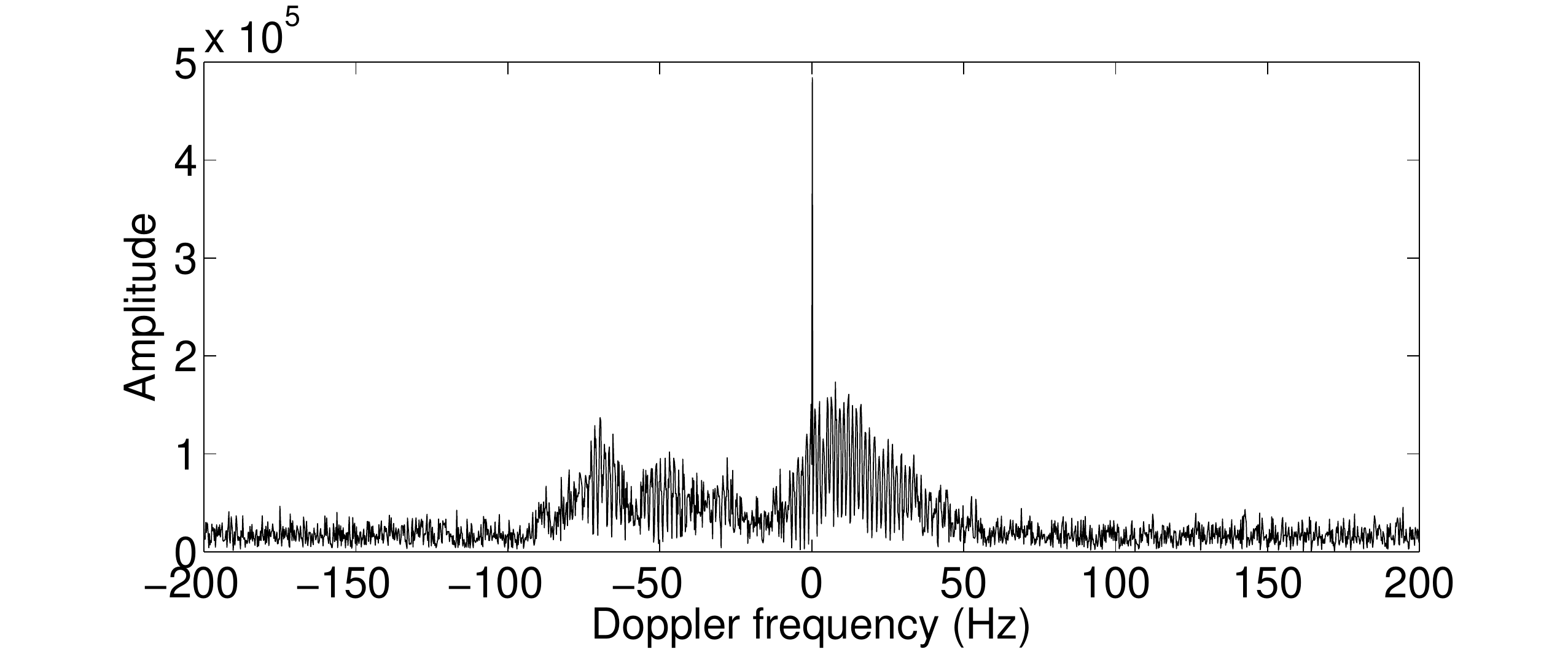}} 	\\
	\end{tabular}
	\caption{\small Waveform and spectrum of the radar echoes. Top-left: Real part of the waveform; Top-right: Imaginary part of the waveform; Bottom: Spectrum. }
	\label{Fig:radar_echoes}
\end{figure}
\begin{figure}[H]
	\centering
	\begin{tabular}{c@{\hskip -0.0cm}c@{\hskip -0.0cm}c}
		\resizebox{1.7in}{1.3in}{\includegraphics{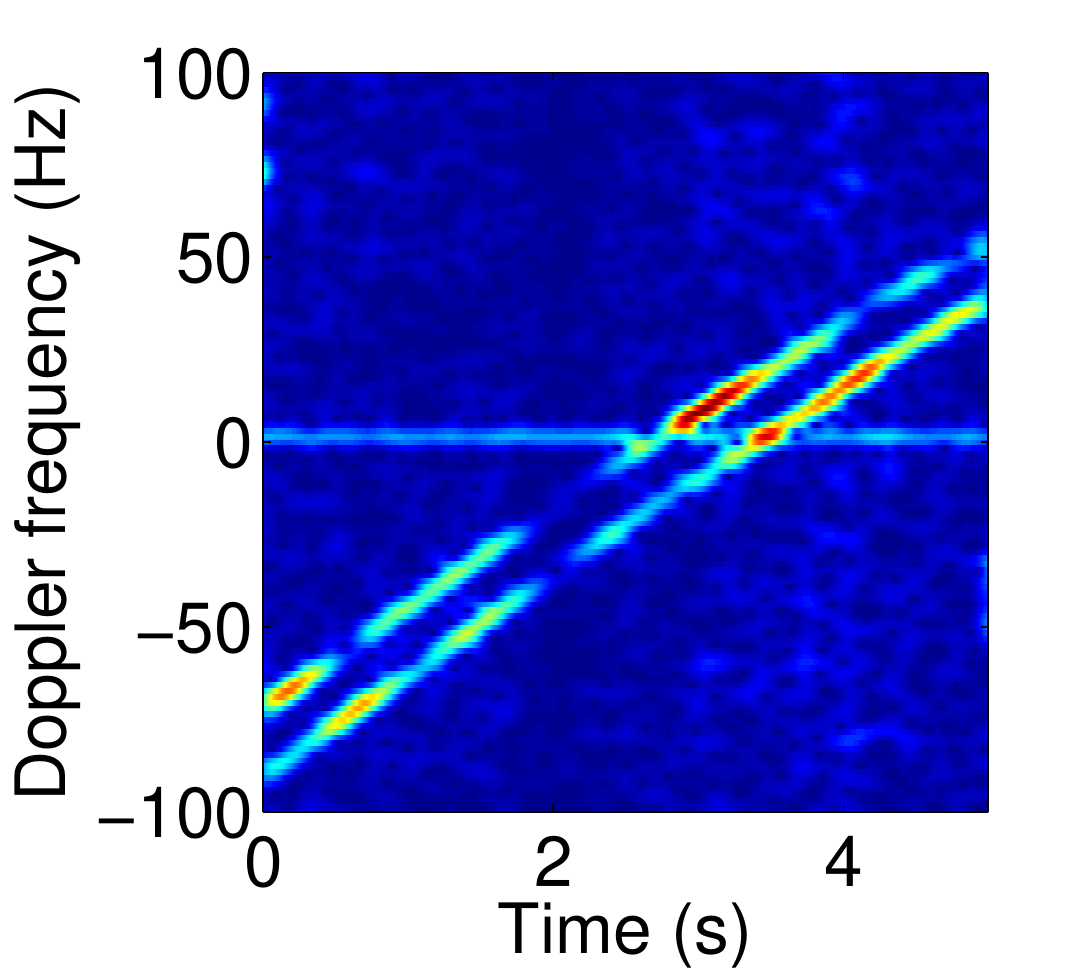}} \quad &
		\resizebox{1.7in}{1.3in}{\includegraphics{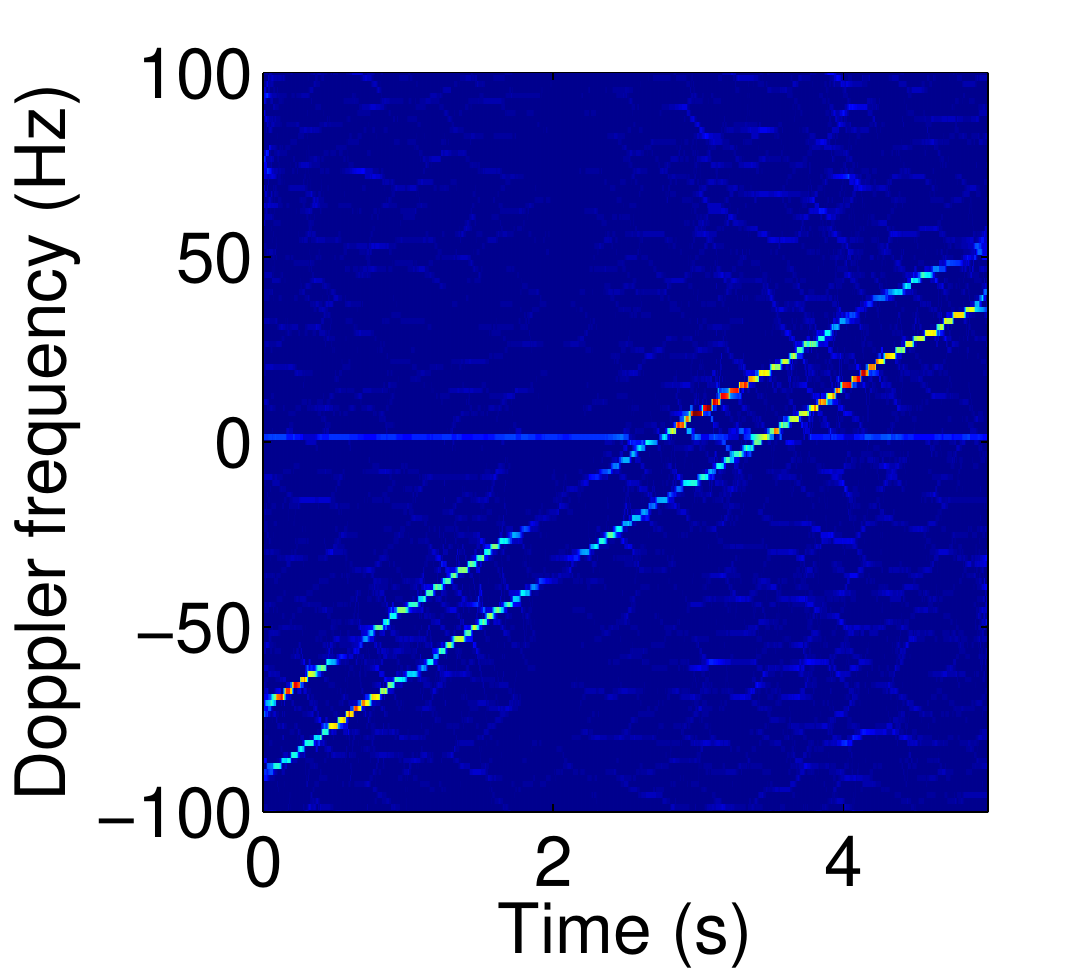}} \quad &
		\resizebox{1.7in}{1.3in}{\includegraphics{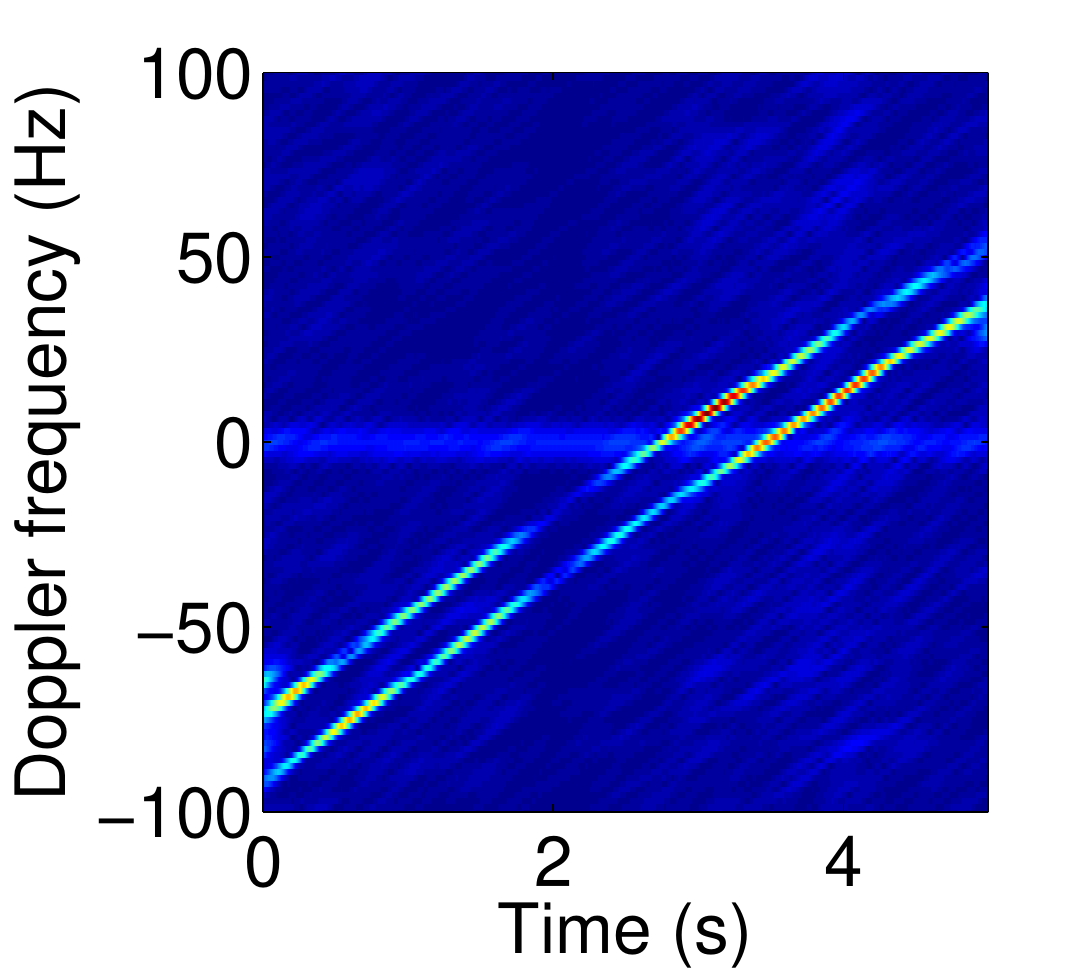}} \\
		\resizebox{1.7in}{1.3in}{\includegraphics{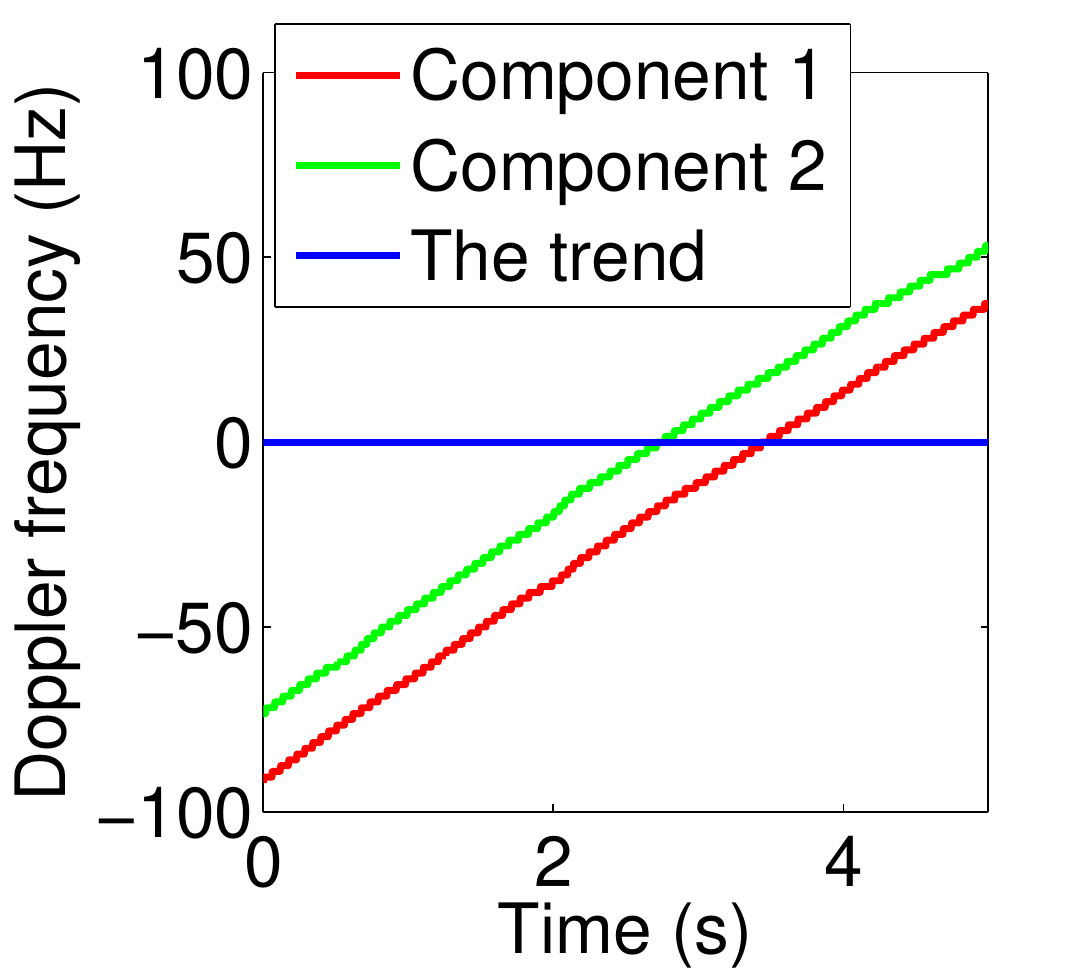}} \quad &
		\resizebox{1.7in}{1.3in}{\includegraphics{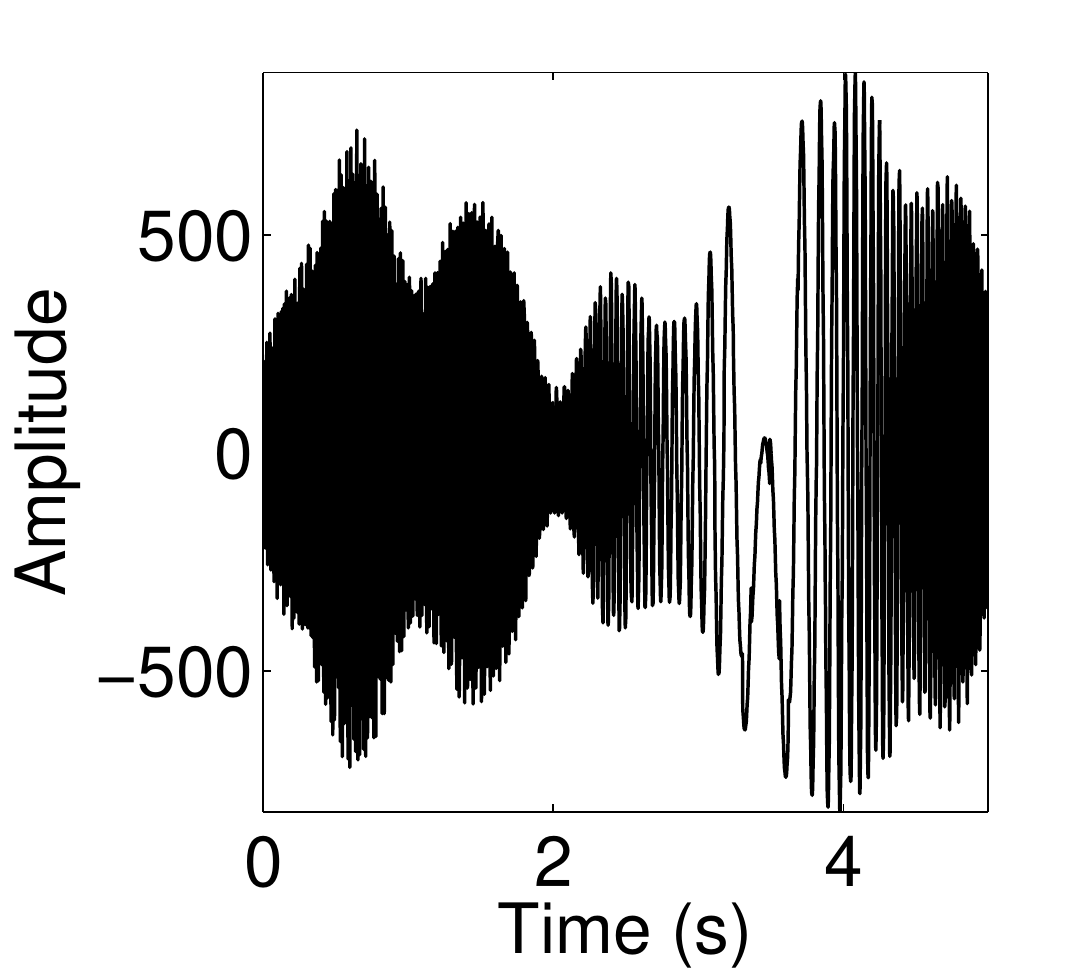}} \quad &
		\resizebox{1.7in}{1.3in}{\includegraphics{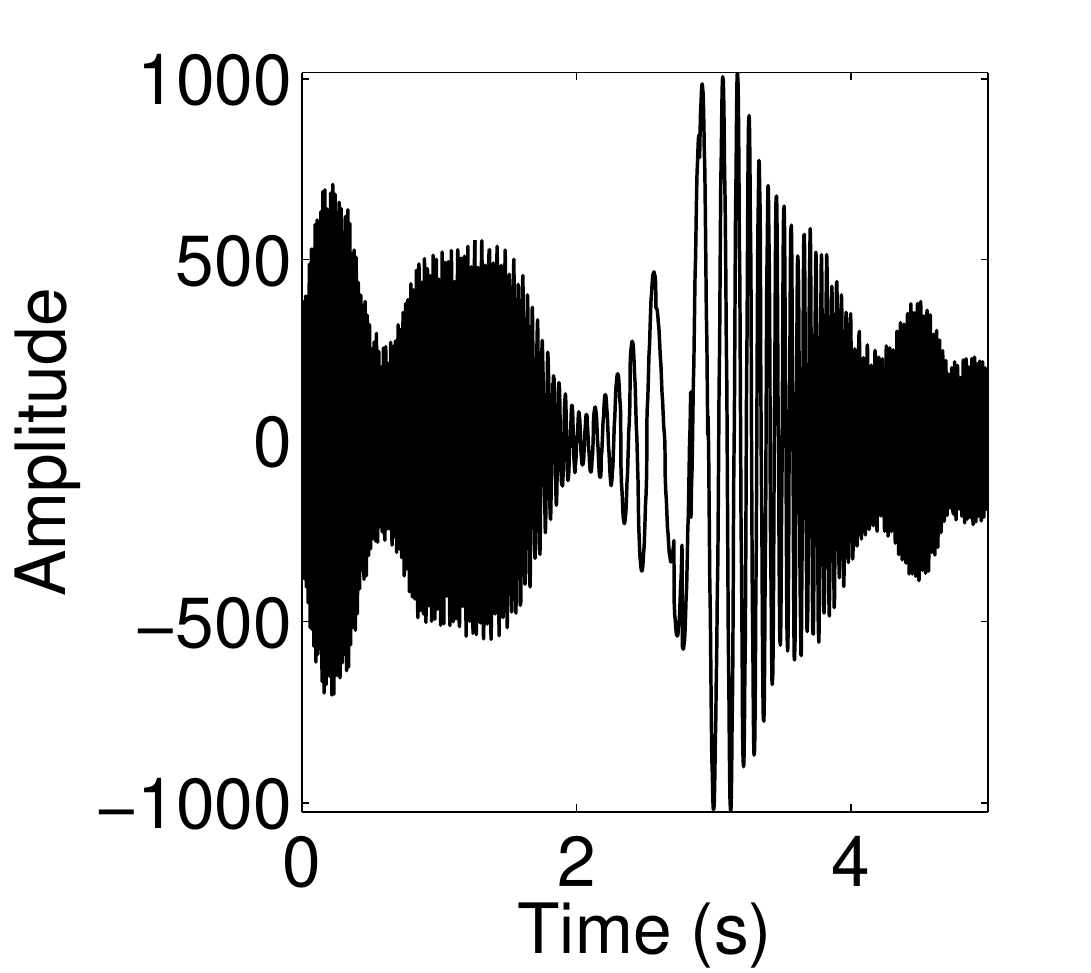}}
	\end{tabular}
	\caption{\small Results of the radar echoes.
		Top row (from left to right): STFT,  2nd-order SST, a specific slice of 3D filter-matched CT matrix;
		Bottom row  (from left to right):  Estimated IFs by ridges of  filter-matched CT given in \eqref{IF_estimate_FSSO}, recovered waveform of Component 1 (real part) and recovered waveform of Component 2 (real part) by GFCT3S. }
	\label{Fig:results_radar_echoes}
\end{figure} 

Finally, we consider a real signal, the radar echoes. Figure \ref{Fig:radar_echoes} shows the waveform and spectrum of the radar echoes. Note that the sampling rate here is equal to 400 Hz, which is just the pulse repetition frequency of the radar. The bottom panel of Figure \ref{Fig:radar_echoes}, namely the spectrum shows that this signal consists of several broadband components and a trend. 

Figure \ref{Fig:results_radar_echoes} shows some results of the radar echoes given in Figure \ref{Fig:radar_echoes}. From their STFTs (top-left panel in Figure \ref{Fig:results_radar_echoes}), there are two components (two radar targets) in the echoes. Meanwhile, the IF curves of these two targets are crossover with the trend component. Although the 2nd-order SST can squeeze the time-frequency plane of STFT, it is still affected by the trend component when they are crossover.
The top-right panel of Figure \ref{Fig:results_radar_echoes} shows a specific slice of 3D filter-matched CT matrix, where the chirp rate is close to those of Component 1 and Component 2. 
Observe that in this specific slice, compared with STFT and 2nd-order SST, the two signal components are enhanced a lot. 
The left-bottom panel shows 
the estimated IFs by the ridges of filter-matched CT given in \eqref{IF_estimate_FSSO}. The recovered waveforms of Component 1 (real part) and Component 2 (real part) by GFCT3S
are provided in the middle and right panels in the bottom row of Figure \ref{Fig:results_radar_echoes}. 
	
\section{Conclusions}
In this paper, we propose a method based the chirplet transform (CT) to retrieve modes of multicomponent signals with crossover instantaneous frequencies. We define the modified adaptive harmonic model and the conditions to represent crossover components separately by the proposed CT-based method.
The error bounds for instantaneous frequency estimation and sub-signal recovery
are provided. Based on the approximation of source signals with linear chirps at any local time, we propose an improved CT-based signal reconstruction algorithm with all signal components taken into account. The numerical experiments demonstrate the proposed method are more accurate and consistent in signal separation than EMD, SST, SSO and other approaches such as Fast-IF and ACMD. The proposed method has a great potential for a variety of engineering applications such as channel detection in communication, fault monitoring in mechanical systems, radar signal processing etc.

\section*{Appendix}

In this appendix we provide the  proof of Theorem \ref{theo:LQFT}.  Write
$$
x(t+\tau)=x_{\rm m}(t, \tau)+ x_{\rm r}(t, \tau),
$$
where
\begin{equation*}
 \begin{array}{l}
 x_{\rm m}(t, \tau)=\sum_{k=0}^K x_k(t)e^{i2\pi (\phi^\gp_k(t)\tau+\frac 12\phi^{\gp \gp}_k(t)\tau^2) },
 \\x_{\rm r}(t, \tau)=\sum_{k=0}^K\Big\{ (A_k(t+\tau)-A_k(t))e^{i2\pi \phi_k(t+\tau)}+x_k(t)e^{i2\pi (\phi^\gp_k(t)\tau+\frac 12\phi^{\gp \gp}_k(t)\tau^2)}\\
\big(e^{i2\pi (\phi_k(t+\tau)-\phi_k(t)-\phi_k^\gp(t) \tau- \frac 12\phi^{\gp \gp}_k(t)\tau^2)}-1\big)\Big\}.
 \end{array}
 \end{equation*}
Denote
\begin{equation}
	\label{def_MSSO_m1}
\begin{array}{l}
	 \fr_x(t,\eta,\lambda) = \int_{\RR} x_{\rm m}(t, \tau) \mathcal{K}_\gs(\tau,\eta,\lambda)d\tau=\int_{\RR} \frac 1 {\gs} g\big(\frac \tau {\gs}\big)  x_{\rm m}(t, \tau)  e^{-i2\pi\tau\eta-i\pi \lambda \tau^2} d\tau.
 \end{array}
	\end{equation}
Then we have
\begin{equation}\label{def_MSSO_m2}
\begin{array}{l}
	 \fr_x(t,\eta,\lambda) =\sum_{k=0}^K x_k(t) \wb g\big( \gs(\eta -\phi^\gp_k(t)), \gs^2(\gl - \phi^{\gp \gp}_k(t))\big).
	\end{array}
 \end{equation}

In following, $\gs=\frac {c_0}{\ga^2}$ as in Theorem \ref{theo:LQFT}.  In addition, since in general $\ga$ is small,  we assume $\gs\ge 1$ for simplicity of presentation. Furthermore,
since $x, t$ will be fixed throughout the proof, 
in the following we use  $\mathfrak{S}(\eta,\lambda)$, $\mathfrak{R}(\eta,\lambda)$,  $\mathcal{G}$ and $\mathcal{G}_{\ell}$ to  denote $\mathfrak{S}_x(t, \eta, \lambda)$, $\mathfrak{R}_x(t, \eta,\lambda)$, $\mathcal{G}(t)$ and $\mathcal{G}_{\ell}(t)$ respectively.  First we establish a few lemmas.

\begin{lem}
For $x(t)\in A_\ga$, let $x_{\rm m}(t, \tau)$ be the linear chirp approximation to $x(t+\tau)$ at time $t$ defined above. Then
\begin{equation}
\label{x_xm_error}
\begin{array}{l}
|x(t+\tau)-x_{\rm m}(t, \tau)|\le M\big(B_1 \ga^3|\tau| +\frac {\pi}3 B_2 \ga^7 |\tau|^3\big).
\end{array}
\end{equation}
\end{lem}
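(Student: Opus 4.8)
The plan is to observe that $x(t+\tau)-x_{\rm m}(t,\tau)$ equals exactly the residual $x_{\rm r}(t,\tau)$ that has already been isolated in the decomposition $x(t+\tau)=x_{\rm m}(t,\tau)+x_{\rm r}(t,\tau)$, and then to estimate $|x_{\rm r}(t,\tau)|$ summand by summand over $k=0,\dots,K$. Each term of $x_{\rm r}$ is written as the sum of two pieces: an \emph{amplitude-mismatch} piece $(A_k(t+\tau)-A_k(t))e^{i2\pi\phi_k(t+\tau)}$, and a \emph{phase-mismatch} piece $x_k(t)e^{i2\pi(\phi_k'(t)\tau+\frac12\phi_k''(t)\tau^2)}\bigl(e^{i\theta_k}-1\bigr)$, where $\theta_k=2\pi\bigl(\phi_k(t+\tau)-\phi_k(t)-\phi_k'(t)\tau-\frac12\phi_k''(t)\tau^2\bigr)$. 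The key structural point is that these two pieces are governed by the two hypotheses of Definition \ref{def:function_class} separately: the amplitude piece by \eqref{cond_A}, and the phase piece by \eqref{cond_phi}.

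First I would bound the amplitude piece. Since $\phi_k$ is real-valued, $|e^{i2\pi\phi_k(t+\tau)}|=1$, so this piece has modulus $|A_k(t+\tau)-A_k(t)|$, which is at most $\ga^3 B_1|\tau|A_k(t)$ directly from \eqref{cond_A}. Next I would bound the phase piece. Its leading factor $x_k(t)e^{i2\pi(\phi_k'(t)\tau+\frac12\phi_k''(t)\tau^2)}$ has modulus $|x_k(t)|=A_k(t)$, because $A_k>0$ and $\phi_k,\phi_k',\phi_k''$ are real. For the remaining factor I would invoke the elementary inequality $|e^{i\theta}-1|\le|\theta|$, valid for real $\theta$, which reduces matters to bounding $|\theta_k|$. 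Here one recognizes that the bracketed difference $\phi_k(t+\tau)-\phi_k(t)-\phi_k'(t)\tau-\frac12\phi_k''(t)\tau^2$ is precisely the second-order Taylor remainder of $\phi_k$ about $t$; by Taylor's theorem with the Lagrange form of the remainder it equals $\frac16\phi_k'''(\xi)\tau^3$ for some $\xi$ between $t$ and $t+\tau$. Hence $|\theta_k|\le 2\pi\cdot\frac16\,\bigl(\sup_t|\phi_k'''(t)|\bigr)|\tau|^3\le\frac{\pi}{3}\ga^7 B_2|\tau|^3$ by \eqref{cond_phi}, so the phase piece is at most $A_k(t)\cdot\frac{\pi}{3}\ga^7 B_2|\tau|^3$.

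Finally I would apply the triangle inequality and sum over $k$, using $\sum_{k=0}^K A_k(t)=\sum_{k=0}^K|A_k(t)|=M$ from \eqref{def_M_u}, to obtain the claimed bound $M\bigl(B_1\ga^3|\tau|+\frac{\pi}{3}B_2\ga^7|\tau|^3\bigr)$. There is no serious obstacle in this lemma; the single point demanding care is the identification of the bracketed phase difference as a genuine Taylor remainder, so that the third-derivative hypothesis \eqref{cond_phi} yields the sharp cubic factor $|\tau|^3$ together with the constant $\frac{\pi}{3}$, rather than a cruder first- or second-order estimate. Once that observation is in place, the two hypotheses \eqref{cond_A} and \eqref{cond_phi} match the two pieces exactly and the triangle inequality assembles them without loss.
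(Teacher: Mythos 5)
Your proposal is correct and follows essentially the same route as the paper's own proof: identify the difference with the residual $x_{\rm r}(t,\tau)$, bound the amplitude-mismatch term via \eqref{cond_A}, bound the phase-mismatch term via $|e^{i\theta}-1|\le|\theta|$ together with the Lagrange form of the second-order Taylor remainder and \eqref{cond_phi}, and sum over $k$ using $\sum_k A_k(t)=M$. No gaps; nothing further is needed.
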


\begin{proof} The proof is straightforward. Indeed, by \eqref{cond_A} and \eqref{cond_phi},
\begin{equation*}
\begin{array}{l}
|x(t+\tau)-x_{\rm m}(t, \tau)|=|x_{\rm r}(t, \tau)|
\\ ~~~~~~~~~~~~~~~~~~~~~~~~\le \sum_{k=0}^K\Big\{ |A_k(t+\tau)-A_k(t)| + A_k(t) ~\big|i2\pi \big(\phi_k(t+\tau)-\phi_k(t)-\phi_k^\gp(t) \tau- \frac 12\phi^{\gp \gp}_k(t)\tau^2\big) \big|\Big\}
\\ ~~~~~~~~~~~~~~~~~~~~~~~~\le \sum_{k=0}^K\Big\{ A_k(t) B_1 \ga^3 |\tau|  + A_k(t)  2\pi \sup_{\xi \in \RR}  \frac 16 \big |\phi{'''}_k (\xi) \tau^3 \big|
\\ ~~~~~~~~~~~~~~~~~~~~~~~~\le M(t) B_1 \ga^3 |\tau|  +   \sum_{k=0}^K A_k(t) \frac \pi 3 B_2 \ga^7 |\tau|^3
\\~~~~~~~~~~~~~~~~~~~~~~~~=M\big(B_1 \ga^3|\tau| +\frac {\pi}3 B_2 \ga^7 |\tau|^3\big),
\end{array}
\end{equation*}
 as desired.
\end{proof}

\begin{lem}
For $x(t)\in A_\ga$, let $\fs(\eta, \gl)$ be its transform by CT and $\fr(\eta, \gl)$ be the approximation of $\fs(\eta, \gl)$ with linear chirps defined by \eqref{def_MSSO_m1}. Then
\begin{equation}\label{S_R_error}
\begin{array}{l}
\big|\fs(\eta, \gl)-\fr(\eta, \gl)\big|\le \ga M \big(B_1 c_0N +\frac {\pi}3 B_2  c_0^3 N ^3\big).
\end{array}
\end{equation}
\end{lem}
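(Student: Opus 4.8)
The plan is to exploit the fact that $\fs$ and $\fr$ are built from the \emph{same} kernel $\frac{1}{\gs}\,g(\frac{\tau}{\gs})\,e^{-i2\pi\tau\eta-i\pi\gl\tau^2}$, applied to $x(t+\tau)$ in \eqref{def_MSSO} and to its linear-chirp surrogate $x_{\rm m}(t,\tau)$ in \eqref{def_MSSO_m1} respectively, so that subtracting the two defining integrals collapses everything to a single integral of the \emph{difference} of the integrands:
\begin{equation*}
\fs(\eta,\gl)-\fr(\eta,\gl) = \int_\RR \big(x(t+\tau)-x_{\rm m}(t,\tau)\big)\, \frac{1}{\gs}\, g\big(\tfrac{\tau}{\gs}\big)\, e^{-i2\pi\tau\eta-i\pi\gl\tau^2}\, d\tau.
\end{equation*}
This is the whole point: the approximation error $|\fs-\fr|$ is governed entirely by the pointwise error $|x(t+\tau)-x_{\rm m}(t,\tau)|$ already estimated in Lemma 1.

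Next I would pass to absolute values under the integral sign, using that the quadratic-phase factor satisfies $|e^{-i2\pi\tau\eta-i\pi\gl\tau^2}|=1$ and that $g\ge 0$; this makes the $(\eta,\gl)$-dependence disappear, so the resulting bound is automatically uniform in $(\eta,\gl)$. Inserting the estimate \eqref{x_xm_error} from Lemma 1 then gives
\begin{equation*}
\big|\fs(\eta,\gl)-\fr(\eta,\gl)\big| \le M\int_\RR \Big(B_1\ga^3|\tau| + \tfrac{\pi}{3}B_2\ga^7|\tau|^3\Big)\frac{1}{\gs}\, g\big(\tfrac{\tau}{\gs}\big)\, d\tau,
\end{equation*}
and it remains only to evaluate the two weighted moments of the rescaled window. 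Substituting $u=\tau/\gs$ and using $\int_\RR g = 1$ together with $\mathrm{supp}(g)\subseteq[-N,N]$ (so that $|u|\le N$ wherever $g(u)\ne 0$), I would bound $\int_\RR |\tau|\,\frac{1}{\gs}g(\frac{\tau}{\gs})\,d\tau = \gs\int_\RR |u|\,g(u)\,du \le \gs N$ and, analogously, $\int_\RR |\tau|^3\,\frac{1}{\gs}g(\frac{\tau}{\gs})\,d\tau \le \gs^3 N^3$.

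Finally, substituting $\gs=c_0/\ga^2$ produces the exact cancellation of $\ga$-powers that the statement requires: $\ga^3\gs = c_0\ga$ and $\ga^7\gs^3 = c_0^3\ga$, so the bound becomes $\ga M\big(B_1 c_0 N + \frac{\pi}{3}B_2 c_0^3 N^3\big)$, as claimed. I do not expect any genuine obstacle here — the argument is a direct term-by-term integration once Lemma 1 is available. The only matters needing mild care are confirming that the moment integrals are finite (guaranteed by the compact support of $g$) and correctly tracking the interplay between the powers of $\ga$ in \eqref{x_xm_error} and the powers of $\gs$ arising from the moments, which is precisely what the scaling $\gs=c_0/\ga^2$ is designed to balance.
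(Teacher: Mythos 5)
Your proposal is correct and follows essentially the same route as the paper's proof: subtract the two defining integrals, apply the pointwise bound from the preceding lemma under the integral sign, use the normalization and compact support of $g$ to bound the first and third moments by $\gs N$ and $\gs^3 N^3$, and substitute $\gs = c_0/\ga^2$. No gaps.
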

\begin{proof} By \eqref{x_xm_error} and the facts $g\ge 0$,  supp$g\subseteq [-N, N]$, we have
\begin{equation*}
\begin{array}{l}
\big|\fs(\eta, \gl)-\fr(\eta, \gl)\big|=\Big| \int_\RR (x(t+\tau)-x_{\rm m}(t, \tau))\frac 1{\gs}g(\frac \tau{\gs}) e^{-i2\pi (\eta \tau+\frac 12 \gl \tau^2)}d\tau\Big|
\\~~~~~~~~~~~~~~~~~~~~~~~\le\int_{-N\gs }^{N\gs}  M\big(B_1 \ga^3|\tau| +\frac {\pi}3 B_2 \ga^7 |\tau|^3\big)
\frac 1{\gs}g(\frac \tau{\gs}) d\tau
\\~~~~~~~~~~~~~~~~~~~~~~~\le M\big(B_1 \ga^3 N \gs +\frac {\pi}3 B_2 \ga^7  \gs^3 N^3\big)
\\~~~~~~~~~~~~~~~~~~~~~~~=\ga M \big(B_1 c_0N +\frac {\pi}3 B_2  c_0^3 N ^3\big).
\end{array}
\end{equation*}
\end{proof}

\begin{lem}\label{lem:nonoverlap}
Let $\cG_\ell, 0\le \ell \le K$ be the sets defined by \eqref{def_Gell}. If $\alpha\le \frac{\mu\sqrt{c_0 \gt}}{4 M L}$, then $\cG_\ell$ are nonoverlapping, namely, $\cG_\ell \cap \cG_k=\emptyset$ for $\ell\not=k$.
\end{lem}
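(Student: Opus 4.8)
The plan is to argue by contradiction. Suppose there is a pair $(\eta,\lambda)\in\cG_\ell\cap\cG_k$ with $\ell\neq k$. By the definition \eqref{def_Gell} of these sets, both
\[
\gs|\eta-\phi'_\ell|+\rho\gs^2|\lambda-\phi''_\ell|\le\Big(\frac{4LM}{\mu}\Big)^2
\quad\text{and}\quad
\gs|\eta-\phi'_k|+\rho\gs^2|\lambda-\phi''_k|\le\Big(\frac{4LM}{\mu}\Big)^2
\]
hold simultaneously. I would then add these two inequalities and apply the triangle inequality to the frequency coordinates, $|\eta-\phi'_\ell|+|\eta-\phi'_k|\ge|\phi'_\ell-\phi'_k|$, and independently to the chirp-rate coordinates, $|\lambda-\phi''_\ell|+|\lambda-\phi''_k|\ge|\phi''_\ell-\phi''_k|$. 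This eliminates the free variables $(\eta,\lambda)$ and leaves an upper bound on the weighted distance between the two centers,
\[
\gs|\phi'_\ell-\phi'_k|+\rho\gs^2|\phi''_\ell-\phi''_k|\le 2\Big(\frac{4LM}{\mu}\Big)^2.
\]

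The decisive step is to reconcile the weights $\gs$ and $\rho\gs^2$ appearing on the left with the weights $1$ and $\rho$ occurring in the separation condition \eqref{def_sep_cond_cros}. Because we have assumed $\gs\ge1$, we have $\gs^2\ge\gs$, and since $\rho\ge0$ the left-hand side is at least $\gs\big(|\phi'_\ell-\phi'_k|+\rho|\phi''_\ell-\phi''_k|\big)$. Dividing through by $\gs$ and inserting $\gs=c_0/\ga^2$ then gives
\[
|\phi'_\ell-\phi'_k|+\rho|\phi''_\ell-\phi''_k|\le\frac{2\ga^2}{c_0}\Big(\frac{4LM}{\mu}\Big)^2=\frac{32L^2M^2}{\mu^2c_0}\,\ga^2 .
\]
Finally I would substitute the hypothesis $\ga\le\frac{\mu\sqrt{c_0\gt}}{4ML}$, equivalently $\ga^2\le\mu^2c_0\gt/(16M^2L^2)$, which collapses the right-hand side to at most $2\gt$. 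This contradicts \eqref{def_sep_cond_cros}, so no common point can exist and $\cG_\ell\cap\cG_k=\emptyset$.

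The main obstacle is structural rather than computational: each $\cG_\ell$ is a ball in the anisotropic metric $\gs|\cdot|+\rho\gs^2|\cdot|$, whereas \eqref{def_sep_cond_cros} controls the centers only in the metric $|\cdot|+\rho|\cdot|$, so the whole argument hinges on the normalization $\gs\ge1$ used to pass between the two. The one delicate point is the boundary case in which the derived bound attains $2\gt$ exactly; this occurs only at the extreme value $\ga=\frac{\mu\sqrt{c_0\gt}}{4ML}$, where the two balls are tangent, and the step $\gs^2\ge\gs$ is strict (hence the intersection genuinely empty) for the large window widths $\gs$, i.e.\ small $\ga$, of interest.
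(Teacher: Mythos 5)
Your proof is correct and follows essentially the same route as the paper: the triangle inequality on the two membership conditions, the normalization $\gs\ge 1$ to pass from the weights $(\gs,\rho\gs^2)$ to $\gs(1,\rho)$, and the hypothesis on $\ga$ (equivalently $\gs\gt\ge(4ML/\mu)^2$) to force the bound $\le 2\gt$ contradicting \eqref{def_sep_cond_cros}. Your remark about the boundary case where the bound equals $2\gt$ exactly is a fair observation that the paper itself glosses over, but it does not change the substance of the argument.
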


\begin{proof} Assume $(\eta, \gl)\in \cG_\ell \cap \cG_k$ for some $\ell\not=k$.
By the definition of $\cG_\ell$, we have
\begin{equation*}
\begin{array}{l}
|\phi'_k(t)-\phi'_\ell(t)|+\rho |\phi''_k(t)-\phi''_\ell(t)|\\
\le |\phi'_k(t)-\eta|+\rho |\phi''_k(t)-\gl|+|\phi'_\ell(t)-\eta|+\rho |\phi''_\ell(t)-\gl|
\\\le \frac 1{\gs} \left( \gs |\phi'_k(t)-\eta|+\rho \gs^2 |\phi''_k(t)-\gl|\right)+\frac 1{\gs} \left( \gs |\phi'_\ell(t)-\eta|+\rho \gs^2 |\phi''_\ell(t)-\gl|\right) ~~ \hbox{(since $\gs\ge 1$)}
\\\le \frac 2\gs \Big (\frac {4ML}\mu \Big)^2 \le 2 \gt,
\end{array}
\end{equation*}

a contradiction to  the well-separated condition \eqref{def_sep_cond_cros}.
Thus the sets $\cG_\ell, 0\le \ell \le K$ are nonoverlapping.
\end{proof}

Since we assume $\gs\ge 1$, from \eqref{def_sep_cond_cros}, we have
\begin{equation}
 \label{extra_ineq}
 \begin{array}{l}
\gs |\phi'_k(t)-\phi'_\ell(t)|+\rho\;  \gs ^2 |\phi''_k(t)-\phi''_\ell(t)|\ge \gs \Big(|\phi'_k(t)-\phi'_\ell (t)|+\rho |\phi''_k(t)-\phi''_\ell(t)|
\ge 2\gs \gt.
\end{array}
 \end{equation}

 \begin{lem}
For $x(t)\in A_\ga$, let $\fr(\eta, \gl)$ be defined by \eqref{def_MSSO_m1}. If $\gs \gt \ge \big(\frac {4ML}\mu\big)^2$,  that is $\alpha\le \frac{\mu\sqrt{c_0 \gt}}{4 M L}$ when $\gs=\frac {c_0}{\ga^2}$,
then
\begin{equation}
\begin{array}{l}
\label{R_with_ell}
\big|\fr(\eta, \gl)-x_\ell(t) \wb g\big( \gs(\eta -\phi^\gp_\ell(t)), \gs^2(\gl - \phi^{\gp \gp}_\ell(t))\big) \big|\le \frac {M L}{\sqrt {\gs \gt}}, \forall (\eta, \gl)\in \cG_\ell,
\end{array}
\end{equation}
\begin{equation}
\begin{array}{l}
\label{R_with_ell1}
\big|\fr(\phi'_\ell(t), \phi''_\ell(t))-x_\ell(t)  \big|\le \frac {M L}{\sqrt {2 \gs \gt}}.
\end{array}
\end{equation}
\end{lem}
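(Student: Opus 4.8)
The plan is to start from the closed form \eqref{def_MSSO_m2}, which writes $\fr(\eta,\gl)=\sum_{k=0}^K x_k(t)\,\wb g\big(\gs(\eta-\phi^\gp_k(t)),\gs^2(\gl-\phi^{\gp\gp}_k(t))\big)$. For \eqref{R_with_ell} I would peel off the $k=\ell$ term, reducing the left-hand side to $\big|\sum_{k\ne\ell} x_k(t)\,\wb g\big(\gs(\eta-\phi^\gp_k(t)),\gs^2(\gl-\phi^{\gp\gp}_k(t))\big)\big|$, and then bound each surviving term by the admissibility decay \eqref{inequality_g}. With the rescaled arguments this reads $\big|\wb g\big(\gs(\eta-\phi^\gp_k(t)),\gs^2(\gl-\phi^{\gp\gp}_k(t))\big)\big|\le \frac{L}{\sqrt{\gs|\eta-\phi^\gp_k(t)|+\rho\gs^2|\gl-\phi^{\gp\gp}_k(t)|}}$, so the whole problem collapses to a lower bound on $\gs|\eta-\phi^\gp_k(t)|+\rho\gs^2|\gl-\phi^{\gp\gp}_k(t)|$ for the competing indices $k\ne\ell$.

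The central estimate is precisely this lower bound. I would apply the triangle inequality to get $\gs|\eta-\phi^\gp_k(t)|+\rho\gs^2|\gl-\phi^{\gp\gp}_k(t)|\ge\big(\gs|\phi^\gp_k(t)-\phi^\gp_\ell(t)|+\rho\gs^2|\phi^{\gp\gp}_k(t)-\phi^{\gp\gp}_\ell(t)|\big)-\big(\gs|\eta-\phi^\gp_\ell(t)|+\rho\gs^2|\gl-\phi^{\gp\gp}_\ell(t)|\big)$. The first bracket is at least $2\gs\gt$ by the separation inequality \eqref{extra_ineq}, while the second bracket is at most $(4ML/\mu)^2$ because $(\eta,\gl)\in\cG_\ell$, by the definition \eqref{def_Gell}. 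The one place needing care is invoking the hypothesis $\gs\gt\ge(4ML/\mu)^2$ to absorb the subtracted term: it turns the estimate $2\gs\gt-(4ML/\mu)^2$ into something $\ge\gs\gt$, so the $\wb g$-argument is at least $\gs\gt$ and hence $|\wb g|\le L/\sqrt{\gs\gt}$ uniformly over $k\ne\ell$. Summing and using $\sum_{k\ne\ell}|x_k(t)|=\sum_{k\ne\ell}A_k(t)\le M$ then yields \eqref{R_with_ell}.

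For \eqref{R_with_ell1} I would specialize to $(\eta,\gl)=(\phi^\gp_\ell(t),\phi^{\gp\gp}_\ell(t))$, where the $\ell$-term equals $x_\ell(t)\,\wb g(0,0)=x_\ell(t)$ since $\wb g(0,0)=\int_\RR g(\tau)\,d\tau=1$. The subtracted bracket from the previous paragraph now vanishes, so the competing arguments are simply $\gs|\phi^\gp_\ell(t)-\phi^\gp_k(t)|+\rho\gs^2|\phi^{\gp\gp}_\ell(t)-\phi^{\gp\gp}_k(t)|\ge 2\gs\gt$ directly from \eqref{extra_ineq}, yielding the sharper per-term estimate $|\wb g|\le L/\sqrt{2\gs\gt}$ and, after summing against $\sum_{k\ne\ell}A_k(t)\le M$, the factor-$\sqrt2$ improvement in \eqref{R_with_ell1}. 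The argument is largely bookkeeping; the genuine obstacle is the geometric step in the second paragraph, where the two-sided information — membership in $\cG_\ell$ together with the separation gap — must be combined with the size condition on $\ga$ (equivalently $\gs\gt\ge(4ML/\mu)^2$) to produce a clean uniform lower bound on the $\wb g$-argument for the competing indices.
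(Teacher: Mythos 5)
Your proposal is correct and follows essentially the same route as the paper: expand $\fr$ via \eqref{def_MSSO_m2}, bound the $k\ne\ell$ terms with the decay estimate \eqref{inequality_g}, combine the reverse triangle inequality with \eqref{extra_ineq} and the definition \eqref{def_Gell} to get the lower bound $2\gs\gt-(4ML/\mu)^2\ge\gs\gt$, and sum against $\sum_{k\ne\ell}A_k(t)\le M$. Your treatment of \eqref{R_with_ell1} (which the paper omits as ``similar'') is also the intended one, with the bracket vanishing at $(\phi'_\ell(t),\phi''_\ell(t))$ to give the sharper constant $\sqrt{2\gs\gt}$.
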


\begin{proof} By \eqref{def_MSSO_m2}, we have for any $(\eta, \gl)\in \cG_\ell$,
\begin{equation*}
 \begin{array}{l}
\big|\fr(\eta, \gl)-x_\ell(t) \wb g\big( \gs(\eta -\phi^\gp_\ell(t)), \gs^2(\gl - \phi^{\gp \gp}_\ell(t))\big) \big|
\\=\Big|\sum_{k\not= \ell } x_k(t) \wb g\big( \gs(\eta -\phi^\gp_k(t)), \gs^2(\gl - \phi^{\gp \gp}_k(t))\big)\Big|
\\ \le \sum_{k\not= \ell } A_k(t)
L \big( \gs|\eta -\phi^\gp_k(t)| +\rho \gs^2|\gl - \phi^{\gp \gp}_k(t) |\big)^{-\frac 12}\hbox{(by \eqref{inequality_g1})}
\\ \le \sum_{k\not= \ell } A_k(t)
L \big( \gs|\phi'_\ell(t) -\phi^\gp_k(t)| -\gs |\eta- \phi^{\gp}_\ell(t) |+ \rho \gs^2|\phi''_\ell(t) -\phi''_k(t)| -\rho \gs^2|\gl - \phi^{\gp \gp}_\ell(t) |\big)^{-\frac 12}
\\ \le \sum_{k\not= \ell } A_k(t)
L \big( 2\gs \gt- \big(\frac {4ML}\mu\big)^2\big)^{-\frac 12} \hbox{(by \eqref{extra_ineq} and \eqref{def_Gell})}
\\ \le \frac {ML}{\sqrt {\gs \gt}},
\end{array}
 \end{equation*}
since $\gs \gt \ge \big(\frac {4ML}\mu\big)^2$. This proves \eqref{R_with_ell}. The proof \eqref{R_with_ell1} is similar and the details are omitted.
\end{proof}

\bigskip

{\bf Proof of Theorem (a).} Let $(\eta, \gl)\in \cG$. Suppose $(\eta, \gl) \not \in G_\ell$ for any $\ell$.  Then
$$
\gs|\eta -\phi^\gp_k(t)| +\rho  \gs^2|\gl - \phi^{\gp \gp}_k(t) |> \big(\frac {4ML}\mu\big)^2.
$$
Hence,
\begin{equation*}
 \begin{array}{l}
\big|\fr(\eta, \gl)\big| =\Big|\sum_{k=0}^K x_k(t) \wb g\big( \gs(\eta -\phi^\gp_k(t)), \gs^2(\gl - \phi^{\gp \gp}_k(t))\big)\Big|
\\~~~~~~~~~~~\le \sum_{k=0}^K A_k(t)
\frac L{\big( \gs|\eta -\phi^\gp_k(t)| +\gs^2|\gl - \phi^{\gp \gp}_k(t) |\big)^{\frac 12}} ~~\hbox{(by \eqref{inequality_g1})}
\\ ~~~~~~~~~~~< \sum_{k=0 }^K \frac {A_k(t) L} {4ML/\mu}=\frac \mu4.
\end{array}
 \end{equation*}
This, together with \eqref{S_R_error}, implies
\begin{equation*}
 \begin{array}{l}
\big|\fs(\eta, \gl)\big| \le \big|\fs(\eta, \gl)-\fr(\eta, \gl)\big|+\big|\fr(\eta, \gl)\big|
\\~~~~~~~~~~~\le \ga M \big(B_1 c_0N +\frac {\pi}3 B_2  c_0^3 N ^3\big)+\frac \mu4
\\~~~~~~~~~~~\le  \frac \mu4 +\frac \mu 4=\frac \mu2,
\end{array}
 \end{equation*}
where the second last inequality follows \eqref{cond_alpha} on the condition for $\ga$. This leads to a contradiction that $(\eta, \gl)\in \cG$. Hence
there must exist $\ell$ such that $(\eta, \gl)\in \cG_\ell$. Lemma \ref{lem:nonoverlap} has shown that  $\cG_\ell, 0\le \ell \le K$ are disjoint.

Finally we show that each $\cG_\ell$ is non-empty. To this regard, we show $(\phi'_\ell(t), \phi''_\ell(t))\in \cG$. Indeed,
the following fact followed from \eqref{R_with_ell1}
\begin{equation*}
 \begin{array}{l}
\big|\fr(\phi'_\ell(t), \phi''_\ell(t))\big|\ge |x_\ell(t)|-\frac {M L}{\sqrt {2 \gs \gt}}\ge \mu-\frac \mu{4\sqrt 2}>\frac {3\mu}4
\end{array}
 \end{equation*}
implies
\begin{equation*}
 \begin{array}{l}
\big|\fs(\phi'_\ell(t), \phi''_\ell(t)) \big|\ge  \left|\fr(\phi'_\ell(t), \phi''_\ell(t))\right| -\big|\fs(\phi'_\ell(t), \phi''_\ell(t))-\fr(\phi'_\ell(t), \phi''_\ell(t))\big|
> \frac {3\mu}4 -\frac \mu 4=\frac \mu2.
\end{array}
 \end{equation*}
Hence $(\phi'_\ell(t), \phi''_\ell(t))\in \cG$.
\hfill $\square$

\bigskip
{\bf Proof of \eqref{abs_IA_est}.}  By the definition of $\widehat{\eta}_{\ell}$ and $\widehat{\lambda}_{\ell}$, \eqref{S_R_error} and \eqref{R_with_ell1}, we have
\begin{equation}\label{IA_est_ineq1}
\begin{array}{l}
|\fs(\widehat{\eta}_{\ell}, \widehat{\lambda}_{\ell})|\ge |\fs(\phi'_\ell(t), \phi''_\ell(t))|
\\~~~~~~~~~~~~~\ge |\fr(\phi'_\ell(t), \phi''_\ell(t))| - \ga M \big (c_0 N B_1+ \frac \pi 3 B_2 c_0^3 N^3\big)
\\~~~~~~~~~~~~~\ge A_\ell(t) - \frac {LM}{\sqrt{2 \gs \gt}}- \ga M \big (c_0 N B_1+ \frac \pi 3 B_2 c_0^3 N^3\big)
\\~~~~~~~~~~~~~\ge A_{\ell}(t)- \ga M \Big( \frac {L}{\sqrt{c_0 \gt}}+ c_0 N B_1+ \frac \pi 3  B_2 c_0^3 N^3\Big).
\end{array}
\end{equation}
On the other hand, by \eqref{S_R_error} and \eqref{R_with_ell}, we have

\begin{equation}\label{IA_est_ineq2}
\begin{array}{l}
|\fs(\widehat{\eta}_{\ell}, \widehat{\lambda}_{\ell})|\le |\fr(\widehat{\eta}_{\ell}, \widehat{\lambda}_{\ell})|+ \ga M \big (c_0 N B_1+ \frac \pi 3 B_2 c_0^3 N^3\big)
 \\~~~~~~~~~~~~~\le |x_\ell(t) \wb g\big( \gs(\wh \eta_\ell -\phi^\gp_\ell(t)), \gs^2(\wh \gl _\ell- \phi^{\gp \gp}_\ell(t))\big) \big| + \frac {LM}{\sqrt{\gs \gt}}+ \ga M \big (c_0 N B_1+ \frac \pi 3 B_2 c_0^3 N^3\big)
\\~~~~~~~~~~~~~\le A_\ell(t) + \ga M\Big(\frac {L}{\sqrt{c_0 \gt}}+ c_0 N B_1+ \frac \pi 3 B_2 c_0^3 N^3\Big)
 \hbox{(since $|\wb g(\eta, \gl)|\le 1$).}
\end{array}
\end{equation}
Relationship of  $|\fs(\widehat{\eta}_{\ell}, \widehat{\lambda}_{\ell})|$ and $A_\ell(t)$ in \eqref{IA_est_ineq1} and \eqref{IA_est_ineq2} leads to \eqref{abs_IA_est}.
\hfill $\square$

\bigskip
{\bf Proof of \eqref{phi_est}.}
Write $\fs(\widehat{\eta}_{\ell}, \widehat{\lambda}_{\ell})=|\fs(\widehat{\eta}_{\ell}, \widehat{\lambda}_{\ell})| e^{i2\pi \psi(t)}$ for some real-valued function $\psi(t)$. Since for any complex number $z_1, z_2$, $\big|| z_1|-|z_2|\big| \le |z_1-z_2|$, we have
\begin{equation*}
\begin{array}{l}
A_\ell(t) \big| |\wb g\big( \gs(\wh \eta_\ell -\phi^\gp_\ell(t)), \gs^2(\wh \gl_\ell - \phi^{\gp \gp}_\ell(t))\big)|-1\big|
\\\le
A_\ell(t) \big| \wb g\big( \gs(\wh \eta_\ell -\phi^\gp_\ell(t)), \gs^2(\wh \gl_\ell - \phi^{\gp \gp}_\ell(t))\big)e^{i2\pi \phi_\ell(t)}
- e^{-i2\pi \psi(t)}\big|
\\=\big| \wb g\big( \gs(\wh \eta_\ell -\phi^\gp_\ell(t)), \gs^2(\wh \gl_\ell - \phi^{\gp \gp}_\ell(t))\big)x_\ell(t)
- A_\ell(t)e^{-i2\pi \psi(t)}\big|
\\\le \big| \wb g\big( \gs(\wh \eta_\ell -\phi^\gp_\ell(t)), \gs^2(\wh \gl_\ell - \phi^{\gp \gp}_\ell(t))\big)x_\ell(t)- \fs(\wh{\eta}_{\ell}, \wh{\gl}_{\ell})\big|
 +\big| \fs(\wh{\eta}_{\ell}, \wh{\gl}_{\ell})-A_\ell(t)e^{-i2\pi \psi(t)}\big|
\\\le \big| \wb g\big( \gs(\wh \eta_\ell -\phi^\gp_\ell(t)), \gs^2(\wh \gl_\ell - \phi^{\gp \gp}_\ell(t))\big)x_\ell(t)- \fr(\wh{\eta}_{\ell}, \wh{\gl}_{\ell})\big|
+\big| \fs(\wh{\eta}_{\ell}, \wh{\gl}_{\ell})- \fr(\wh{\eta}_{\ell}, \wh{\gl}_{\ell})\big| +\big| |\fs(\wh{\eta}_{\ell}, \wh{\gl}_{\ell})|-A_\ell(t)\big|
\\\le \frac{ML}{\sqrt{\gs \gt}}+ \ga M \big(c_0 N B_1+ \frac \pi 3 M B_2 c_0^3 N^3\big)
+\ga M \big( \frac {L}{\sqrt{c_0 \gt}}+ c_0 N B_1+ \frac \pi 3  B_2 c_0^3 N^3\big)
\\= 2 \ga M \big( \frac {L}{\sqrt{c_0 \gt}}+ c_0 N B_1+ \frac \pi 3 B_2 c_0^3 N^3\big),
\end{array}
\end{equation*}
where the second inequality follows from \eqref{R_with_ell}, \eqref{S_R_error}, and \eqref{abs_IA_est}.
Thus $\wb g$ satisfies \eqref{inequality_g2} with
$\eta=\gs(\wh \eta_\ell -\phi^\gp_\ell(t)), \gl=\gs^2(\wh \gl_\ell - \phi^{\gp \gp}_\ell(t)$ and $\varepsilon=\ga$. Hence
 \eqref{phi_est} holds by the property (b) of the admissible window function $g$.
\hfill $\square$

\bigskip
{\bf Proof of \eqref{comp_est}.}  By \eqref{S_R_error} and  \eqref{R_with_ell1}, we have
\begin{equation*}
\begin{array}{l}
\big| \fs(\wh{\eta}_\ell, \wh{\gl}_\ell)-x_\ell (t)\big |
\\\leq \big| \fs(\wh{\eta}_\ell, \wh{\gl}_\ell)-\fs(\phi'_\ell(t), \phi''_\ell(t))\big|
+\big|\fs(\phi'_\ell(t), \phi''_\ell(t))-\fr(\phi'_\ell(t), \phi''_\ell(t))\big|
+\big| \fr(\phi'_\ell(t), \phi''_\ell(t))-x_\ell (t)\big |
\\\le \Big| \frac 1\gs \int_\RR g(\frac \tau \gs) \big( e^{-i2\pi\wh \eta_\ell\tau-i\pi \wh \gl_\ell \tau^2}- e^{-i2\pi\wh \phi'_\ell(t)\tau-i\pi \wh \phi''_\ell(t) \tau^2}\big)d\tau \Big|
+\ga M \big(c_0 N B_1+ \frac \pi 3 B_2 c_0^3 N^3\big) +\frac {LM}{\sqrt{\gs \gt}}
 \\\le \frac 1\gs \int_\RR g(\frac \tau \gs) \big| 2\pi\wh \eta_\ell \tau +\pi \wh \gl_\ell\tau^2- 2\pi\phi'_\ell(t)\tau -\pi \phi''_\ell(t)\tau^2 \big| d\tau
+\ga M \Big( \frac L{\sqrt{c_0 \gt}}+ c_0 N B_1+ \frac \pi 3  B_2 c_0^3 N^3\Big)
\\\le 2\pi |\wh \eta_\ell -\phi'_\ell(t)| \frac 1\gs \int_\RR g(\frac \tau \gs) |\tau| d\tau
+ \pi |\wh \gl_\ell -\phi''_\ell(t)| \frac 1\gs \int_\RR g(\frac \tau \gs) \tau^2 d\tau
+\ga M \Big( \frac {L}{\sqrt{c_0 \gt}}+ c_0 N B_1+ \frac \pi 3  B_2 c_0^3 N^3\Big)
\\ \le 2\pi |\wh \eta_\ell -\phi'_\ell(t) \gs N  + \pi |\wh \gl_\ell -\phi''_\ell(t)| \gs^2 N^2
+\ga M\Big( \frac {L}{\sqrt{c_0 \gt}}+c_0 N B_1+ \frac \pi 3  B_2 c_0^3 N^3\Big)
\\= o(1)+\ga M\Big( \frac {L}{\sqrt{c_0 \gt}}+ c_0 N B_1+ \frac \pi 3  B_2 c_0^3 N^3\Big),
\end{array}
\end{equation*}
where the last equality follows from \eqref{phi_est}. This completes the proof of \eqref{comp_est}.
\hfill $\square$




\section*{Acknowledgments}
The authors thank the anonymous reviewers for their valuable comments. 


This work was partially supported by 
the ARO under Grant $\sharp$ W911NF2110218,  HKBU Grant $\sharp$ RC-FNRA-IG/18-19/SCI/01, 
the Simons Foundation under Grant $\sharp$ 353185, 
the National Natural Science Foundation of China under Grants $\sharp$ 62071349.



\begin{thebibliography}{1}



\bibitem{Huang98}  N.E. Huang, Z. Shen, S.R. Long, M.L. Wu, H.H. Shih, Q. Zheng, N.C. Yen, C.C. Tung,  and H.H. Liu, ``The empirical mode decomposition and Hilbert spectrum for nonlinear and nonstationary time series analysis,"  {\it Proc. Roy. Soc. London A}, vol. 454, no. 1971, pp. 903--995, Mar. 1998.




\bibitem{Flandrin04} P. Flandrin, G. Rilling, and P. Goncalves, ``Empirical mode decomposition as a filter bank," {\it IEEE Signal Proc. Letters}, vol. 11, no. 2,  pp. 112--114, Feb. 2004.


\bibitem{Rehman_Mandic11} N. Ur Rehman and D. P. Mandic, ``Filter bank property of multivariate empirical mode decomposition," {\it IEEE Trans. Signal Proc.}, vol. 59, no. 5, pp. 2421-2426, May 2011.

\bibitem{Rilling08} G. Rilling and P. Flandrin, ``One or two frequencies? The empirical mode decomposition answers," {\it IEEE Trans. Signal Proc.}, vol. 56, pp. 85--95, Jan. 2008.

\bibitem{Wu_Huang09} Z. Wu and N.E. Huang, ``Ensemble empirical mode decomposition: A noise-assisted data analysis method," {\it Adv. Adapt. Data Anal.}, vol. 1, no. 1, pp. 1--41,  Jan. 2009.

\bibitem{Xu06} Y.S.  Xu, B. Liu, J. Liu, and S. Riemenschneider, ``Two-dimensional empirical mode decomposition by finite elements," {\it Proc. Roy. Soc. London A}, vol. 462, no. 2074, pp. 3081--3096, Oct. 2006.

\bibitem{HM_Zhou09} L. Lin, Y. Wang, and H.M. Zhou, "Iterative filtering as an alternative algorithm for empirical mode decomposition,''  {\it Adv. Adapt. Data Anal.}, vol. 1, no. 4, pp. 543--560, Oct. 2009.

\bibitem{Rehman_Mandic10} N. Ur Rehman and D.P. Mandic, ``Empirical mode decomposition for trivariate signals," {\it IEEE Trans. Signal Proc.}, vol. 58, no. 3, pp. 1059-68, Mar. 2010.

\bibitem{Oberlin12a} T. Oberlin, S. Meignen, and V.  Perrier, ``An alternative formulation for the empirical mode decomposition," {\it IEEE Trans. Signal Proc.}, vol. 60, no. 5,  pp. 2236--2246, May 2012.


 \bibitem{Y_Wang12} Y. Wang, G.-W. Wei and S.Y. Yang , ``Iterative filtering decomposition based on local spectral evolution kernel,\rq\rq{} {\it J. Scientific Computing}, vol. 50, no. 3, pp. 629--664, Mar. 2012.

\bibitem{HM_Zhou16} A. Cicone, J.F. Liu, and H.M.  Zhou, ``Adaptive local iterative filtering for signal decomposition and instantaneous frequency analysis,'' {\it Appl. Comput. Harmon. Anal.}, vol. 41, no. 2, pp. 384--411, Sep. 2016.

\bibitem{Gilles13} J. Gilles, ``Empirical wavelet transform," {\it IEEE Trans. Signal Proc.}, vol. 61, no. 16, pp. 3999-4010, Aug. 2013.

\bibitem{Li_Jiang19}L. Li, H.Y. Cai, Q.T. Jiang and H.B. Ji, ``An empirical signal separation algorithm for multicomponent signals based on linear time-frequency analysis," {\it Mechanical Systems and Signal Proc.}, vol. 121, no. 4, pp. 791--809,  Apr. 2019.

\bibitem{van20} M.D. van der Walt, ``Empirical mode decomposition with shape-preserving spline interpolation,\rq\rq {\it Results in Appl. Math.}, vol. 5, 100086, Feb. 2020. 

\bibitem{Li_Ji09} L. Li and H.B.  Ji, ``Signal feature extraction based on improved EMD method," {\it Measurement}, vol. 42, pp. 796--803, Jun. 2009.

\bibitem{Leon_Cohen}  L. Cohen, {\it Time-frequency Analysis}, Prentice Hall, New Jersey, 1995.

\bibitem{HMB04} H. Hassanpour, M. Mesbah and B. Boashash, ``SVD-based TF feature extraction for newborn EEG seizure," {\it EURASIP Journal on Advances in Signal Proc.}, vol. 16, pp. 2544--2554, 2004.


\bibitem{Stank06} L. Stankovi$\acute {\rm c}$, T. Thayaparan, and M. Dakovi$\acute {\rm c}$, ``Signal decomposition by using the S-method with application to the analysis of HF radar signals in sea-clutter," {\it IEEE Trans. Signal Proc.}, vol. 54, no. 11, pp. 4332--4342, Nov. 2006.

\bibitem{Stank18} L. Stankovi$\acute {\rm c}$,  D. Mandi$\acute {\rm c}$, M. Dakovi$\acute {\rm c}$,
and M. Brajovi$\acute {\rm c}$, ``Time-frequency decomposition of multivariate multicomponent signals,\rq\rq{} {\it Signal Proc.}, vol. 142, pp. 468--479, Jan. 2018.

\bibitem{Daub_Lu_Wu11} I. Daubechies, J.F. Lu, and H.-T. Wu, ``Synchrosqueezed wavelet transforms:
An empirical mode decomposition-like tool,"  {\it Appl. Computat. Harmon. Anal.}, vol. 30, no. 2, pp. 243--261, Mar. 2011.

\bibitem{Daub_Maes96} I. Daubechies and S. Maes,  ``A nonlinear squeezing of the continuous wavelet transform based on auditory nerve models,"  in A. Aldroubi, M. Unser  Eds. {\it Wavelets in Medicine and Biology}, CRC Press, 1996, pp. 527-546.




\bibitem{Thakur_Wu11} G. Thakur and H.-T. Wu, ``Synchrosqueezing based recovery of instantaneous frequency from nonuniform samples,\rq\rq{} {\it SIAM J. Math. Anal.}, vol. 43, pp. 2078--2095, 2011.


\bibitem{Wu_thesis} H.-T. Wu, {\it Adaptive Analysis of Complex Data Sets},  Ph.D. dissertation, Princeton Univ., Princeton, NJ, 2012.

\bibitem{MOM14} T. Oberlin, S. Meignen, and V. Perrier, ``The Fourier-based synchrosqueezing transform,"  in {\it Proc. 39th Int. Conf. Acoust., Speech, Signal Process. (ICASSP)}, 2014, pp. 315--319.


\bibitem{Meig_Oberlin_M12} S. Meignen, T. Oberlin, and S. McLaughlin, ``A new algorithm for multicomponent signals analysis based on synchrosqueezing: With an application to signal sampling and denoising,"  {\it IEEE Trans. Signal Proc.}, vol. 60, no. 11, pp. 5787--5798,  Nov. 2012.

\bibitem{Flandrin_Wu_etal_review13} F. Auger, P. Flandrin, Y. Lin, S.McLaughlin, S. Meignen, T. Oberlin, and H.-T. Wu, ``Time-frequency reassignment and synchrosqueezing: An overview,"  {\it IEEE Signal Process. Mag.}, vol. 30, no. 6, pp. 32--41, Nov. 2013.

\bibitem{Li_Liang_fault12} C. Li and M.  Liang, ``Time frequency signal analysis for gearbox fault diagnosis using a generalized synchrosqueezing transform,\rq\rq{} {\it Mechanical Systems and Signal Proc.}, vol. 26, pp. 205--217, Jan. 2012.


\bibitem{WCSGTZ18} S.B. Wang, X.F. Chen, I.W. Selesnick, Y.J. Guo,  C.W. Tong and X.W. Zhang,
``Matching synchrosqueezing transform: A useful tool for characterizing signals with fast varying instantaneous frequency and application to machine fault diagnosis,\rq\rq{} {\it Mechanical Systems and Signal Proc.},
vol. 100, pp. 242--288, Feb. 2018.

\bibitem{Yang_Crystal_15} H.Z. Yang, J.F. Lu, and L.X. Ying,
``Crystal image analysis using 2D synchrosqueezed transforms," {\it Multiscale Modeling $\&$ Simulation}, vol. 13,  no. 4, pp. 1542--1572, 2015.

\bibitem{Yang_Crystal_18} J.F. Lu and H.Z. Yang,  ``Phase-space sketching for crystal image analysis based on synchrosqueezed transforms," {\it SIAM J. Imaging Sci.}, vol. 11, no. 3, pp.1954--1978, 2018.


\bibitem{HLY18} K. He, Q. Li, and Q. Yang,  ``Characteristic analysis of welding crack acoustic emission signals using synchrosqueezed wavelet transform," {\it J. Testing and Evaluation}, vol. 46, no. 6, pp. 2679--2691, 2018.


\bibitem{Wu_breathing14} H.-T. Wu, Y.-H. Chan, Y.-T.  Lin, and Y.-H. Yeh, ``Using synchrosqueezing transform to discover breathing dynamics from ECG signals,\rq\rq{}{\it Appl. Comput. Harmon. Anal.},  vol. 36, no. 2, pp. 354--459,  Mar. 2014.


\bibitem{Wu_sleep15}  H.-T. Wu, R. Talmon, and Y.L.  Lo, ``Assess sleep stage by modern signal processing techniques,\rq\rq{} {\it IEEE Trans. Biomedical Engineering}, vol. 62, no. 4, 1159--1168, 2015.


\bibitem{Wu_heartbeat17} C.  L. Herry, M. Frasch,  A. J. Seely1, and H. -T. Wu, ``Heart beat classification from single-lead ECG using the synchrosqueezing transform,\rq\rq{} {\it Physiological Measurement}, vol. 38, no. 2, Jan. 2017.


\bibitem{Wang_etal14} S. Wang, X. Chen, G. Cai, B. Chen, X. Li, and Z. He, ``Matching demodulation transform and synchrosqueezing in time-frequency analysis,"
{\it IEEE Trans. Signal Proc.}, vol. 62, no. 1, pp. 69--84, 2014.


\bibitem{Jiang_Suter17} Q.T. Jiang and B.W.  Suter,  ``Instantaneous frequency estimation based on synchrosqueezing wavelet transform,'' {\it  Signal Proc.}, vol. 138, no.9, pp.167--181, 2017.

 \bibitem{Li_Liang12} C. Li and M. Liang, ``A generalized synchrosqueezing transform for enhancing signal time-frequency representation,"  {\it Signal Proc.}, vol. 92, no. 9, pp. 2264--2274, 2012.

\bibitem{MOM15} T. Oberlin, S. Meignen, and V. Perrier, ``Second-order synchrosqueezing transform or invertible reassignment? towards ideal time-frequency representations,"  {\it  IEEE Trans. Signal Proc.},
vol. 63, no. 5, pp.1335--1344, Mar. 2015.


\bibitem{OM17} T. Oberlin and S. Meignen, ``The 2nd-order wavelet synchrosqueezing transform,\rq\rq{}  in {\it 2017 IEEE International Conference on Acoustics, Speech and Signal Processing (ICASSP)}, March 2017, New Orleans, LA, USA.

\bibitem{Pham17} D.H. Pham and S. Meignen, ``High-order synchrosqueezing transform for multicomponent signals analysis- With an application to gravitational-wave signal,"  {\it  IEEE Trans. Signal Proc.},
vol. 65, no. 12, pp.3168--3178, Jun. 2017.


\bibitem{Lin_Cubic19}  L. Li, Z.H. Wang, H.Y. Cai, Q.T. Jiang, and H.B. Ji,
``Time-varying parameter-based synchrosqueezing wavelet transform with the approximation of cubic phase functions,\rq\rq{} in {\it 2018 14th IEEE Int'l Conference on Signal Proc. (ICSP)}, pp. 844--848, Aug. 2018. 


\bibitem{LCHJJ18}  L. Li, H.Y. Cai, H.X. Han, Q.T. Jiang and H.B. Ji, ``Adaptive short-time Fourier transform and synchrosqueezing transform for nonstationary signal separation,\rq\rq{} {\it Signal Proc.}, vol. 166, 2020, 107231. 

\bibitem{LCJ18}  L. Li, H.Y. Cai, and Q.T. Jiang, ``Adaptive synchrosqueezing transform with a time-varying parameter for nonstationary signal separation,\rq\rq{} {\it Appl. Comput. Harmon. Anal.}, vol. 49,  1075--1106, 2020. 

\bibitem{LJL20}  J. Lu, Q.T. Jiang and L. Li, ``Analysis of adaptive synchrosqueezing transform with a time-varying parameter,"  {\it Advances in Comput. Math.}, vol. 46, 2020,  Article number: 72.

\bibitem{CJLS21} H.Y. Cai, Q.T. Jiang, L. Li, and B.W. Suter, ``Analysis of adaptive short-time Fourier transform-based synchrosqueezing transform,\rq\rq{}  {\it Analysis and Applications}, vol. 19, no. 1, pp. 71--105, 2021. 

\bibitem{Wu17} Y.-L. Sheu, L.-Y. Hsu, P.-T. Chou, and H.-T. Wu, ``Entropy-based time-varying window width selection for nonlinear-type TF analysis,\rq\rq{} {\it Int'l J. Data Sci. Anal.}, vol. 3,  pp. 231--245, 2017.

\bibitem{Saito17} A. Berrian and N. Saito, ``Adaptive synchrosqueezing based on a quilted short-time Fourier transform,\rq\rq{}
arXiv:1707.03138v5, Sep. 2017.


\bibitem{Chui_Mhaskar15}  C.K. Chui and H.N. Mhaskar,  ``Signal decomposition and analysis via extraction of frequencies," {\it Appl. Comput. Harmon. Anal.}, vol. 40, no. 1, pp. 97--136, 2016.



\bibitem{Chui_Han20}  C.K. Chui and N.N. Han,  ``Wavelet thresholding for recovery of active sub-signals of a composite signal from its discrete samples,'' {\it Appl. Comput. Harmon. Anal.}, vol. 52, pp. 1--24, 2021.  


\bibitem{CJLL20_adpCWT} C.K. Chui, Q.T. Jiang, L. Lin, and J. Lu, ``Signal separation based on adaptive continuous wavelet-like transform and analysis,\rq\rq{}  {\it Appl. Comput. Harmon. Anal.}, vol. 53,  pp. 151--179, 2021. 

\bibitem{LCJ20} L. Li, C.K. Chui, and  Q.T. Jiang, ``Direct signals separation via extraction of  local frequencies with adaptive time-varying parameters,\rq\rq{} preprint,  2020. 
arXiv2010.01866v1

\bibitem{CJLL20_adpSTFT} C.K. Chui, Q.T. Jiang, L. Li, and J. Lu, ``Analysis of  an  adaptive short-time Fourier transform-based multicomponent signal separation method derived from  linear chirp local approximation\rq\rq{}, 
{\it Journal of Comput.  \& Appl. Math.}, vol. 382, 2021, 113074. 



\bibitem{radar_basic_2016} V.C. Chen, F. Li, S.-S. Ho, and H. Wechsler, ``Micro-Doppler effect in radar : phenomenon, model, and simulation study,\rq\rq{} {\it IEEE Trans. Aerosp. Electron. Syst.}, vol. 42, no. 1, pp. 2--21, 2006.


\bibitem{Stankovic_compressive_sense_2013} L. Stankovi$\acute {\rm c}$, I.  Orovi$\acute {\rm c}$,  S. Stankovi$\acute {\rm c}$, and M. Amin,  ``Compressive sensing based separation of nonstationary and stationary signals overlapping in time-frequency,\rq\rq{} {\it  IEEE Trans. Signal Proc.}, vol 61, no. 18, pp. 4562--4572, Sep. 2013.

\bibitem{add1} P. Li, and Q.H. Zhang, ``IF estimation of overlapped multicomponent signals based on viterbi algorithm," {\it{Circuits, Syst., Signal Process.}}, vol 39, no. 6, pp. 3105-3124, 2020.

\bibitem{add2} X.X. Zhu, H.Z. Yang, Z.S. Zhang, J.H. Guo, and N.H. Liu, ``Frequency-chirprate reassignment," {\it{Digit. Signal Process.}}, vol 104, 102783, Jun. 2020.

\bibitem{add3} V. Bruni, M. Tartaglione, and D. Vitulano,``Radon spectrogram-based approach for automatic IFs separation," {\it{EURASIP J. Adv. Signal Process.}}, pp. 1-21, Mar. 2020.


\bibitem{add5} V. Bruni,  M. Tartaglione, and D. Vitulano,``A pde-based analysis of the spectrogram image for instantaneous frequency estimation," {\it Mathematics}, vol. 9, no. 3, pp. 3105-3124, Jan. 2021.


\bibitem{add6} I.  Djurovi$\acute {\rm c}$,``QML-RANSAC instantaneous frequency estimator for overlapping multicomponent signals in the time-frequency plane," {\it IEEE Signal Proc. Letters}, vol. 25, no. 3 447-451, Mar. 2018.



\bibitem{add4} N.A. Khan and S. Ali,``A robust and efficient instantaneous frequency estimator of multi-component signals with intersecting time-frequency signatures," {\it Signal Proc.}, 
vol. 177, 107728, Dec. 2020. 


\bibitem{add7} Y. Yang, X. J. Dong, Z. K. Peng, W. M. Zhang, and G. Meng,``Component extraction for non-stationary multi-component signal using parameterized de-chirping and band-pass filter,"{\it IEEE Signal Proc. Letters},  vol. 22, no. 9, 1373-1377, Sep. 2014. 







\bibitem{IEEE_Sensors_2017} S. Chen, X. Dong, G. Xing, Z. Peng, W. Zhang, and G. Meng, ``Separation of overlapped non-stationary signals by ridge path regrouping and intrinsic chirp component decomposition,\rq{}\rq{}
{\it IEEE Sensors Journals}, vol. 17, no. 18, pp. 5994--6005, Sep. 2017.



\bibitem{TF_distribution_2004} B. Barkat and K. Abed-Meraim, ``Algorithms for blind components separation and extraction from the time-frequency distribution of their mixture,\rq\rq{} {\it EURASIP J. Appl. Signal Proc.}, vol. 13, pp. 2025--2033, 2004.



\bibitem{Over1} S. Chen, Z. Peng, Y.Yang, X. Dong, and W. Zhang, ``Intrinsic chirp component decomposition by using Fourier Series representation," {\it Signal Proc.}, vol. 137, pp. 319--327, Feb. 2017.

\bibitem{Over2} O. Chen, X. Lang, L. Xie, and H. Su, ``Multivariate intrinsic chirp mode decomposition," {\it Signal Proc.}, vol. 183, Jan. 2021.

\bibitem{Over3} S. Chen, X. Dong, Z. Peng, W. Zhang, and G. Meng, ``Nonlinear Chirp Mode Decomposition: A Variational Method," {\it IEEE Trans. Signal Proc.}, vol. 65, no. 22, pp. 6024--6037, Nov. 2017.

\bibitem{Over4} P. Zhou, Y. Yang, S. Chen, Z. Peng, K. Noman, and W.Zhang, ``Parameterized model based blind intrinsic chirp source separation," {\it{Digit. Signal Process.}}, vol 83, pp. 73-82, Aug. 2018.

\bibitem{Over5} Z. Liu, Q. He, S. Chen, X. Dong, Z. Peng, and W. Zhang, ``Frequency-domain intrinsic component decomposition for multimodal signals with nonlinear group delays," {\it Signal Proc.}, vol. 154, pp. 57-63. 2019.

\bibitem{Over6} S. Q. Chen, Y. Yang, Z.Peng, X. J. Dong, W. Zhang, and G. Meng,
``Adaptive chirp mode pursuit: Algorithm and applications,\rq\rq{} {\it Mechanical Systems and Signal Proc.},
vol. 116, pp. 566--584, Feb. 2019.

\bibitem{Over7}  L. Stankovi$\acute {\rm c}$, M. Dakovi$\acute {\rm c}$, T.Thayaparan, and V. Bugarin, ``Inverse radon transform-based micro-doppler analysis from a reduced set of observations," {\it IEEE Trans. Aerosp. Electron. Syst.}, vol. 51, no. 2, pp. 1155--1169, Feb. 2015.

\bibitem{Mann95} S. Mann and S. Haykin, ``The chirplet transform: Physical considerations," {\it IEEE Trans. Signal Proc.}, vol. 43, no. 11, pp. 2745--2761, Nov. 1995.

\bibitem{Baraniuk96} R. G. Baraniuk, and D. L. Jones, ``Wigner-based formulation of the chirplet transform," {\it IEEE Trans. Signal Proc.}, vol. 44, no. 12, pp. 3129--3135, Dec. 1996.



\bibitem{Zheng_Liu17} J. Zheng J, H. Liu, and Q. Liu, ``Parameterized centroid frequency-chirp rate distribution for LFM signal analysis and mechanisms of constant delay introduction," {\it IEEE Trans. Signal Proc.}, vol. 65, no. 24, pp. 6435--6447, Dec. 2017.

\bibitem{Czarnecki18} K. Czarnecki, D. Fourer, F. Auger, and M. Rojewski, ``A fast time-frequency multi-window analysis using a tuning directional kernel," {\it Signal Proc.}, vol. 147, pp. 110--119, Jun. 2018.


\bibitem{Katkovnik95} V. Katkovnik, ``A new form of the Fourier transform for time-frequency estimation,\rq{}\rq{} {\it Signal Proc.}, vol. 47,  no. 2,  pp. 187--200, 1995.


\bibitem{Bi_Stankovic11} X. Li, G. Bi, S. Stankovi$\acute {\rm c}$, and A.M. Zoubir, ``Local polynomial Fourier transform: A review on recent developments and applications," {\it  Signal Proc.}, vol. 91, no.6, pp.1370--1393, 2011.


\bibitem{Stankovic13} L. Stankovi$\acute {\rm c}$, M. Dakovi$\acute {\rm c}$,  and T. Thayaparan, {\it Time-Frequency Signal Analysis with Applications}, Artech House, Boston,  2013.



\bibitem{Beylkin_87} G. Beylkin, ``Discrete radon transform," {\it IEEE Trans. Acoust. Speech Signal Process.}, vol. 35, no. 2, pp. 162-172, Feb. 1987.


\bibitem{Averbuch03} A. Averbuch and Y. Shkolnisky, ``3D Fourier based discrete Radon transform," {\it Appl. Comput. Harmon. Anal.}, vol. 15, pp. 33--69, July 2003.

\bibitem{Chui_98} M. Wang, A.K. Chan, and C.K. Chui, ``Linear frequency-modulated signal detection using Radon-ambiguity transform," {\it IEEE Trans. Signal Proc.}, vol. 46, no. 3, pp. 571-586, Mar. 1998.

\end{thebibliography}
\end{document}